\documentclass[11pt]{article}
\usepackage[top=1in,bottom=1in,left=1in,right=1in]{geometry}
\usepackage{amssymb,amsmath,enumerate,float}
\usepackage[colorlinks,linkcolor=blue,anchorcolor=blue,citecolor=blue]{hyperref}
\usepackage[dvips]{graphicx}
\usepackage[dvips]{graphics}
\usepackage[dvips]{epsfig}
\usepackage{theorem}

\newtheorem{theorem}{Theorem}[section]
\newtheorem{corollary}[theorem]{Corollary}
\newtheorem{proposition}[theorem]{Proposition}
\theorembodyfont{\upshape}
\newtheorem{definition}[theorem]{Definition}

\theorembodyfont{\upshape}

\newtheorem{remark}[theorem]{Remark}

\newcommand{\proof}{{\noindent\it Proof.}\quad}
\newcommand{\finproof}{{\hfill$\Box$\\}}
\newcommand{\red}{\color{red}}
\newcommand{\blue}{\color{blue}}
\newcommand{\ddr}{\mathrm{d}}

\def\bproposition{\begin{proposition}}\def\eproposition{\end{proposition}}
\def\beqlb{\begin{eqnarray}}\def\eeqlb{\end{eqnarray}}
\def\beqnn{\begin{eqnarray*}}\def\eeqnn{\end{eqnarray*}}
\def\ar{\!\!\!&}

 \def\ar{\!\!\!&}

 \def\mbb{\mathbb}

 \def\proof{\noindent{\it Proof.~~}}\def\qed{\hfill$\Box$\medskip}

\begin{document}

\title{Alpha-CIR Model with Branching Processes in Sovereign Interest Rate Modelling}
\author{Ying Jiao\thanks{Universit\'e Claude Bernard-Lyon 1, Institut de Science Financier et d'Assurances. Email: ying.jiao@univ-lyon1.fr.}, \,\,Chunhua Ma\thanks{Nankai University, School of Mathematical Sciences. Eamil: mach@nankai.edu.cn.}, \,\,Simone Scotti\thanks{Universit\'e Paris Diderot-Paris 7, Laboratoire de Probabilit\'es et Mod\`eles Al\'eatoires. Email: scotti@math.univ-paris-diderot.fr.}}
\maketitle

\abstract{We introduce a class of interest rate models, called the $\alpha$-CIR model, which gives a natural extension of the standard CIR model by adopting the $\alpha$-stable L\'evy process and preserving the branching property.  This model allows to describe in a unified and parsimonious way several recent observations on the  sovereign bond market such as the persistency of low interest rate together with the presence of large jumps at local extent. We emphasize on a general integral representation of the model by using random fields, with which we establish the link to the CBI processes and the affine models. Finally we analyze the jump behaviors and in particular the large jumps, and we provide numerical illustrations.}

\section{Introduction}

On the current European sovereign bond market, there exists a number of well-established and seemingly puzzling facts. On the one hand, the interest rate has reached a historically low level in the Euro countries. However, on the other hand, the sovereign  bond can have very large variations  when uncertainty about unpredictable political or economical events increases, such as in the Greek case. The aim of this paper is to present a new model of interest rate, called the $\alpha$-CIR model, where we give a natural extension of the well-known Cox-Ingersoll-Ross (CIR, see \cite{CIR85}) model by using the $\alpha$-stable branching processes, in order to describe these recent observations on the bond market. In particular, the set of questions investigated includes  the clustering behavior of the variance of sovereign interest rates, and also the  persistency of low interest rates  together with the significant fluctuations  at a local extent.

In the literature, large fluctuations in financial data 
motivate naturally the introduction of jumps in the interest rate dynamics, such as in Eberlein and Raible \cite{EberleinRaible1999}, Filipovi\'c, Tappe and Teichmann \cite{FilipovicTappeTeichmann2010}.
Nevertheless, the jump presence conflicts in general with the trend of low rates, at least as long as the jump intensity is assumed as the paradigm. One way to reconcile large fluctuations with low rate persistency is to use a regime change framework 
but this may increase the dimension of the stochastic processes in order to preserve the Markov property. Recently, the Hawkes processes or the self-exciting point processes (see Hawkes \cite{Hawkes}), have been used to overcome this difficulty since they exhibit properties which give a suitable interpretation of such modelling. A Hawkes process can be seen as a population process whose reproduction rate is proportional to the population itself, 
that is, the so-called self-exciting property. Moreover, the external arrival of migrant can be modeled by  a second point process. 
A large and growing literature is devoted to the financial application of Hawkes processes, in particular, to the interest rate and credit intensity modelling, such as in A\"it-Sahalia, Cacho-Diaz and Laeven \cite{AJ15}, Errais, Giesecke and Goldberg \cite{EGG2010}, Dassios and Zhao \cite{DZ2011} and Rambaldi, Pennesi and Lillo \cite{RPL14}. 
In the above mentioned papers, as appear naturally in Hawkes framework, the driving process is at least two-dimensional since both the
dynamics of jump process and its intensity are taken into account. 

In this paper, 
we introduce a short interest rate model by using the $\alpha$-stable L\'evy processes,  which provides a relatively simple jump diffusion model to respond to these modelling challenges in an endogenous way. We exploit an integral representation of the $\alpha$-CIR model to highlight the branching property. First of all,  branching processes arise as the limit of Hawkes processes and exhibit, by their inherent nature, the clustering or the self-exciting property implying that the jump frequency increases  with the value of the process itself. By consequence, branching processes, thanks to the infinite divisibility of their law with respect to the starting point, prove to be a prolific subject in probability having interesting applications in finance, see for instance Duffie, Filipovi\'c and Schachermayer \cite{DFS2003}. In the modelling of interest rate, branching processes have already been considered  by the pioneering paper of 
Filipovi\'c \cite{F01} where the relationship between the exponential affine structure of bond prices and the branching property has been highlighted. Moreover, our model is a natural generalization of the CIR model  which appears to be the simplest and most popular continuous-time branching process. Although CIR model has closed-form solutions for bond prices which turns out to be a main feature in view of model calibration, it does not include jumps.  In addition, empirical studies underline that the behavior the bond prices cannot be fully explained by CIR model which systematically overestimates short interest rates (see for instance Brown and Dybvig \cite{BD86} and Gibbons and Ramaswamy \cite{GR93}). In our framework, CIR process is the departing model which is the only example of branching process with continuous path and the inclusion of the $\alpha$-stable branching processes allows to better describe the low  interest rate behavior. 

The main contribution of the present paper is to combine the properties of Hawkes and CIR processes in order to   
define the $\alpha$-CIR model, which provides a larger class of jump-diffusion models having the branching property and preserving the explicit expression for bond prices. 
The $\alpha$-CIR model consists of, besides the  Brownian motion, a spectrally positive $\alpha$-stable L\'evy-process. The parameter $\alpha\in(1,2]$ characterizes the tail fatness and the jump behavior. When $\alpha$ equals $2$, the $\alpha$-stable process reduces to a Brownian motion and we recover the classical CIR model. In the general case when $\alpha\in(1,2)$, there may appear infinitely many  jumps in a finite time interval, which represent the fluctuations related to sovereign risks.  
In order to keep the branching property, the square root in the volatility term have to be replaced by the $\alpha$-root of the process. 
 Despite its simplicity and the reduced number of extra parameters compared to the usual CIR, 
the model we develop show several advantages. 
{First}, the $\alpha$-CIR model exhibits positive jumps and, in particular, by combining heavy-tailed jump size distribution with infinite activity, can describe in a 
unified way both the large fluctuations observed in financial market and the usual small oscillations.
{Second}, in a branching process framework can also be shown that a hierarchical structure for interest rate naturally arises, since it can split the interest rate
into different components, which can eventually be interpreted as spreads, each one following the same dynamics, in the similar way in which a global ideal population can be split into subgroups evolving according the same dynamics.  {Third}, by the link established between the $\alpha$-CIR model and the 
 continuous state branching process with immigration (CBI process), we deduce, using the joint Laplace transform of the CBI process, the bond prices in an explicit way. In particular, we show the interesting result that the bond price increases with the tail fatness (that is, decreases with the parameter $\alpha$), which better responds to the persistency of low interest rate behavior of the current sovereign rate. {Fourth}, we can make a thorough analysis of the jump behavior, in particular, for the large jumps which signify in the interest rate dynamics a sudden increasing sovereign risk and imply, for example in the Greek case, a potentially high probability of default.  We are  particularly interested in the first time that such a large jump occurs and explore the impact of the tail index $\alpha$.

We begin our analysis by presenting an equivalence between two different formulations of the dynamics for the $\alpha$-CIR model. {From the theoretical point of view, this property has been thoroughly exploited by Li \cite{Li11} and Li and Ma \cite{LiMa08}.  In the spirit of the above papers,} we shall prove that the usual version of CIR dynamics and its $\alpha$-CIR extension admit an alternative representation which is of integral form by using random fields but  the dimension of the L\'evy basis  has to be increased, for instance 
the Brownian motion is replaced by a two-dimensional white noise.
In the financial literature on interest rates, this approach has already been performed, see for example  Kennedy \cite{Ken}, Albeverio, Lytvynov and Mahnig \cite{ALM}
where random field modelling is introduced to describe the interest rate term structure.
The integral representation allows to better identify the process features like the branching property, and is more convenient for proving related properties. 
Moreover, it needs to be remarked that the integral representation enlightens the  relation between the Ornstein-Uhlenbeck and CIR dynamics,
and then between the L\'evy-Ornstein-Uhlenbeck (LOU) and $\alpha$-CIR models.
As a matter of fact, in an analogous way that the most natural extension of Ornstein-Uhlenbeck dynamics including branching property is the CIR, 
the $\alpha$-CIR results from the combination of the LOU model with $\alpha$-stable driver and the branching property. 

The main, and perhaps most interesting, forecast of the present model is that the bond prices decrease with the parameter $\alpha$, which in turn 
is inversely related to the tail fatness. 
The explanation of this apparently paradoxical result is based on the features of the $\alpha$-CIR model highlighted previously. 
The use of fat-tail distributed positive jumps will imply a large negative compensator, then between two jumps the mean reversion term is magnified whenever $\alpha$ decreases. This phenomenon is the consequence of compensation and the final result is to make both tails heavier.
In general,  the standard behavior of bond prices increases  with respect to the fatness of tails, 
such as the case in ordinary LOU dynamics (see e.g. Barndorff-Nielsen and Shephard \cite{BS01}). 
However, for a given value of $\alpha$,  the branching property adds a new phenomenon in the present case with $\alpha$-CIR model:  the frequency of big jumps decreases when interest rates are low thanks to the self-exciting structure and this allows some ``freezing'' effect of low rates for relatively longer time period. In addition, the strong mean-reverting term resulting in the case of fat-tailed jump distribution will also increase the likelihood of occurrence of the persistency of low rates.  
   
 The paper is organized as follows.  Section  \ref{sec:Model} deals with the mathematical presentation of the $\alpha$-CIR model. 
Section \ref{section-CBI} is devoted to the characterization of our model as a CBI process and the properties derived from this link. 
 In Section \ref{sec:interest-rate}, we apply our model to term structure modeling and exhibit in particular the closed-form bond prices up to a numerical integration. 
Section \ref{section-jumps} deals with the analysis of jumps.  We enrich our results with some numerical illustrations in Section \ref{sec:numerics}.
 Finally, Section \ref{sec:conclusion} concludes the paper.

\section{Model framework}
\label{sec:Model}

This section introduces  the $\alpha$-CIR interest rate model and its basic properties. We start by defining 
two representations of our model and establish an explicit link  between the two classes, so that the properties of each class are directly transferred to the other one. 
Let us fix a probability space $(\Omega, \mathcal F, \mathbb P)$ equipped with a filtration $\mathbb F=(\mathcal F_t)_{t\geq 0}$ 
satisfying the usual conditions.  

\begin{definition}[Root representation]\label{def-SDE-root}
We consider the following diffusion for the short interest rate $r=(r_t,t\geq 0)$ with

\begin{equation}\label{lambda-root}
r_t = r_0 + \int_0^t a \left( b  - r_s  \right) ds + \sigma\int_0^t  \sqrt{r_s} dB_s
+\sigma_Z \int_0^t   {r_{s-}^{1/\alpha}}  dZ_s \end{equation}
where $B=(B_t,t\geq 0)$  is a Browinan motion and $Z=(Z_t,t\geq 0)$ is a spectrally positive $\alpha$-stable compensate L\'evy process with parameter $\alpha \in (1,2]$, which is independent of $B$ and  whose Laplace transform is given, for $q\geq 0$, by
 \beqnn
 \mathbb{E}\left[e^{-qZ_t}\right]=\exp\left\{-\frac{tq^\alpha}{\cos(\pi\alpha/2)}\right\}.
 \eeqnn
In other words,  $Z_t$ follows the $\alpha$-stable distribution with scale parameter $t^{1/\alpha}$,  skewness parameter $1$ and zero drift. , i.e., $Z_t\sim S_\alpha(t^{1/\alpha}, 1,0)$.
 \end{definition}

 We call  processes defined by \eqref{lambda-root} the $\alpha$-CIR processes of parameters $(a,b,\sigma,\sigma_Z,\alpha)$ and denote  by $\alpha$-$\mathrm{CIR}(a,b,\sigma,\sigma_Z,\alpha)$ the set of all such processes. The existence of a unique strong solution of the equation \eqref{lambda-root} follows from Fu and Li \cite[Theorem 5.3]{FL10}. 

It is easy to see that the CIR model belongs to the class in Definition \ref{def-SDE-root} by taking $\sigma_Z =0$.   Another case where we recover a CIR process is when $\alpha=2$. In this case, the process $Z$ becomes a standard  Brownian motion  scaled by the coefficient $\sqrt{2}$ which is independent of $B$. Hence an $\alpha$-CIR process satisfying  \eqref{lambda-root} is actually a CIR process of the form
\[r_t=r_0+\int_0^ta(b-r_s)\,ds+\sqrt{\sigma^2+ 2\sigma_Z^2}\int_0^t\sqrt{r_s}d\widetilde{B}_s\]
where 
$\widetilde{B}=
(\sigma B+\sigma_Z Z)/\sqrt{\sigma^2+2\sigma_Z^2}$ is a standard Brownian motion. In other words, an $\alpha$-CIR process with parameter $\alpha=2$ is a CIR process.


 The departure of the process ${Z}$ from Brownian motion is controlled by the tail index $\alpha$. When $\alpha<2$, ${Z}$ is a pure jump process with heavy tails. For any fixed $t$, the distribution of $Z_t$ is a stable distribution
 and the tail of the distribution decays like a power function with index $-\alpha$. This means that a stable random variable exhibits more
 variability than a Gaussian one and it is more likely to take values far away from the median. Compared to a standard Poisson or compound Poisson process, this pure jump process has an infinite number of (small) jumps over any time interval, allowing it to capture the extreme activity. 
In the meantime, the $\alpha$-stable processes share similar properties with the Brownian motion such as  self-similarity or stability property, which means that the distribution of the $\alpha$-stable process over any horizon has the same shape upon scaling. From the statistical point of view, the process given by (\ref{lambda-root}) is characterized by two more parameter with respect to CIR model, i.e. $\alpha$ and $\sigma_Z$.

We then introduce a more general form of the $\alpha$-CIR model by using random fields. 
\begin{definition}[Integral representation]\label{def-SDE-integral}

We also consider the following equation in the integral form \begin{equation}\label{lambda-integral}
r_t = r_0 + \int_0^t a \left( b  - r_s  \right) ds + \sigma \int_0^t \int_0^{r_s} W(ds,du)
+ \sigma_Z \int_0^t \int_0^{r_{s-} } \int_{\mathbb{R}^+} \zeta \widetilde{N}
(ds,du, d\zeta),\quad t\geq 0
\end{equation}
where $W(ds,du)$ is a white noise on $\mathbb{R}_+^2$  with intensity $dsdu$, $\widetilde N(ds,du,d\zeta)$ is an independent compensated Poisson 
random measure on $\mathbb{R}_+^3$ with intensity $dsdu\mu(d\zeta)$ with $\mu(d\zeta)$ being a L\'evy measure on $\mathbb{R}_+$ and satisfying  
$\int_0^\infty (\zeta\wedge\zeta^2)\mu(d\zeta)<\infty $.  
\end{definition}

We call the process given by \eqref{lambda-integral} the $\alpha$-CIR type process with parameters $(a,b,\sigma,\sigma_Z,\mu)$. It follows from  of Dawson and Li \cite[Theorem~3.1]{DawsonLi} or Li and Ma \cite[Theorem~2.1]{LM13} that the equation \eqref{lambda-integral}  has a unique strong solution. 

We establish a first link to the $\alpha$-CIR model. Let the L\'evy measure $\mu$  be as  \beqlb\label{Levymeasure} 
\mu_\alpha(d\zeta)=-{1_{\{\zeta>0\}} d\zeta \over \cos(\pi\alpha/2)\Gamma(-\alpha)\zeta^{1+\alpha}}, \quad 1<\alpha<2,
 \eeqlb
then the solution of \eqref{lambda-integral} has the same probability law as that of the equation \eqref{lambda-root}. In an extended probability space, for any couple $(B, {Z})$ there exists a couple  $(W,\widetilde{N})$ such that the solution of the two equations 
(\ref{lambda-integral}) and (\ref{lambda-root}) are equal almost surely; see Propositions \ref{1-2} and \ref{2-1} below.

\begin{remark}\label{remark2}
We  explain the connection of the above integral representation to Hawkes processes. We begin by considering  an integral representation of the CIR model. Let  $W(ds,du)$ be a white noise on $\mathbb{R}_+^2$  with intensity $dsdu$. The CIR process $r$ (when $\sigma_Z=0$) is given in  the form 
$r_t = r_0 + \int_0^t a \left( b  - r_s  \right) ds + \sigma \int_0^t \int_0^{r_s} W(ds,du),$
or equivalently as  
\beqlb\label{new form}
r_t=r^{\ast}_t+\sigma\int_0^t\int_0^{r_s}e^{-a(t-s)}W(ds,du)
\eeqlb
where $r^{\ast}_t$ is a deterministic function given by $r^{\ast}_t=r_0e^{-at}+ab\int_0^te^{-a(t-s)}ds$.  The expression \eqref{new form} shows the self-exciting feature.

We then consider a simple Hawkes process with exponential kernel, which is defined as a point process $J$ with intensity $r$, where $r$ reads 
 \beqnn
 r_t = r^{\ast}_t+
 \int_0^t e^{-a(t-s)}dJ_s
\eeqnn
and $r^{\ast}$ is the background rate, i.e., the deterministic part of the process $J$.
When a jump arrives, the intensity $r$ increases, which also increases the probability of a next jump, that is the self-exciting property of Hawkes processes. 
In order to facilitate the comparison with our integral representation, we give a different characterization of the intensity $r$. 
Let $N$ be a Poisson process on $\mathbb{R}^2$ with characteristic measure $dsdu$, so  $J_t$ can be written as the form of $\int_0^t\int_0^{r_{s-}}N(ds,du)$ and $r_t$ as
 \beqlb\label{hawkes2}
 r_t = r^{\ast}_t+
 \int_0^t\int_0^{r_{s-}} e^{-a(t-s)}N(ds,du).
\eeqlb
In this form, the self-exciting feature can be observed as follows: the frequency of jumps grows with the process itself due
to the presence of the integral with respect to the variable $u$. Moreover, when $r^{\ast}$ takes certain particular form, $r$  is a branching process, also known as an affine process in finance (see \cite{DFS2003}). 
In this context, the self-exciting features is equivalent to the branching property. 

Let us now come back to the integral representation \eqref{lambda-integral} of $\alpha$-CIR model. We let $\sigma=0$ and $\mu(d\zeta)=\delta_1(dz)$,  then the (non-compensated) Poisson measure $N(ds,du,d\zeta)$ reduces to a random measure on $\mathbb{R}_+^2$ with intensity $dsdu$, denoted by $N(ds,du)$. Hence $r$  can be rewritten as
 \beqnn
 r_t = r_0 + abt- \int_0^t (a+\sigma_Z)r_s  ds +
\sigma_Z \int_0^t \int_0^{r_{s-} } {N}
(ds,du).
\eeqnn
We note that $r$ is the intensity of the Hawkes process $\int_0^t\int_0^{r_{s-}}N(ds,du)$ by using the equivalent form   
\beqlb\label{hawkes}
r_t=r_0e^{-(a+\sigma_Z)t}+\frac{ab}{a+\sigma_Z}\left(1-e^{(a+\sigma_Z)t}\right)+\int_0^t \int_0^{r_{s-}}e^{-(a+\sigma_Z)(t-s)}N(ds,du).
\eeqlb
As a consequence, $\alpha$-CIR type processes, and in particular the $\alpha$-CIR processes, can be seen as marked Hawkes processes influenced by a Brownian noise.

Furthermore consider a sequence of processes 
$\big\{r_t^{(n)}, t\geq 0\big\}$ defined by (\ref{hawkes}) with parameters $(a/n, nb, \sigma_Z)$. Note that as $n\rightarrow\infty$, we have
\beqnn
r^{(n)}_{nt}/n\overset{\mathcal{L}}{\longrightarrow}Y_t, \quad\mbox{in\ } D(\mathbb{R}_+),
\eeqnn
where $Y$ follows a {CIR model} given by 
$Y_t=\int_0^ta(b-Y_s)ds+\sigma_Z\int_0^t\int_0^{Y_s}W(ds,du) $
and $D(\mathbb{R}_+)$ denotes the c\`{a}dl\`{a}g processes space equipped with the 
Skorokhod topology. Therefore, a sequence of rescaled Hawkes processes converges  weakly to the CIR process, see Jaisson and Rosenbaum \cite{JR15} for more details, notably on the convergence of the nearly unstable Hawkes process with general kernel, after suitably rescaling, to a CIR process. 
\end{remark}

We now develop the equivalence between the root representation in Definition \ref{def-SDE-root} and the integral one in Definition \ref{def-SDE-integral} with the L\'evy measure $\mu_\alpha$. The following two propositions show both implications. { The main idea follows \cite[Theorem 9.32]{Li11}.}

\begin{proposition}\label{1-2}
Let $r$ be a solution to \eqref{lambda-integral} with $\mu=\mu_\alpha$ given by (\ref{Levymeasure}). On an extended probability space of $(\Omega,\mathcal F,\mathbb P)$, there exists a L\'evy process $(B,Z)$ valued in $\mathbb R^2$  where $B$ is a Brownian motion and $Z$ is a spectrally positive $\alpha$-stable compensated L\'evy process, such that $r$ is a solution to \eqref{lambda-root}.
\end{proposition}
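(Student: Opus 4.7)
The plan is to reconstruct the driving pair $(B,Z)$ from the white noise $W$ and the compensated Poisson measure $\widetilde{N}$ by using $r$ itself as a spatial cutoff and normalising by $r_s^{-1/2}$ and $r_{s-}^{-1/\alpha}$ respectively. Because these normalising factors blow up on the set where $r$ vanishes, I first enlarge $(\Omega,\mathcal F,\mathbb P)$ by an independent standard Brownian motion $B'$ and an independent spectrally positive $\alpha$-stable compensated L\'evy process $Z'$ (with the Laplace transform prescribed in Definition \ref{def-SDE-root}), which will supply the missing randomness on $\{r=0\}$.

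For the continuous component, set
\begin{equation*}
B_t \; := \; \int_0^t \int_0^\infty \mathbf{1}_{\{u\le r_s,\, r_s>0\}}\, r_s^{-1/2}\, W(ds,du) \;+\; \int_0^t \mathbf{1}_{\{r_s=0\}}\, dB'_s.
\end{equation*}
A direct computation of the predictable quadratic variation gives $\langle B\rangle_t = \int_0^t\mathbf{1}_{\{r_s>0\}}ds+\int_0^t\mathbf{1}_{\{r_s=0\}}ds=t$, so by L\'evy's characterization $B$ is a standard Brownian motion. Multiplication by $\sqrt{r_s}$ kills the $B'$ contribution and folds $\mathbf{1}_{\{u\le r_s\}}$ back into the $W$-integral, yielding $\sigma\int_0^t\sqrt{r_s}\,dB_s = \sigma\int_0^t\!\int_0^{r_s} W(ds,du)$, which matches the Gaussian term in \eqref{lambda-root}.

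For the jump component, define analogously
\begin{equation*}
Z_t \; := \; \int_0^t\!\int_0^\infty\!\int_{\mathbb R^+} \mathbf{1}_{\{u\le r_{s-},\, r_{s-}>0\}}\, r_{s-}^{-1/\alpha}\zeta\, \widetilde{N}(ds,du,d\zeta) \;+\; \int_0^t \mathbf{1}_{\{r_{s-}=0\}}\, dZ'_s.
\end{equation*}
The crux of the argument is the change of variable $\zeta = r_{s-}^{1/\alpha}\eta$ in the first summand: the $\alpha$-homogeneity of $\mu_\alpha$ in \eqref{Levymeasure} forces $r_{s-}\,\mu_\alpha(r_{s-}^{1/\alpha}d\eta)=\mu_\alpha(d\eta)$, so the predictable jump intensity of that summand is $\mathbf{1}_{\{r_{s-}>0\}}\mu_\alpha(d\eta)\,ds$, and the $Z'$ contribution supplies $\mathbf{1}_{\{r_{s-}=0\}}\mu_\alpha(d\eta)\,ds$. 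The total has deterministic predictable characteristics $(0,0,\mu_\alpha(d\eta)\,ds)$, which identifies $Z$ as the spectrally positive $\alpha$-stable compensated L\'evy process of Definition \ref{def-SDE-root}. A symmetric calculation shows that $\sigma_Z\int_0^t r_{s-}^{1/\alpha}\,dZ_s$ reproduces the $\widetilde{N}$-integral in \eqref{lambda-integral}, so $r$ satisfies \eqref{lambda-root}. Independence of $B$ and $Z$ follows from the independence of $W,\widetilde{N},B',Z'$ and the orthogonality between continuous and purely discontinuous martingales.

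The main obstacle is the change-of-variable computation for $Z$: it works only because $\mu_\alpha$ has the specific power-law density $\zeta^{-1-\alpha}$, which is exactly what makes every factor of $r_{s-}$ cancel and produces a state-independent L\'evy measure; a generic $\mu$ would leave residual state dependence and destroy stationarity. The careful treatment of the exceptional set $\{r=0\}$, which motivates the enlargement of the probability space by $(B',Z')$, is the second technical point. Overall the scheme mirrors \cite[Theorem 9.32]{Li11} adapted to the $\alpha$-stable branching mechanism.
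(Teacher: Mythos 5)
Your construction of $(B,Z)$ is exactly the one used in the paper: enlarge the space by independent drivers $\widehat B,\widehat Z$ to cover $\{r=0\}$, and define $B_t=\int_0^t r_s^{-1/2}1_{\{r_s>0\}}\int_0^{r_s}W(ds,du)+\int_0^t 1_{\{r_s=0\}}d\widehat B_s$ and the analogous expression for $Z$; your identification of the law (L\'evy's characterization for $B$, and the deterministic jump compensator obtained from the scaling $r_{s-}\mu_\alpha(r_{s-}^{1/\alpha}d\eta)=\mu_\alpha(d\eta)$ for $Z$) is just a repackaging of the paper's computation of the joint conditional characteristic function via It\^o's formula. The proof is correct and essentially the same as the paper's.
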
 
\proof 
We extend the probability space to include a standard Brownian motion $\widehat{B}$ and a spectrally positive $\alpha$-stable compensated L\'evy process $\widehat{Z}$ with L\'evy measure $\mu_\alpha$ as in \eqref{Levymeasure}, such that $\widehat{B}$, $\widehat{Z}$, $W$ and $\widetilde N$ are mutually independent. We then construct the processes $B$ and $Z$ as 
\[B_t=\int_0^t r_s^{-1/2}1_{\{r_s>0\}}\int_0^{r_s}W(ds,du)+\int_0^t1_{\{r_s=0\}}d\widehat{B}_s,\quad t\geq 0\]
and
\[Z_t=\int_0^t r_{s-}^{-1/\alpha}1_{\{r_{s-}>0\}}\int_0^{\infty}\int_0^{r_{s-}}\zeta \widetilde N(ds,du,d\zeta)+\int_0^t 1_{\{r_{s-}=0\}}d\widehat{Z}_s.\] We let $\mathbb F=(\mathcal F_t,\,t\geq 0)$ be the filtration generated by these processes.
Fix $\theta$, ${\theta'}\in\mathbb{R}$. Applying It\^{o}'s formula to the two-dimensional martingale $(B,Z)$ we have for $T>t\geq 0$,
\beqnn
\ar \ar e^{i(\theta B_T+\theta' Z_T)}-e^{i(\theta B_t+\theta' Z_t)}\\
\ar= \ar {(M_T-M_t)}-\frac{\theta^2}{2}\int_t^T e^{i(\theta B_s+\theta' Z_s)}ds\\
\ar\ar+
\frac{1}{\cos(\pi\alpha/2)\Gamma(-\alpha)}\int_t^T e^{i(\theta B_s+
\theta' Z_{s-})}1_{\{r_{s-}=0\}}
\int_0^\infty (e^{i\theta'\zeta}-1-i\theta'\zeta){d\zeta\over\zeta^{1+\alpha}}ds\\
\ar\ar
+
\frac{1}{\cos(\pi\alpha/2)\Gamma(-\alpha)}
\int_t^Te^{i(\theta B_s+\theta' Z_{s-})}r_{s-}1_{\{r_{s-}>0\}}\int_0^\infty (e^{i\theta' r^{-1/\alpha}_{s-}}-1-i\theta' r^{-1/\alpha}_{s-})
{d\zeta\over\zeta^{1+\alpha}}ds\\
\ar=\ar (M_T-M_t)+\left[\frac{1}{\cos(\pi\alpha/2)\Gamma(-\alpha)}\int_0^\infty (e^{i\theta'\zeta}-1-i\theta'\zeta){d\zeta\over\zeta^{1+\alpha}}-\frac{\theta^2}{2}\right]
\int_t^T e^{i(\theta B_s+\theta' Z_s)}ds
\eeqnn
where $M$ is a martingale.
Then multiplying both sides of the above equality by $e^{-i(\theta B_t+\theta' Z_t)}$ and taking conditional expectation, we have that $h_t(T):=\mathbb{E}[  e^{i(\theta (B_T-B_t)+\theta' (Z_{T}-Z_t))}|\mathcal{F}_t]$ satisfies the integral equation
\beqnn
h_t(T)=1+\left[\frac{(\theta')^\alpha}{\cos(\pi\alpha/2)}e^{-i\pi\alpha/2}-\frac{\theta^2}{2}\right]
\int_t^T h_t(s)ds, \quad\mbox{a.s.}
\eeqnn
Solving the above  equation we obtain 
\beqnn
\mathbb{E}\left[\left. e^{i(\theta (B_t-B_l)+\theta'(Z_{t}-Z_l))}\right|\mathcal{F}_l\right]=
\exp\left\{(t-l)\Big(\frac{(\theta')^\alpha}{\cos(\pi\alpha/2)}e^{-i\pi\alpha/2}-\frac{\theta^2}{2}\Big)\right\},
\eeqnn
which implies that $B$ is a standard Brownian motion and $Z$ is a spectrally positive $\alpha$-stable compensated L\'evy process independent of $B$. Moreover, by construction $r$ is a solution to  \eqref{lambda-root}.\qed



\begin{proposition}\label{2-1}
Let $r$ be a solution to  \eqref{lambda-root}. On an expanded probability space of $(\Omega,\mathcal F,\mathbb P)$, there exist a  white noise $W$ on $\mathbb R_+^2$ and a compensated Poisson random measure $\widetilde N$ on $\mathbb R_+^3$ with the L\'evy measure $\mu_\alpha$ given in \eqref{Levymeasure}, which are independent, and such that $r$ verifies \eqref{lambda-integral}.
\end{proposition}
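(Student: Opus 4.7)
The plan is to reverse the construction of Proposition \ref{1-2}: given $(B,Z)$ and the solution $r$, I build $(W,\widetilde N)$ by spreading $B$ and the jump measure of $Z$ into the auxiliary $u$-axis, padding the complementary region $\{u>r_{s-}\}$ with fresh independent noises that have no effect on the dynamics of $r$. First I enlarge $(\Omega,\mathcal F,\mathbb P)$ to carry an independent white noise $\widehat W$ on $\mathbb R_+^2$ with intensity $ds\,du$, an independent compensated Poisson random measure $\widehat N$ on $\mathbb R_+^3$ with intensity $ds\,du\,\mu_\alpha(d\zeta)$, and an auxiliary reservoir of i.i.d.\ $[0,1]$-uniforms, all jointly independent and independent of $(B,Z,r)$. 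For predictable $\phi(s,u)$ I then declare
\[
\int_0^t\!\!\int_0^\infty\!\phi(s,u)\,W(ds,du):=\int_0^t\frac{1_{\{r_s>0\}}}{\sqrt{r_s}}\Big(\int_0^{r_s}\!\phi(s,v)\,dv\Big)dB_s+\int_0^t\!\!\int_0^\infty\!\tilde\phi(s,u)\,\widehat W(ds,du),
\]
where $\tilde\phi(s,u):=\phi(s,u)-\tfrac{1_{\{u\le r_s,\,r_s>0\}}}{r_s}\int_0^{r_s}\phi(s,v)\,dv$ is the $L^2(du)$-orthogonal complement of the constant piece of $\phi$ on $[0,r_s]$.

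For the jump part, let $N_Z$ denote the jump measure of $Z$, a PRM on $\mathbb R_+^2$ with intensity $ds\,\mu_\alpha(d\zeta)$. Using the auxiliary uniforms I attach to each atom $(s,\zeta)$ of $N_Z$ a predictable mark $U_s$ uniform on $[0,r_{s-}]$, and set
\[
N(ds,du,d\zeta):=\sum_{s:\,\Delta Z_s\neq 0}\delta_{(s,\,U_s,\,r_{s-}^{1/\alpha}\Delta Z_s)}+1_{\{u>r_{s-}\}}\widehat N(ds,du,d\zeta),\quad \widetilde N:=N-ds\,du\,\mu_\alpha(d\zeta).
\]
The stable scaling $\mu_\alpha(c\,\cdot)=c^{-\alpha}\mu_\alpha$ combined with the $\tfrac{1}{r_{s-}}$ density of $U_s$ transforms the image of $N_Z$ into a PRM of intensity $1_{\{u\le r_{s-}\}}ds\,du\,\mu_\alpha(d\zeta)$, and grafting $\widehat N$ on $\{u>r_{s-}\}$ restores the full intensity $ds\,du\,\mu_\alpha(d\zeta)$.

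To verify the joint law and independence of $(W,\widetilde N)$, I compute the joint characteristic/Laplace functional $\mathbb E\big[\exp\big(i\!\int\phi\,dW-\!\int\psi\,d\widetilde N\big)\big]$ via It\^o's formula, following the template of Proposition \ref{1-2}: after substituting the explicit form of $\tilde\phi$ and using the $\alpha$-stable identity for the L\'evy symbol, the drift reduces to the characteristic exponent of an independent pair made of a white noise with intensity $ds\,du$ and a compensated PRM with intensity $ds\,du\,\mu_\alpha(d\zeta)$, pinning down the joint law uniquely. Plugging $\phi(s,u)=\sigma\,1_{\{u\le r_s\}}$ annihilates the $\widehat W$-piece (since $\tilde\phi\equiv 0$) and returns $\sigma\int_0^t\sqrt{r_s}\,dB_s$; integrating $\sigma_Z\zeta\,1_{\{u\le r_{s-}\}}$ against $\widetilde N$ returns $\sigma_Z\int_0^t r_{s-}^{1/\alpha}dZ_s$ by the very construction of the atoms of $N$. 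Substituting into \eqref{lambda-integral} recovers \eqref{lambda-root}. The main obstacle is the rigorous identification of $N$ as a Poisson random measure with the claimed intensity: because the mark $U_s$ and the scaling $r_{s-}^{1/\alpha}$ are predictable rather than deterministic, one cannot appeal to a product-measure construction but must invoke a marking/randomization theorem for point processes with predictable marks, and close the law identification with the It\^o-formula calculation above.
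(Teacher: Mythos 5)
Your proposal is correct and follows essentially the same route as the paper's proof: both decompose the $(s,u)$ half-plane along the graph of $r$, manufacture the noise below the graph from $(B,Z)$ by attaching independent uniform marks rescaled to $[0,r_{s-}]$ and scaling jump sizes by $r_{s-}^{1/\alpha}$, graft fresh independent noises on the region above the graph, and identify the joint law of $(W,\widetilde N)$ by the same characteristic-functional computation as in Proposition \ref{1-2}. The only cosmetic difference is in the Gaussian part: the paper first represents $B$ as the aggregate of a white noise $W_1$ on $\mathbb R_+\times(0,1)$ (citing El Karoui--M\'el\'eard and Ikeda--Watanabe, which is precisely the marking/randomization step you flag as the remaining obstacle) and then stretches that strip onto $[0,r_s]$, whereas you keep only $B$ for the constant direction on $[0,r_s]$ and pad the $L^2(du)$-orthogonal complement with $\widehat W$.
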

\proof The L\'{e}vy-It\^{o} representation of $Z$ implies that 
$Z_t= \int_0^t\int_0^\infty \zeta \widetilde{N}(ds,d\zeta)$,
where $\widetilde{N}(ds,dz)$ is a compensated Poisson random measure on
$\mathbb{R}_+^2$ with intensity $ds\mu(d\zeta)$ given by (\ref{Levymeasure}). Furthermore, on an extended probability space there exist a white noise
$W{_1}(ds,du)$ on $\mathbb{R}_+\times (0,1)$ with intensity $dsdu$ and a Poisson random measure $N_1(ds,du,d\zeta)$ on $\mathbb{R}_+\times (0,1)\times\mathbb{R}_+$ with intensity $dsdu\mu(d\zeta)$ independent of $W_1$ such that (c.f. El Karoui and M\'{e}l\'{e}ard \cite[Corrollary {\rm
III-5}] {KM90} and
Ikeda and Watanabe \cite[Theorem 6.7]{IW89})
\beqlb\label{2-1B}
B_t=\int_0^t\int_0^1W_1(ds,du)\ \mbox{and}\ \int_0^t\int_{A}N(ds,d\zeta)=\int_0^t\int_0^1\int_{A}N_1(ds,du,d\zeta),
\eeqlb
for $A\in\mathcal{B}(\mathbb{R}_+)$. In a similar way, on an extended probability space let $W_2(du,ds)$ be a white noise on $\mathbb{R}_+^2$  with intensity $dsdu$ and $N_2(ds,du,d\zeta)$ be a Poisson 
random measure on $\mathbb{R}_+^3$ with intensity $dsdu{\mu}(d\zeta)$ independent of $W_2$. Then we define
for any $A, C\in\mathcal{B}(\mathbb{R}_+)$,
\beqlb\label{2-1W}
W([0,t]\times A):=\int_0^t\int_0^1\sqrt{r_s} 1_{A}(r_su)W_1(ds,du)+\int_0^t\int_{r_s}^\infty 1_A(u)W_2(ds,du),
\eeqlb
\beqlb\label{2-1N}
N([0,t]\times A\times C)\ar:=\ar\int_0^t\int_0^1\int_0^\infty 1_A(r_{s-}u)1_C({r_{s-}^{1/\alpha}}\zeta)N_1(ds,du,d\zeta)\nonumber\\
\ar\ar+\int_0^t\int_{r_{s-}}^\infty\int_0^\infty1_A(u)1_C(\zeta)N_2(ds,du,d\zeta).
\eeqlb
Similarly as in Proposition \ref{1-2}, $(W,N)$ has the same distribution as $(W_2,N_2)$. So  the proposition is proved.\qed

The branching property is one key property of the $\alpha$-CIR model. The following result shows that the $\alpha$-CIR process $r$ has the 
branching property in the pathwise sense, {see   \cite[Theorem 3.2]{DawsonLi}}. The proof is based on the integral representation \eqref{lambda-integral} where the white noise $W$ and the compensated Poisson random measure $\widetilde N$ are translation invariant with respect to the variable $u$.  
\begin{proposition}\label{prop:decomposition}
 Let $r$ be an $\alpha$-{CIR}$(a,b,\sigma,\sigma_Z,\alpha)$ process. Let  $r_0^{(i)}\in\mathbb R_+$ and $b^{(i)}\in\mathbb R$,  $i\in\{1,2\}$, such that $r_0=r_0^{(1)}+r_0^{(2)}$ and $b=b^{(1)}+b^{(2)}$. Then there exist independent processes $r^{(i)}$ in the families $\alpha$-{CIR}$(a,b^{(i)},\sigma,\sigma_Z,\alpha)$ with initial values $r_0^{(i)}$ such that $r=r^{(1)}+r^{(2)}$. 
\end{proposition}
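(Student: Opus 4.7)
The proof plan is to exploit the translation invariance in the spatial variable $u$ of the integral representation \eqref{lambda-integral}. By Proposition \ref{2-1}, after enlarging the probability space we may assume that $r$ solves \eqref{lambda-integral} with $\mu=\mu_\alpha$ for some white noise $W$ and independent compensated Poisson random measure $\widetilde N$. I let $r^{(1)}$ be the unique strong solution of \eqref{lambda-integral} with parameters $(a,b^{(1)},\sigma,\sigma_Z,\mu_\alpha)$ and initial value $r_0^{(1)}$, driven by the \emph{same} pair $(W,\widetilde N)$ through its own adaptive window $\{u\leq r^{(1)}_s\}$. A pathwise comparison argument for SDEs of this form, in the spirit of Dawson and Li \cite{DawsonLi}, yields $0\leq r^{(1)}_t\leq r_t$ for all $t\geq 0$ almost surely, so that $r^{(2)}:=r-r^{(1)}$ is a nonnegative c\`adl\`ag process with $r^{(2)}_0=r_0^{(2)}$.

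Subtracting the integral equations for $r$ and $r^{(1)}$ and splitting $\{u\leq r_s\}=\{u\leq r^{(1)}_s\}\sqcup\{r^{(1)}_s<u\leq r_s\}$ gives
\beqnn
r^{(2)}_t=r_0^{(2)}+\int_0^t a(b^{(2)}-r^{(2)}_s)\,ds+\sigma\int_0^t\int_{r^{(1)}_s}^{r_s}W(ds,du)+\sigma_Z\int_0^t\int_{r^{(1)}_{s-}}^{r_{s-}}\int_{\mathbb R_+}\zeta\widetilde N(ds,du,d\zeta).
\eeqnn
I then introduce shifted random fields on $\mathbb R_+^2$ and $\mathbb R_+^3$ by
\beqnn
W^{(2)}([0,t]\times B):=\int_0^t\int_0^\infty\mathbf{1}_B(u-r^{(1)}_s)\mathbf{1}_{\{u>r^{(1)}_s\}}W(ds,du),
\eeqnn
and analogously for $\widetilde N^{(2)}$, keeping the jump mark $\zeta$ untouched. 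The change of variable $v=u-r^{(1)}_s$ rewrites the last two stochastic integrals above as $\sigma\int_0^t\int_0^{r^{(2)}_s}W^{(2)}(ds,dv)$ and $\sigma_Z\int_0^t\int_0^{r^{(2)}_{s-}}\int_{\mathbb R_+}\zeta\widetilde N^{(2)}(ds,dv,d\zeta)$, putting the equation for $r^{(2)}$ exactly in the form \eqref{lambda-integral} with immigration parameter $b^{(2)}$.

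To conclude, I would prove that $(W^{(2)},\widetilde N^{(2)})$ is a white noise--compensated Poisson random measure pair with intensities $ds\,dv$ and $ds\,dv\,\mu_\alpha(d\zeta)$, and is independent of $r^{(1)}$ (equivalently, of the restriction of $(W,\widetilde N)$ to $\{u\leq r^{(1)}_s\}$). I would verify this by computing, conditionally on $r^{(1)}$, the joint characteristic functional of $(W^{(2)},\widetilde N^{(2)})$ against deterministic test functions, following the It\^o/martingale argument already used in the proof of Proposition \ref{1-2}: translation invariance of $du$ and of $du\,\mu_\alpha(d\zeta)$ makes the resulting Laplace exponent free of the random shift $r^{(1)}_s$. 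This last point is the main obstacle, since the shift is $r^{(1)}$-progressively measurable and one must rule out any spurious dependence introduced by conditioning; it is exactly where translation invariance of the intensity measure in $u$ is essential. Once established, $r^{(2)}$ is the strong solution of \eqref{lambda-integral} with parameters $(a,b^{(2)},\sigma,\sigma_Z,\mu_\alpha)$ driven by $(W^{(2)},\widetilde N^{(2)})$, hence belongs to $\alpha$-CIR$(a,b^{(2)},\sigma,\sigma_Z,\alpha)$ and is independent of $r^{(1)}$, completing the decomposition.
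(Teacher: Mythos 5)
Your proposal follows essentially the same route as the paper's proof: define $r^{(1)}$ as the strong solution of the integral equation driven by the same pair $(W,\widetilde N)$ on the sub-window $\{u\leq r^{(1)}_s\}$, invoke the Dawson--Li comparison to get $r^{(1)}\leq r$, set $r^{(2)}=r-r^{(1)}$, and use translation invariance of the intensity measures in $u$ to conclude that $r^{(2)}$ is an independent $\alpha$-CIR$(a,b^{(2)},\sigma,\sigma_Z,\alpha)$ process. Your write-up merely makes explicit (via the shifted fields $W^{(2)},\widetilde N^{(2)}$ and their characteristic functionals) the final step that the paper states in one line, so it is a correct, slightly more detailed rendering of the same argument.
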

\begin{proof}
Let $r$ be a solution to \eqref{lambda-integral} with L\'evy measure $\mu_\alpha$. Define $r^{(1)}$ to be the solution to the following  equation  \begin{equation}\label{Equ:lambda1}
r_t^{(1)} = r_0^{(1)} 
+ \int_0^t a \left( b  - r_s^{(1)}  \right) ds +
 \sigma \int_0^t \int_0^{r_s^{(1)}} W(ds, du)
+ \sigma_Z \int_0^t \int_0^{r_{s-}^{(1)} } \int_{\mathbb{R}^+} \zeta \widetilde{N}
(ds,du, d\zeta).
\end{equation} where $(W,N)$ are the same as in \eqref{lambda-integral}.
Note that  $r^{(1)}$ is an $\alpha$-CIR process with parameters $(a,b^{(1)},\sigma,\sigma_Z,\alpha)$.  By \cite[Theorem~3.2]{DawsonLi}, we have for all
$t\geq 0$, $\mathbb P(r_t\geq r_t^{(1)})=1$.  Let $r^{(2)}=r-r^{(1)}$. Then
\[r_t^{(2)}=r^{(2)}_0  + \int_0^t a \left( b^{(2)}  - r^{(2)}_s  \right) ds + \sigma \int_0^t 
\int_{r^{(1)}_s}^{r^{(1)}_s+ r^{(2)}_s} W(ds, du)  + \sigma_Z \int_0^t \int_{r^{(1)}_{s-}}^{r^{(1)}_{s-}+ r^{(2)}_{s-} } \int_{\mathbb{R}^+} 
\zeta \widetilde{N}(ds,du, d\zeta).\]
By the translation invariance of $W$ and $\widetilde{N}$ with respect to the variable $u$, we obtain that $r^{(2)}$ is independent of $r^{(1)}$  and is an $\alpha$-CIR process with parameters $(a,b^{(2)},\sigma,\sigma_Z,\alpha)$. The proposition is thus proved.\qed
\end{proof}


To study the effect of the branching property, we introduce the locally equivalent L\'evy-Ornstein-Uhlenbeck (LOU) process to make a comparison with the $\alpha$-CIR process.

\begin{definition}[Locally equivalent LOU process]\label{def-LevyOU}
Let $\lambda=(\lambda_t,t\geq 0)$ be the solution of the following equation
\begin{equation}\label{Levy-OU-integral}
\lambda_t = r_0 + \int_0^t a \left( b  - \lambda_s  \right) ds + \sigma \int_0^t \int_0^{r_0} W(ds,du)
+ \sigma_Z \int_0^t \int_0^{r_0} \int_{\mathbb{R}^+} \zeta \widetilde{N}
(ds,du, d\zeta),
\end{equation} 
where the initial value $r_0$, and the processes $W$ and $\widetilde N$ are the same as in Definition \ref{def-SDE-integral}.
\end{definition}

Note  that, in the case where the L\'evy measure is given by $\mu_\alpha$, the process $\lambda$ defined by \eqref{Levy-OU-integral} can be written in the following form as a generalization of the Vasicek model
\begin{equation}\label{lou process}
\lambda_t=r_0+ \int_0^t a \left( b  - \lambda_s  \right) ds +\sigma \sqrt{r_0}B_t+\sigma_Z\sqrt[\alpha]{r_0}Z_t,
\end{equation}
where $B$ and $Z$ are the same as in Definition \ref{def-SDE-root}. 
Comparing  \eqref{lambda-integral} and  \eqref{Levy-OU-integral}, we remark that at the initial time,  the two  processes have the same volatility and jump terms. But when time evolves,  the volatility and jump terms of $\alpha$-CIR process will be adapted to the actual level of the interest rate, while ``frozen'' at the initial values in the locally equivalent LOU process.

To further study the difference between  \eqref{lambda-integral} and \eqref{Levy-OU-integral}, we separate the large and small jumps  and use the non-compensated version of the Poisson random measure $\widetilde N$. Since $\alpha$-stable processes exhibit infinite activity, we fix a
jump threshold $y$ (so the threshold for $r$ is given as $\overline{y}=\sigma_Z y$).
The small jumps with infinite activity can be approximated by a second Brownian motion for instance in the spirit of  Asmussen and Rosinski \cite{AR2001}.  
The locally equivalent LOU process reads

\beqlb\label{Levy-OU-integral-without-compensation}
\lambda_t \ar=\ar r_0 + \int_0^t a \left(b-\frac{\sigma_Z {r_0} \Theta(\alpha, y)}{a} - \lambda_s\right)ds + 
 \sigma\int_0^t \int_0^{r_0} W(ds,du)\nonumber\\
\ar\ar
+\sigma_Z \int_0^t \int_0^{r_0 } \int_0^y\zeta \widetilde{N}
(ds, du, d\zeta)
 +   \sigma_Z \int_0^t \int_0^{r_0 } \int_y^\infty \zeta N
(ds,du,d\zeta)\, ,
\eeqlb
where
\begin{equation}\label{def-Theta}
\Theta(\alpha, y) =-\frac{1}{\cos(\pi\alpha/2)\Gamma(-\alpha)} \int_{y}^{\infty}\frac{d\zeta}{\zeta^{\alpha}} 
= \frac{2}{\pi}\alpha\Gamma(\alpha-1)\sin(\pi\alpha/2)y^{-(\alpha-1)},
\end{equation}
and $N$ is the (non-compensated) Poisson random measure corresponding to $\widetilde N$.
In a similar way, the $\alpha$-CIR process \eqref{lambda-integral} can be written as
\begin{equation}\label{lambda integral-without-compensation}
\begin{array}{rcl}
\displaystyle r_t &=& \displaystyle r_0 + \int_0^t \widetilde{a}(\alpha, y) \left( \widetilde{b}(\alpha, y)  - r_s  \right) ds 
+ \sigma \int_0^t \int_0^{r_s } W(ds, du)  \\
&&\quad  \displaystyle + \sigma_Z \int_0^t \int_0^{r_{s-} } \int_0^y \zeta \widetilde{N}
(ds, du, d\zeta) +  \sigma_Z \int_0^t \int_0^{r_{s-} } \int_y^\infty \zeta N
(ds, du, d\zeta)\, ,
\end{array}
\end{equation}
where 
\begin{equation}\label{ab}
\widetilde{a}(\alpha, y) = a+ \sigma_Z\Theta(\alpha, y) \quad \quad  
\widetilde{b}(\alpha, y) = \frac{ab}{a+ \sigma_Z \Theta(\alpha, y)}
\end{equation}

The previous results allow us to make comparisons. 
First, comparing $\alpha$-CIR and LOU processes, it follows  that the implicit negative drifts from large jump part lead to a linear decay for $\lambda_t$ 
while to a stronger exponential decay for $r_t$. 
Then as $\sigma_Z$ increases, the decreasing drift term plays a more important role in $r_t$ than in $\lambda_t$. 
 Second, comparing CIR and $\alpha$-CIR processes, namely the cases $\sigma_Z=0$ and
$\sigma_Z>0$ in (\ref{lambda integral-without-compensation}), we can study the evolution between two large jumps in the $\alpha$-CIR model.
Between two large jumps, the $\alpha$-CIR exhibits an increasing mean reverting speed $\widetilde{a}$ and a decreasing long mean interest rate $\widetilde{b}$
as long as $\sigma_Z$ increases. As a consequence, the $\alpha$-CIR diffusion is more adequate to model the presence of low interest rates and their persistency upon large jumps, compared to LOU and CIR models.

We are also interested in the jump times of large jumps. For this purpose, we introduce, based on (\ref{lambda integral-without-compensation}), the auxiliary process
 \beqlb\label{rhat1}
\widehat{r}^{(y)}_t =r_0 + \int_0^t \widetilde{a}(\alpha, y) \big( \widetilde{b}(\alpha, y)  - r_s  \big) ds 
+ \sigma \int_0^t \int_0^{r_s} W(ds,du)
+ \sigma_Z \int_0^t \int_0^{r_{s-} } \int_0^y\zeta \widetilde{N}
(ds,du,d\zeta).
\eeqlb
For any jump threshold $y>0$, the process $\widehat{r}^{(y)}$ coincides with $r$ up to the first large jump  $\tau_1^{(y)}:=\inf\{t>0:\Delta r_t>\overline y=\sigma_Z y\}$. More generally, denote by $\{\tau_i^{(y)}\}_{i\in\mathbb{N}}$ the sequence of  jump times of $r$ larger than $\overline{y}$, 
then for any $t\in[\tau_i^{(y)}, \tau_{i+1}^{(y)})$, we have
 \begin{equation}\label{rhat2}
 \begin{array}{rcl}
\displaystyle \widehat{r}^{(y)}_t &=& \displaystyle r_{\tau_i^{(y)}}  + \int_{\tau_i^{(y)}}^t
 \widetilde{a}(\alpha, y) \big( \widetilde{b}(\alpha, y)  - \widehat{r}^{(y)}_s  \big) ds 
+ \sigma \int_{\tau_i^{(y)}}^t \int_0^{\widehat{r}^{(y)}_s} W(ds,du) \\
&& \displaystyle + \sigma_Z \int_{\tau_i^{(y)}}^t \int_0^{\widehat{r}^{(y)}_{s-} } \int_0^y\zeta \widetilde{N}
(ds,du,d\zeta).
\end{array}
\end{equation}
This auxiliary process $\widehat{r}^{(y)}$ represents the history of the interest rate $r$ except the jumps larger than $\overline y$. 
This process will be particularly useful to study $\tau^{(y)}_1$ (see Section \ref{section-jumps}).



\section{Link with the CBI processes}\label{section-CBI}

In this section, we show that $\alpha$-CIR processes are continuous state branching processes with immigration (CBI process) and 
deduce from this fact several properties of the $\alpha$-CIR model. 
The CBI processes have been introduced by Kawazu and Watanabe \cite{KaW71}. We recall the definition as below.

\begin{definition}[CBI process]
A Markov process
$X$ with state space $\mbb{R}_+$ is called a {continuous state branching process with 
immigration,}  characterized by
\emph{branching mechanism} $\Psi(\cdot)$ and \emph{immigration rate} $\Phi(\cdot)$, if its   
characteristic representation is 
given, for $p\geq 0$, by
\begin{equation}\label{laplace}
\mathbb{E}_{x}\left[e^{- p X_{t}}\right]=\exp\left[-xv(t,p)-\int_{0}^{t}\Phi\big(v(s,p)\big)ds\right],
\end{equation} 
where the  function $v:\mathbb R_+\times\mathbb R_+\rightarrow\mathbb R$ satisfies the following differential equation 
\begin{equation}\label{ODE0}\frac{\partial v(t,p)}{\partial t}=-\Psi(v(t,p)),\quad v(0,p)=p\end{equation}
and $\Psi$ and $\Phi$ are functions of the variable $q\geq 0$ given by
\beqnn
\Psi(q)&=&\beta q+\frac{1}{2}\sigma^{2}q^{2}+\int_{0}^{\infty}(e^{-qu}-1+qu)\pi(du), \\
\Phi(q)&=&\gamma q+\int_{0}^{\infty}(1-e^{-qu})\nu(du),
\eeqnn
 with $\sigma, \gamma \geq 0$, $\beta \in \mathbb{R}$ and  $\pi$, $\nu$ being two L\'evy measures such that 
\beqlb\label{levy measure moment}
\int_{0}^{\infty} (u\wedge u^{2})\pi(du)<\infty,\quad \int_{0}^{\infty} (1\wedge u) \nu(du)<\infty.
\eeqlb
\end{definition}

 The CBI process $X$ has as generator the operator $\mathcal{L}$ acting on $C^{2}_{0}(\mathbb{R}_{+})$ as 
\beqlb\label{generator1}
\mathcal{L} f(x)&=&\frac{\sigma^{2}}{2}xf''(x)+(\gamma-\beta x)f'(x)+x\int_{0}^{\infty}(f(x+u)-f(x)-uf'(x))\pi(du)\nonumber\\
&&\;+\int_{0}^{\infty}\left(f(x+u)-f(x)\right)\nu(du).
\eeqlb 

The next proposition shows that the $\alpha$-CIR model belongs to the family of CBI processes by using the integral representation  \eqref{lambda-integral}, {see \cite[Theorem 3.1]{DawsonLi}}. 
We shall give two proofs. The first one in the main text is by verification. The second one, which is constructive, is postponed in Appendix.
\begin{proposition}\label{pro: integral CBI}
The $\alpha$-CIR type process $r$ in Definition \ref{def-SDE-integral} is a CBI process with the branching mechanism $\Psi$ given by
 \begin{equation}\label{equ: Psi general}
 \Psi(q)=aq +\frac{1}{2}\sigma^2q^2+ \int_0^{\infty}(e^{-q\sigma_Z\zeta}-1+q\sigma_Z\zeta){\mu}(d\zeta)
 \end{equation}
and the immigration rate $\Phi(q)=a b q$.

\end{proposition}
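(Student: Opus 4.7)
My plan is to verify the CBI identity \eqref{laplace} by producing an exponential martingale whose terminal value is $e^{-pr_T}$ and whose initial value is the claimed right-hand side. This is the standard ``verification'' route to show the generator of $r$ coincides with the CBI generator \eqref{generator1} with the stated $\Psi$ and $\Phi$.

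\textbf{Setup.} Fix $T>0$ and $p\geq 0$. With $\Psi$ defined by \eqref{equ: Psi general}, I would first check that $\Psi:\mathbb{R}_+\to\mathbb{R}_+$ is continuous, convex and locally Lipschitz on $[0,p]$, so the ODE $\partial_t v=-\Psi(v)$, $v(0,p)=p$, admits a unique solution $v(\cdot,p)$ which is non-increasing in $t$ and stays in $[0,p]$. This positivity is essential for what follows. Then define
\[
F(t,x):=\exp\!\Bigl(-v(T-t,p)\,x-\int_{0}^{T-t}\Phi\bigl(v(s,p)\bigr)\,ds\Bigr),\qquad 0\le t\le T,\; x\ge 0,
\]
which is bounded: $0<F(t,x)\le 1$ for $x\ge 0$ since $v,\Phi\ge 0$.

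\textbf{Martingale computation.} Applying It\^o's formula to $F(t,r_t)$ using the integral representation \eqref{lambda-integral}, the time derivative contributes
\[
\partial_t F(t,x)=F(t,x)\bigl[\Psi(v(T-t,p))\,(-x)+\Phi(v(T-t,p))\bigr],
\]
using $\partial_t v(T-t,p)=\Psi(v(T-t,p))$. For the spatial part, since $F(t,\cdot)$ is of the exponential form $x\mapsto C(t)e^{-v(T-t,p)x}$, a direct computation (using that the martingale increments come from $W$ on a strip of width $r_s$ and from $\widetilde N$ on a strip of width $r_{s-}$) gives a local-martingale part plus a drift
\[
F(t,r_t)\Bigl[r_t\Bigl(av(T-t,p)+\tfrac12\sigma^{2}v(T-t,p)^{2}+\!\int_{0}^{\infty}\!\!\bigl(e^{-v(T-t,p)\sigma_Z\zeta}-1+v(T-t,p)\sigma_Z\zeta\bigr)\mu(d\zeta)\Bigr)-abv(T-t,p)\Bigr].
\]
The bracket is exactly $r_t\Psi(v(T-t,p))-\Phi(v(T-t,p))$, so summing with $\partial_t F$ yields zero. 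Hence $F(t,r_t)$ is a local martingale; boundedness promotes it to a true martingale. Equating $\mathbb E[F(T,r_T)]=F(0,r_0)$ and noting $F(T,r_T)=e^{-pr_T}$ gives \eqref{laplace} with $x=r_0$.

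\textbf{Main obstacles.} The delicate point is justifying the It\^o expansion under the compensated Poisson integral and isolating a genuine martingale part. Because $v(T-t,p)\in[0,p]$ and $\int(\zeta\wedge\zeta^{2})\mu(d\zeta)<\infty$, the inequality $|e^{-v\sigma_Z\zeta}-1+v\sigma_Z\zeta|\le C_{p}(\zeta\wedge\zeta^{2})$ gives the integrability needed to apply It\^o's formula for semi-martingales with jumps and to identify the compensator; the Brownian piece is handled by the moment bound $\mathbb E[r_t]\le r_0+abt$ which is standard for CBI processes (or can be derived directly from \eqref{lambda-integral}). Finally one should record that $v(\cdot,p)\ge 0$, which follows from the flow of an ODE whose right-hand side $-\Psi$ vanishes at $0$, preventing escape below zero. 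These ingredients together complete the verification and establish the proposition.
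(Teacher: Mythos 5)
Your proposal is correct and follows essentially the same route as the paper: both verify the CBI Laplace transform by exhibiting an exponential (local) martingale built from the solution $v$ of the ODE \eqref{ODE0} and the integral representation \eqref{lambda-integral} — your $F(t,r_t)$ differs from the paper's $Y_s^{(p)}$ only by a deterministic constant factor. The extra care you take with integrability, the true-martingale upgrade via boundedness, and the nonnegativity of $v$ is welcome detail the paper leaves implicit; just remember to also record, as the paper does, that pathwise uniqueness for \eqref{lambda-integral} gives the Markov property, which is needed to conclude that $r$ is genuinely a CBI process and not merely a process with the right one-dimensional Laplace transforms.
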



\smallskip
\proof 
Let $v(t,p)$ be the unique solution of the differential equation \eqref{ODE0} with $\Psi$ given by \eqref{equ: Psi general}.
 Fix $t>0$ and let $u(s,p)=v(t-s,p)$ for $0\leq s\leq t$.  Denote by   
 $$Y_s^{(p)}:=\exp\Big(-u(s,p)r_s+ab\int_0^s u(l,p)dl\Big)$$ 
 Applying It\^{o}'s formula, we have that 
 \beqnn
Y_t^{(p)}-Y_0^{(p)}-\int_0^tr_sY_s^{(p)}\Big(\Psi(u(s,p))-\frac{\partial{u(s,p)}}{\partial{s}}\Big)ds, \,t\geq 0  \eeqnn
is a martingale. By (\ref{ODE0}), 
$(Y_t^{(p)}-Y_0^{(p)},t\geq 0)$ is a martingale. Thus $
\mathbb E[Y_t^{(p)}]=e^{-u(0,p)r_0}$, which implies that 
 \beqnn
 \mathbb{E}[e^{-pr_t}]=\exp\Big({-v(t,p)r_0-ab\int_0^t v(t-s,p)ds}\Big)=\exp\Big({-v(t,p)r_0-ab\int_0^t v(s,p)ds}\Big). \eeqnn
Moreover, since $r$ is the unique strong solution of equation (\ref{lambda-integral}), it is a Markov process. 
 Thus $r$ is a CBI $(\Psi,\Phi)$ process.\qed

As consequence of the previous proposition, the $\alpha$-CIR model and its truncated process are both CBI processes by considering particular L\'evy measures. 

 \begin{corollary}\label{pro: SCIR CBI}The  $\alpha$-CIR $(a,b,\sigma,\sigma_Z,\alpha)$ process is a CBI process with the branching mechanism $\Psi$ given by
 \begin{equation}\label{equ: Psi SCIR}
 \Psi_{\alpha}(q)=a q+\frac{\sigma^2}{2}q^2-\frac{\sigma_Z^\alpha}{\cos(\pi\alpha/2)}q^\alpha,
 \end{equation}
and the immigration rate $\Phi$ given by
\begin{equation}\label{equ: Phi SCIR}
\Phi(q)=a b q.
\end{equation} 
\end{corollary}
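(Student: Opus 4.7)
The plan is to obtain the corollary as a direct specialization of Proposition \ref{pro: integral CBI}. By Proposition \ref{2-1}, given a solution $r$ of the root representation \eqref{lambda-root}, we can construct on an enlarged probability space a white noise $W$ and an independent compensated Poisson random measure $\widetilde N$ with L\'evy measure $\mu_\alpha$ from \eqref{Levymeasure} such that $r$ also solves the integral equation \eqref{lambda-integral} with $\mu=\mu_\alpha$. Therefore $r$ is an $\alpha$-CIR type process in the sense of Definition \ref{def-SDE-integral} with parameters $(a,b,\sigma,\sigma_Z,\mu_\alpha)$, and Proposition \ref{pro: integral CBI} applies directly: $r$ is a CBI process whose immigration rate is $\Phi(q)=abq$ (unchanged) and whose branching mechanism is given by \eqref{equ: Psi general} with $\mu=\mu_\alpha$.

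It then remains to evaluate the jump integral and show
\[
\int_0^\infty \bigl(e^{-q\sigma_Z\zeta}-1+q\sigma_Z\zeta\bigr)\mu_\alpha(d\zeta)=-\frac{\sigma_Z^\alpha}{\cos(\pi\alpha/2)}\,q^\alpha.
\]
Substituting the explicit form of $\mu_\alpha$, this reduces to computing
\[
\frac{-1}{\cos(\pi\alpha/2)\Gamma(-\alpha)}\int_0^\infty \bigl(e^{-q\sigma_Z\zeta}-1+q\sigma_Z\zeta\bigr)\frac{d\zeta}{\zeta^{1+\alpha}},
\]
and then invoking the classical identity $\int_0^\infty (e^{-\lambda\zeta}-1+\lambda\zeta)\zeta^{-1-\alpha}d\zeta=\Gamma(-\alpha)\lambda^\alpha$ valid for $\alpha\in(1,2)$ and $\lambda\geq 0$ (obtained for instance by two integrations by parts against $\Gamma(1-\alpha)\lambda^{\alpha-1}=\int_0^\infty(1-e^{-\lambda\zeta})\zeta^{-\alpha}d\zeta$, which itself follows from the standard formula $\int_0^\infty e^{-\lambda\zeta}\zeta^{s-1}d\zeta=\Gamma(s)\lambda^{-s}$). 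With $\lambda=q\sigma_Z$ this gives exactly the right-hand side above, and adding the drift and diffusion parts $aq+\tfrac12\sigma^2q^2$ produces \eqref{equ: Psi SCIR}.

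There is no real obstacle in the argument; the only point requiring care is the combinatorial sign produced by $\cos(\pi\alpha/2)\Gamma(-\alpha)$ for $\alpha\in(1,2)$ (both factors are negative, so their product is positive, ensuring $\mu_\alpha$ is a bona fide positive measure, and the minus sign in \eqref{equ: Psi SCIR} comes out correctly). The integrability condition $\int_0^\infty(\zeta\wedge\zeta^2)\mu_\alpha(d\zeta)<\infty$ required in Definition \ref{def-SDE-integral} holds precisely because $\alpha\in(1,2)$, so the whole passage through Proposition \ref{pro: integral CBI} is legitimate. The case $\alpha=2$ is handled separately: then the jump term disappears, $Z$ reduces to $\sqrt{2}$ times a Brownian motion, and the formula $\Psi_2(q)=aq+\tfrac12(\sigma^2+2\sigma_Z^2)q^2$ is recovered by noting that $-\sigma_Z^2/\cos(\pi)=\sigma_Z^2$, in agreement with the standard CIR branching mechanism.
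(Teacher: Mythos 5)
Your argument is correct and is essentially the paper's own (implicit) proof: the corollary is presented there as an immediate specialization of Proposition \ref{pro: integral CBI} to the measure $\mu=\mu_\alpha$ (justified by the root/integral equivalence of Propositions \ref{1-2} and \ref{2-1}), with the jump integral evaluated via the same classical identity $\int_0^\infty(e^{-\lambda\zeta}-1+\lambda\zeta)\zeta^{-1-\alpha}\,d\zeta=\Gamma(-\alpha)\lambda^\alpha$ for $\alpha\in(1,2)$. One small correction to your parenthetical sign check: for $\alpha\in(1,2)$ one has $\Gamma(-\alpha)>0$ and $\cos(\pi\alpha/2)<0$, so the product $\cos(\pi\alpha/2)\Gamma(-\alpha)$ is \emph{negative} and it is the leading minus sign in \eqref{Levymeasure} that makes $\mu_\alpha$ a positive measure --- not ``both factors negative''; this slip does not affect your main computation, which lands on \eqref{equ: Psi SCIR} with the correct sign.
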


 \begin{corollary}\label{pro: auxiliary CBI}The auxiliary process $ \widehat{r}^{(y)}$ defined by \eqref{rhat1} is a CBI process with the branching mechanism $ \Psi^{(y)}$ given by
 \begin{equation}\label{Psix}
 \Psi^{(y)}_{\alpha}(q):=\Big(a+\sigma_Z^\alpha\int_y^\infty\zeta\mu_\alpha(d\zeta)\Big)q+\frac{1}{2}\sigma^2q^2+\sigma_Z^\alpha\int_0^y(e^{-q\zeta}-1+q\zeta)\mu_\alpha(d\zeta),
 \end{equation}
where $\mu_\alpha$ is given by \eqref{Levymeasure} and the immigration rate $\Phi$ given by
 \begin{equation}\label{Phi auxiliary}
 \Phi(q)=\widetilde{a}(\alpha, y)\,  \widetilde{b}(\alpha, y)\,  q = a b \,q.
 \end{equation} 
\end{corollary}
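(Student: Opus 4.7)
The plan is to recognize $\widehat{r}^{(y)}$ as itself an $\alpha$-CIR type process in the sense of Definition \ref{def-SDE-integral}, but with a modified set of parameters, so that Proposition \ref{pro: integral CBI} applies directly. Comparing \eqref{rhat1} with \eqref{lambda-integral}, the auxiliary process is driven by the same Brownian white noise $W$ and by the same compensated Poisson random measure $\widetilde{N}$ but with the jump measure truncated, so that it has effective Lévy measure $\mathbf{1}_{(0,y]}(\zeta)\,\mu_\alpha(d\zeta)$; meanwhile the drift parameters $(a,b)$ are replaced by $(\widetilde{a}(\alpha,y),\widetilde{b}(\alpha,y))$. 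Since $\mu_\alpha$ satisfies $\int_0^\infty(\zeta\wedge\zeta^2)\mu_\alpha(d\zeta)<\infty$, its restriction to $(0,y]$ does so too, so all the hypotheses of Definition \ref{def-SDE-integral} are met and Proposition \ref{pro: integral CBI} applies to $\widehat{r}^{(y)}$ with parameters $(\widetilde{a}(\alpha,y),\widetilde{b}(\alpha,y),\sigma,\sigma_Z,\mathbf{1}_{(0,y]}\mu_\alpha)$.

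Feeding these parameters into \eqref{equ: Psi general} yields at once that $\widehat{r}^{(y)}$ is a CBI process with immigration rate $\widetilde{a}(\alpha,y)\widetilde{b}(\alpha,y)\,q$ and branching mechanism
\[
\widetilde{a}(\alpha,y)\,q+\tfrac{1}{2}\sigma^2 q^2+\int_0^y\bigl(e^{-q\sigma_Z\zeta}-1+q\sigma_Z\zeta\bigr)\mu_\alpha(d\zeta).
\]
The immigration rate already matches \eqref{Phi auxiliary} since $\widetilde{a}(\alpha,y)\widetilde{b}(\alpha,y)=ab$ by \eqref{ab}. To reach \eqref{Psix}, I would rewrite the linear coefficient using \eqref{def-Theta} and \eqref{Levymeasure}: a direct computation shows $\Theta(\alpha,y)=\int_y^\infty\zeta\,\mu_\alpha(d\zeta)$, so $\widetilde{a}(\alpha,y)=a+\sigma_Z\int_y^\infty\zeta\,\mu_\alpha(d\zeta)$. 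The small-jump integral is handled by the scaling substitution $u=\sigma_Z\zeta$, exploiting the $\alpha$-self-similarity of $\mu_\alpha$ (its pushforward under dilation by $\sigma_Z$ equals $\sigma_Z^\alpha\mu_\alpha$), which absorbs the $\sigma_Z$'s inside $e^{-q\sigma_Z\zeta}-1+q\sigma_Z\zeta$ into an outer factor $\sigma_Z^\alpha$.

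No step is conceptually hard: the real content is inherited from Proposition \ref{pro: integral CBI}, and what remains is purely algebraic bookkeeping using \eqref{ab}, \eqref{def-Theta}, and the explicit form of $\mu_\alpha$. The only mild subtlety is to track the interplay between the two conventions for the threshold (on $\zeta$ versus on the jumps of $r$, where $\overline{y}=\sigma_Z y$), which is exactly what makes the scaling substitution convert the linear prefactor $\sigma_Z$ in $\widetilde{a}(\alpha,y)$ into a $\sigma_Z^\alpha$ in the integrated term of \eqref{Psix}.
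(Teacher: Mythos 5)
Your proposal follows exactly the route the paper intends: the paper gives no separate proof of this corollary, asserting only that it follows from Proposition \ref{pro: integral CBI} ``by considering particular L\'evy measures,'' and your reduction --- read $\widehat{r}^{(y)}$ as an $\alpha$-CIR type process with parameters $(\widetilde a(\alpha,y),\widetilde b(\alpha,y),\sigma,\sigma_Z,\mathbf 1_{(0,y]}\mu_\alpha)$ and feed these into \eqref{equ: Psi general} --- is precisely that. Your identities $\Theta(\alpha,y)=\int_y^\infty\zeta\,\mu_\alpha(d\zeta)$ and $\widetilde a(\alpha,y)\widetilde b(\alpha,y)=ab$ are correct, as is the appeal to the self-similarity of $\mu_\alpha$. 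One reading point you pass over silently: as printed, \eqref{rhat1} has $r_s$, $r_{s-}$ in the integrand bounds rather than $\widehat r^{(y)}_s$, $\widehat r^{(y)}_{s-}$; for the corollary to make sense (and consistently with \eqref{rhat2}) the intended equation is the autonomous one, which is what you implicitly use.

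The one place your bookkeeping does not close is the very last step. The substitution $u=\sigma_Z\zeta$ turns $\mu_\alpha(d\zeta)$ into $\sigma_Z^\alpha\mu_\alpha(du)$, but it also moves the cutoff: $\zeta\in(0,y]$ becomes $u\in(0,\sigma_Z y]=(0,\overline y]$ and $\zeta\in(y,\infty)$ becomes $u\in(\overline y,\infty)$. What your computation actually produces is therefore
\[
\Big(a+\sigma_Z^\alpha\int_{\overline y}^\infty u\,\mu_\alpha(du)\Big)q+\tfrac12\sigma^2q^2+\sigma_Z^\alpha\int_0^{\overline y}\bigl(e^{-qu}-1+qu\bigr)\mu_\alpha(du),
\]
i.e.\ \eqref{Psix} with both integration limits at $\overline y=\sigma_Z y$ rather than at $y$. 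Equivalently one may keep the limits at $y$, but then the $\sigma_Z$'s must stay inside: $\widetilde a(\alpha,y)q+\tfrac12\sigma^2q^2+\int_0^y(e^{-q\sigma_Z\zeta}-1+q\sigma_Z\zeta)\mu_\alpha(d\zeta)$. The displayed \eqref{Psix} mixes the two conventions (outer factor $\sigma_Z^\alpha$ together with limits at $y$) and coincides with either correct form only when $\sigma_Z=1$; note in particular that the stated linear coefficient $a+\sigma_Z^\alpha\int_y^\infty\zeta\mu_\alpha(d\zeta)$ differs from $\widetilde a(\alpha,y)=a+\sigma_Z\int_y^\infty\zeta\mu_\alpha(d\zeta)$ by the same factor. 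This is a notational slip in the paper (it propagates through Section \ref{section-jumps}) rather than a flaw in your method, but your claim that the substitution lands ``exactly'' on \eqref{Psix} should be corrected to state which of the two consistent normalizations you are using.
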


In the following of this section, we use the CBI characterization to show some properties of the $\alpha$-CIR model.

 
\begin{proposition} \label{continuity} Let $(r^{(\alpha)}_t,t\geq0)$ denote the $\alpha$-CIR process with parameters $(a,b,\sigma,\sigma_Z,\alpha)$. Then as $\alpha\rightarrow2$,  $r^{(\alpha)}$ converges in distribution in $D(\mathbb{R}_+)$ to the CIR process $r^{(2)}$.
\end{proposition}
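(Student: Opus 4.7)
The plan is to exploit the CBI characterization established in Corollary \ref{pro: SCIR CBI}: the $\alpha$-CIR process is a CBI process with branching mechanism $\Psi_{\alpha}(q)=aq+\frac{\sigma^{2}}{2}q^{2}-\frac{\sigma_{Z}^{\alpha}}{\cos(\pi\alpha/2)}q^{\alpha}$ and immigration rate $\Phi(q)=abq$. The first step is to compute the limit
\[
\lim_{\alpha\to 2}\Psi_{\alpha}(q)=aq+\frac{\sigma^{2}+2\sigma_{Z}^{2}}{2}q^{2}=:\Psi_{2}(q),
\]
using $\cos(\pi\alpha/2)\to -1$ and $\sigma_{Z}^{\alpha}q^{\alpha}\to \sigma_{Z}^{2}q^{2}$. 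The right-hand side is exactly the branching mechanism of a CIR process with parameters $(a,b,\sqrt{\sigma^{2}+2\sigma_{Z}^{2}})$, which is precisely $r^{(2)}$ as identified in the discussion following Definition \ref{def-SDE-root}. Since the map $q\mapsto q^{\alpha}$ is continuous in $\alpha$ on compact subsets of $\mathbb{R}_{+}$, the convergence $\Psi_{\alpha}\to\Psi_{2}$ is uniform on compacts.

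The second step is to upgrade this to convergence of Laplace transforms. Let $v_{\alpha}(t,p)$ denote the unique solution of the ODE $\partial_{t}v=-\Psi_{\alpha}(v)$ with $v(0,p)=p$. Each $\Psi_{\alpha}$ is locally Lipschitz on compact sets of $\mathbb{R}_{+}$ with a Lipschitz constant that can be chosen uniformly for $\alpha$ in a neighborhood of $2$. A standard continuous-dependence argument for ODEs (via Gronwall) then yields $v_{\alpha}(t,p)\to v_{2}(t,p)$ uniformly on compact subsets of $\mathbb{R}_{+}^{2}$. Integrating and plugging into formula \eqref{laplace}, we obtain
\[
\mathbb{E}\bigl[e^{-pr^{(\alpha)}_{t}}\bigr]=\exp\Bigl[-r_{0}v_{\alpha}(t,p)-ab\int_{0}^{t}v_{\alpha}(s,p)\,ds\Bigr]\longrightarrow\mathbb{E}\bigl[e^{-pr^{(2)}_{t}}\bigr].
\]
This gives weak convergence of one-dimensional marginals, and by the Markov property together with the multiplicative semigroup structure inherited from \eqref{laplace}, the same argument applied iteratively yields convergence of all finite-dimensional distributions.

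The remaining and technically most delicate step is tightness in $D(\mathbb{R}_{+})$. I would verify Aldous' criterion using the integral representation \eqref{lambda-integral}. The drift $a(b-r_{s})$ is handled by a uniform $L^{1}$ bound on $r^{(\alpha)}$, which follows from $\mathbb{E}[r^{(\alpha)}_{t}]=r_{0}e^{-at}+b(1-e^{-at})$ (independent of $\alpha$, since the martingale parts have zero mean). The Brownian part contributes a quadratic variation $\sigma^{2}\int r_{s}\,ds$ controlled uniformly in $\alpha$. For the jump part, rather than wrestling with the heavy tails directly, I would use the truncation argument: for a fixed threshold $y>0$, decompose the stable integral into small-jump and large-jump components as in \eqref{lambda integral-without-compensation}; the small-jump part has finite variance bounded uniformly in $\alpha\in[\alpha_{0},2]$, while the large-jump intensity $\sigma_{Z}^{\alpha}\int_{y}^{\infty}\mu_{\alpha}(d\zeta)$ stays bounded on compacts in $\alpha$. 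This controls the expected oscillation on stopping-time intervals and delivers Aldous tightness. Combining tightness with the finite-dimensional convergence yields the desired weak convergence in $D(\mathbb{R}_{+})$. The main obstacle is the uniform-in-$\alpha$ jump estimate, since one must ensure that the heavy-tailed behavior near $\alpha=2$ does not produce an asymptotic jump cluster that survives the truncation; this will be handled by an explicit second-moment computation of the compensated stable integral truncated at level $y$, which is of order $y^{2-\alpha}$ and vanishes as $y\to 0$ uniformly for $\alpha$ bounded away from $1$.
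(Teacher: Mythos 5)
Your argument is correct in outline but follows a genuinely different route from the paper. The paper also starts from the CBI characterization, but rather than passing through the Laplace exponents it verifies convergence of generators: it computes $A^{(\alpha)}e_p(x)=-e^{-px}\bigl(x\Psi_\alpha(p)+\Phi(p)\bigr)$ on the exponentials $e_p(x)=e^{-px}$, checks $\sup_{x}|A^{(\alpha)}e_p(x)-A^{(2)}e_p(x)|\to 0$, notes by Stone--Weierstrass that the linear hull of $\{e_p:p>0\}$ is dense in $C_0(\mathbb{R}_+)$ and, being invariant under the limiting semigroup, is a core for $A^{(2)}$, and then invokes Ethier--Kurtz (Corollary 8.7) to get weak convergence in $D(\mathbb{R}_+)$ in one stroke. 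That route buys tightness for free from the semigroup-convergence machinery; your route is more elementary and self-contained (uniform convergence of $\Psi_\alpha$ on compacts, Gronwall stability of the Riccati ODE, convergence of marginals and then of finite-dimensional laws via the affine Markov structure --- all of which is sound) but obliges you to prove Aldous tightness by hand. One inaccuracy there should be fixed: one has $\int_0^y\zeta^2\mu_\alpha(d\zeta)=\frac{-1}{\cos(\pi\alpha/2)\Gamma(-\alpha)}\cdot\frac{y^{2-\alpha}}{2-\alpha}$, and since $\Gamma(-\alpha)\sim\frac{1}{2(2-\alpha)}$ as $\alpha\to 2$ this equals approximately $2y^{2-\alpha}$, which does \emph{not} vanish as $y\to 0$ uniformly for $\alpha$ near $2$ (for fixed $y<1$ it tends to $1$ as $\alpha\to 2$). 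This is unavoidable: the small jumps converge to a Brownian component whose variance cannot be truncated away. Fortunately uniform vanishing is not what Aldous' criterion requires; uniform \emph{boundedness} of this truncated second moment, together with the $O(\delta)$ bound on $\mathbb{E}\int_\tau^{\tau+\delta}r_s\,ds$ and the boundedness near $\alpha=2$ of the large-jump mass $\int_y^\infty\mu_\alpha(d\zeta)\approx 2(2-\alpha)y^{-\alpha}/\alpha$, already yields the required $O(\delta)$ control of increments at stopping times. With that correction your tightness argument closes, and the proof is complete, if longer than the paper's.
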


\proof For any $\alpha\in(1,2]$, the $\alpha$-CIR process $r^{(\alpha)}$ is a CBI  process. Let $P^{(\alpha)}$ be the transition semigroup of $r^{(\alpha)}$ and $A^{(\alpha)}$ be its generator.  Denote  $e_p(x)=e^{-px}$ for $p>0$ and $x\geq0$. Then by (\ref {generator1}), 
$$A^{(\alpha)}e_p(x)=-e_p(x)\left(x\Psi_\alpha(p)+\Phi(p)\right)=-e^{-px}\Big(x\big(ap+\frac{\sigma^2}{2}p^2-\frac{\sigma_Z^{\alpha}}{\cos(\pi\alpha/2)}p^{\alpha}\big)+abp\Big).$$
We have \beqnn
\lim_{\alpha\rightarrow 2}\,\sup_{x\in\mathbb{R}_+}|A^{(\alpha)}e_p(x)-A^{(2)}e_p(x)|=0.
\eeqnn
Denote by $D_1$ the linear hull of 
$\{e_p:p>0\}$. Then $D_1$ is an algebra which strongly separates the points of $\mathbb{R}_+$. 
Let $C_0(\mathbb{R}_+)$ be the space of continuous functions on $\mathbb{R}_+$ vanishing at infinity. 
By the Stone-Weierstrass theorem, $D_1$ is dense in $C_0(\mathbb{R}_+)$. Since $D_1$ is invariant under $P^{(2)}$ (see (\ref{laplace})), it is a core of $A^{(2)}$ (see Ethier and Kurtz \cite[Proposition 3.3]{EK86}). 
By \cite[Corollary 8.7]{EK86}, we have the weak convergence of the processes as $\alpha$ tends to $2$. 
\qed
 
\begin{proposition}\label{stationary}
The  $\alpha$-CIR type process in Definition \ref{def-SDE-integral}  has a limit distribution, whose Laplace transform is given by
 \beqlb\label{limiting distribution}
 \mathbb{E}[e^{-pr_{\infty}}]=\exp\Big(-\int_0^p\frac{\Phi(q)}{\Psi(q)}dq\Big),\quad p\geq 0.
 \eeqlb
Moreover, the process is exponentially ergodic, namely  $\|\mathbb P(r_t\in\,\cdot\,)-\mathbb P(r_\infty\in\,\cdot\,)\|\leqslant C\rho^t$ for some positive constants $C$ and $\rho<1$, where $\|\cdot\|$ denotes the total variation norm.  \end{proposition}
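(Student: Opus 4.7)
The plan is to exploit the CBI Laplace transform from Proposition \ref{pro: integral CBI}, namely $\mathbb{E}_{r_0}[e^{-pr_t}]=\exp(-v(t,p)r_0-\int_0^t\Phi(v(s,p))\,ds)$ with $v$ the unique solution of $\partial_t v=-\Psi(v)$, $v(0,p)=p$. First I would analyze the long-time behaviour of $v(t,p)$. Since $\Psi$ in \eqref{equ: Psi general} satisfies $\Psi(0)=0$, $\Psi'(0+)=a>0$, and the convex correction $\int(e^{-q\sigma_Z\zeta}-1+q\sigma_Z\zeta)\mu(d\zeta)$ is nonnegative on $\mathbb{R}_+$, one has $\Psi(q)\geq aq$ for all $q\geq 0$. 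The ODE then gives $\frac{d}{dt}\log v(t,p)\leq -a$, and hence $0<v(t,p)\leq p\,e^{-at}\to 0$.

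Next, the change of variable $u=v(s,p)$ (with $ds=-du/\Psi(u)$) yields
$$\int_0^t v(s,p)\,ds=\int_{v(t,p)}^p \frac{u}{\Psi(u)}\,du\;\xrightarrow[t\to\infty]{}\;\int_0^p\frac{u}{\Psi(u)}\,du,$$
the limit being finite because $u/\Psi(u)\to 1/a$ as $u\downarrow 0$. Inserting $\Phi(q)=abq$ and letting $t\to\infty$ in the CBI Laplace transform therefore produces the announced expression. Because this limit is continuous at $p=0$ with value $1$, L\'evy's continuity theorem identifies it as the Laplace transform of a probability measure on $\mathbb{R}_+$, which we take as the law of $r_\infty$; convergence in distribution of $r_t$ to $r_\infty$ then follows, and the rate of decay of $v(t,p)$ already gives exponential convergence of the Laplace transforms.

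For the total-variation exponential ergodicity I would invoke Meyn--Tweedie's $V$-uniform ergodicity theorem with the Lyapunov function $V(x)=1+x$. A direct computation with the generator \eqref{generator1} gives $\mathcal{L}V(x)=ab-ax$, hence the drift inequality $\mathcal{L}V\leq -aV+a(1+b)$. Coupled with $\psi$-irreducibility and aperiodicity of the semigroup---which follow from the nondegenerate Brownian component when $\sigma>0$ and, in the general setting of Definition \ref{def-SDE-integral}, from the jump structure together with the strictly positive immigration rate $ab>0$ that pushes the process away from the boundary and lets it reach any nonempty open subset of $\mathbb{R}_+$ from any starting point with positive probability---standard criteria (see \cite{Li11} for the CBI-specific formulation) then upgrade the drift condition to a bound $\|\mathbb{P}(r_t\in\cdot)-\mathbb{P}(r_\infty\in\cdot)\|\leq C\rho^t$ with $\rho\in(0,1)$.

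The delicate step is the irreducibility and small-set verification required by the Meyn--Tweedie machinery: on the state space $[0,\infty)$ with a potentially degenerate diffusion one cannot directly appeal to a smooth transition density and must instead combine the strictly positive immigration with either the Gaussian noise or the infinite-activity stable jumps to conclude that compact subsets of $\mathbb{R}_+$ are small sets. Once this input is secured, the Foster--Lyapunov drift supplies the exponential contraction rate in a routine way.
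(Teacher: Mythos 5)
Your treatment of the first assertion is correct and is essentially the paper's argument made self-contained: the paper verifies the integrability condition $\int_0^1\Phi(q)/\Psi(q)\,dq<\infty$ (using $\Psi(q)\geq aq+\tfrac12\sigma^2q^2$, the same lower bound idea as your $\Psi(q)\geq aq$), cites \cite[Theorem 3.20]{Li11} for existence of the limit law with Laplace transform $\exp(-\int_0^\infty\Phi(v(t,p))\,dt)$, and then performs exactly your change of variables $q=v(t,p)$. Your direct derivation of $0<v(t,p)\leq pe^{-at}$ from $\partial_t v=-\Psi(v)\leq -av$ and the explicit passage to the limit in $\int_{v(t,p)}^p u\,du/\Psi(u)$ is a legitimate substitute for the citation, and the L\'evy continuity argument at $p=0$ is sound. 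One small caveat: pointwise exponential convergence of Laplace transforms, which you mention in passing, does not by itself yield anything about total variation, so it contributes nothing to the second assertion.

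For the exponential ergodicity there is a genuine gap, which you yourself flag but do not close. The Foster--Lyapunov computation $\mathcal{L}V(x)=ab-ax=-aV(x)+a(1+b)$ with $V(x)=1+x$ is correct, but the Meyn--Tweedie machinery only delivers $\|\mathbb P(r_t\in\cdot)-\mathbb P(r_\infty\in\cdot)\|\leq C\rho^t$ once $\psi$-irreducibility, aperiodicity and the small-set property of compacts are established, and in the setting of Definition \ref{def-SDE-integral} this is not routine: when $\sigma=0$ the jump measure is one-sided, so the process can only move downward along the deterministic drift, and there is no smooth transition density to appeal to. A minorization of the transition kernel on compacts must be built from the immigration and the jump structure, and this is precisely the content of \cite[Theorem 2.5]{LM13}, which the paper invokes for this step. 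As written, your proof of the second assertion reduces to ``once this input is secured,'' i.e., it assumes the one nontrivial ingredient. Either carry out the small-set/minorization argument or cite a result that does (the Li--Ma reference above, or an equivalent ergodicity theorem for CBI processes); without that, the total variation bound is not proved.
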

\proof 
Note that the branching mechanism $\Psi$ is bounded from below by $aq+\frac 12\sigma^2q^2$. Hence one has
\[\int_0^1\frac{\Phi(q)}{\Psi(q)}dq\leqslant\int_0^1\frac{abq}{aq+\frac 12\sigma^2q^2}dq<\infty.\]
By \cite[Theorem 3.20]{Li11}, we obtain that the process $r$ in Definition \ref{def-SDE-integral} has a limit distribution, whose  Laplace transform is given by
$\exp\big(-\int_0^\infty\Phi(v(t,p))dt\big)$,
where the function $v$ is defined in \eqref{ODE0}. A change of variables $q=v(t,p)$ in the above formula leads to \eqref{limiting distribution}.  The last assertion follows from \cite[Theorem 2.5]{LM13}.\qed

Finally, we show that the usual condition of inaccessibility of the point $0$ is preserved when we extend CIR model to the $\alpha$-CIR one.
 
\begin{proposition}\label{inaccessible 0}
For the  $\alpha$-CIR $(a,b,\sigma,\sigma_Z,\alpha)$ process with $\alpha\in(1,2)$, the point $0$ is an inaccessible boundary  if and only if $2a b\geq\sigma^2$. In particular, a pure jump $\alpha$-CIR process  with $ab>0$ never reaches $0$.   
 \end{proposition}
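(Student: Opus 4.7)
The plan is to leverage the CBI representation of the $\alpha$-CIR process established in Corollary~\ref{pro: SCIR CBI}: $r$ is a CBI process with branching mechanism
$\Psi_\alpha(q)=aq+\tfrac{\sigma^2}{2}q^{2}-\tfrac{\sigma_Z^{\alpha}}{\cos(\pi\alpha/2)}q^{\alpha}$
and linear immigration rate $\Phi(q)=abq$. A classical criterion for the inaccessibility of $0$ for a CBI process (see e.g.\ the corresponding chapter of \cite{Li11}, building on Duhalde--Foucart--Ma) states that, under the natural regularity conditions satisfied here ($\Psi_\alpha$ subcritical with $\int^{\infty}dq/\Psi_\alpha<\infty$), the point $0$ is inaccessible if and only if
\beqnn
I:=\int_{q_{0}}^{\infty}\frac{1}{\Psi_\alpha(q)}\exp\Big(\int_{q_{0}}^{q}\frac{\Phi(u)}{\Psi_\alpha(u)}\,du\Big)\,dq=+\infty
\eeqnn
for some (equivalently every) $q_{0}>0$. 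As a sanity check, specialising to $\sigma_{Z}=0$ recovers Feller's classical condition $2ab\geq\sigma^{2}$ for the standard CIR model.

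The proof then reduces to an asymptotic analysis of the integrand at infinity, separating the two regimes $\sigma>0$ and $\sigma=0$. When $\sigma>0$, the quadratic term dominates: $\Psi_\alpha(q)\sim\tfrac{\sigma^{2}}{2}q^{2}$, hence $\Phi(u)/\Psi_\alpha(u)\sim 2ab/(\sigma^{2}u)$ and $\int_{q_{0}}^{q}\Phi/\Psi_\alpha\,du\sim (2ab/\sigma^{2})\log q$. The integrand is therefore of order $q^{2ab/\sigma^{2}-2}$, so $I=+\infty$ if and only if $2ab/\sigma^{2}-2\geq -1$, that is $2ab\geq\sigma^{2}$; the borderline case $2ab=\sigma^{2}$ sits on the divergent side (integrand $\sim q^{-1}$), which accounts for the non-strict inequality in the statement. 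When $\sigma=0$, the $\alpha$-stable term dominates: $\Psi_\alpha(q)\sim C_{\alpha}q^{\alpha}$ with $C_{\alpha}:=-\sigma_{Z}^{\alpha}/\cos(\pi\alpha/2)>0$, and $\int_{q_{0}}^{q}\Phi/\Psi_\alpha\,du$ grows like a positive multiple of $q^{2-\alpha}$. The exponential factor therefore explodes super-polynomially, the integrand is unbounded at infinity, and $I=+\infty$ whenever $ab>0$; this yields the ``in particular'' claim, consistently with $2ab\geq 0=\sigma^{2}$ being automatic in the pure-jump regime.

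The principal obstacle is identifying and correctly applying the CBI inaccessibility criterion above; verifying its hypotheses is immediate from the explicit form of $\Psi_\alpha$ (one has $\Psi_\alpha'(0+)=a>0$, and the growth $\Psi_\alpha(q)\gtrsim q^{\alpha}$ at infinity with $\alpha>1$ ensures $\int^{\infty}dq/\Psi_\alpha<\infty$). Once this is in place, the remaining asymptotic computation carried out above is routine and handles both the equivalence and the pure-jump corollary simultaneously.
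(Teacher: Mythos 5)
Your proof is correct and follows essentially the same route as the paper: both reduce the question to the Duhalde--Foucart--Ma integral criterion for CBI processes with $\Psi_\alpha$ and $\Phi(q)=abq$, and then decide divergence of the integral from the behaviour of $\Psi_\alpha$ at infinity. The only cosmetic difference is that for the ``only if'' direction the paper compares with the CIR mechanism via $\Psi_\alpha\geq \Psi^*=aq+\tfrac{\sigma^2}{2}q^2$ and invokes the classical Feller criterion, whereas you compute the asymptotics of the integrand ($\sim q^{2ab/\sigma^2-2}$) directly; the substance is the same.
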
 
 \proof We apply the result of Duhalde, Foucart and Ma \cite[Theorem 2]{{DFM14}} for CBI processes to obtain that 
 $0$ is an inaccessible boundary point for an $\alpha$-CIR type process  if and only if 
\beqlb\label{integral condition}
  \int_\theta^\infty \frac{dz}{\Psi(z)}\exp\Big(\int_\theta^z\frac{\Phi(x)}{\Psi(x)}dx\Big)=\infty
  \eeqlb
 for some positive constant $\theta$, where $\Psi$ is given by (\ref{equ: Psi general}) and $\Phi(q)=abq$. We now focus on the $\alpha$-CIR process.  Let $\Psi^*(q)=aq+\sigma^2q^2/2$ be the branching mechanism of the classical CIR process viewed as a CBI process. One has $\Psi_\alpha\geq\Psi^*$, where $\Psi_\alpha$ is the branching mechanism of the $\alpha$-CIR process, given in \eqref{equ: Psi SCIR}. Therefore 
\[\int_\theta^\infty\frac{dz}{\Psi_\alpha(z)}\exp\Big(\int_\theta^z\frac{\Phi(x)}{\Psi_\alpha(x)}dx\Big)\leq\int_\theta^\infty\frac{dz}{\Psi^*(z)}\exp\Big(\int_\theta^z\frac{\Phi(x)}{\Psi^*(x)}dx\Big).\]
In particular, if $0$ is an inaccessible boundary for the $\alpha$-CIR$(a,b,\sigma,\sigma_Z,\alpha)$ process, then the inequality $2ab\geq\sigma^2$ holds, thanks to the classical inaccessibility criterion for the CIR processes.

Conversely, if the inequality $2ab\geq\sigma^2$ holds, then one has
\[\frac{\Phi(x)}{\Psi_\alpha(x)}=\frac{1}{x}(1+O(x^{\alpha-2})),\quad x\rightarrow+\infty.\]
So there exists a constant $C>0$ (depending on $\theta$) such that 
\[\int_\theta^z\frac{\Phi(x)}{\Psi_\alpha(x)}dx\geq\log(z/\theta)-C.\]
Hence
\[\int_\theta^\infty\frac{dz}{\Psi_\alpha(z)}\exp\Big(\int_\theta^z\frac{\Phi(x)}{\Psi_\alpha(x)}dx\Big)\geqslant\frac{1}{e^C\theta}\int_\theta^\infty\frac{z}{\Psi_\alpha(z)}dz=+\infty.\]
  \qed 

\begin{remark}
The result of Proposition \ref{inaccessible 0} is not true when $\alpha=2$. In this case the $\alpha$-CIR model reduces to a classical CIR model, but with a modified volatility term. Therefore for the $\alpha$-CIR$(a,b,\sigma,\sigma_Z,2)$ process, the point $0$ is an inaccessible boundary if and only if $2ab\geq\sigma^2+2\sigma_Z^2$. We note that when the $\alpha$-CIR process contains the jump part,  the parameter $\sigma_Z$ does not intervene in the boundary condition by the above proposition. 
\end{remark}

\section{Applications to interest rate modeling}\label{sec:interest-rate}

In this section, we apply the $\alpha$-CIR model to the interest rate modelling and pricing. 
Since the $\alpha$-CIR model is a generalization of  the classical CIR model
by adding jumps but preserving the CBI properties, the bond price has an affine structure, see Filipovi\'c \cite{F01}. We give a closed-form expression of the bond price depending on a function which is the integral of the reciprocal of $1- \Psi_\alpha$. This integral can be easily computed numerically, and so is semi-explicit formula for the Laplace transform of the integrated interest rate. Moreover, we show that the bond price is decreasing with respect to the index parameter $\alpha$.
In the next part, we focus on a path dependent option, more precisely a put option written on the running minimum of the bond yield. We show that the payoff of this option can be rewritten as a put option written on the running minimum of the spot rate itself with different nominal and strike. Despite the non-Markovian behavior and the non linearity of the payoff, its price can be obtained  by inversion of the 
Laplace transform.

\subsection{Zero-coupon bond pricing}  

We begin by making precise the equivalent probability measures.
The following proposition shows that the short interest rate $r$, given by the $\alpha$-CIR model, remains to be in the class of integral type processes under an equivalent change of probability.  

\begin{proposition}\label{pro:changementprob} Let $r$ be  an $\alpha$-CIR$(a,b,\sigma,\sigma_Z,\alpha)$ processes under the probability measure $\mathbb P$ and assume that the  filtration $\mathbb F$ is generated by the random fields $W$ and $\widetilde N$.  Fix 
$\eta\in\mathbb{R}$ and $\theta\in\mathbb{R}_+$, and define
 \beqnn
 U_t:=\eta\int_0^t\int_0^{r_s}W(ds,du)+\int_0^t\int_0^{r_{s-}}\int_0^\infty (e^{-\theta\zeta}-1)\widetilde{N}(ds,du,d\zeta).
 \eeqnn
Then the Dol\'eans-Dade exponential $\mathcal{E} (U)$ 
is a martingale and the probability 
measure $\mathbb Q$ defined by 
 \beqnn
\left. \frac{d\mathbb Q}{d\mathbb P}\right|_{\mathcal{F}_t}=\mathcal{E} (U)_t,
 \eeqnn
 is equivalent to $\mathbb P$. Moreover, under $\mathbb Q$, $r$ is an $\alpha$-CIR type process with the parameters 
$(a',b',\sigma',\sigma_Z',\mu_\alpha')$, where 
$$a'=a-\sigma\eta-\frac{\alpha\sigma_Z}{\cos(\pi\alpha/2)}\theta^{\alpha-1}, \quad b'=ab/a', \quad \sigma'=\sigma,\quad \sigma_Z'=\sigma_Z$$ and 
\beqnn
\mu_\alpha'(d\zeta)=-\frac{e^{-\theta\zeta}}{\cos(\pi\alpha/2)\Gamma(-\alpha)\zeta^{1+\alpha}}d\zeta.
\eeqnn
\end{proposition}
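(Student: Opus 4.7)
The plan is to proceed in three steps: first verify that $\mathcal{E}(U)$ is a true $\mathbb{P}$-martingale, so that $\mathbb{Q}$ is a bona fide probability measure; then apply Girsanov's theorem for martingale measures to identify the $\mathbb{Q}$-compensators of $W$ and $N$; and finally substitute into \eqref{lambda-integral} to recognize the resulting SDE as an $\alpha$-CIR type equation with the claimed parameters.

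For the martingale property, note that since $\theta\geq 0$ the jumps of $U$ satisfy $\Delta U_s=e^{-\theta\zeta}-1\in(-1,0]$, so $\mathcal{E}(U)>0$ is a nonnegative local martingale, hence a supermartingale. To promote it to a true martingale I would localize with $\tau_n=\inf\{t\geq 0:r_t\geq n\}$, which tend to $+\infty$ almost surely since the CBI process $r$ does not explode. On each $[0,\tau_n]$ the integrands in $U$ are bounded, so $\mathcal{E}(U^{\tau_n})$ is a genuine martingale, and $\mathbb{E}[\mathcal{E}(U)_t]=1$ follows by taking $n\to\infty$, using the exponential-moment control of $\int_0^t r_s\,ds$ available from the affine Laplace transform of Proposition \ref{pro: integral CBI}. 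This is the step I expect to require most care, since Novikov's criterion does not apply directly to the heavy-tailed jump component.

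For the Girsanov step, the continuous martingale measure theorem gives that
\[W^{\mathbb{Q}}(ds,du):=W(ds,du)-\eta\,\mathbf{1}_{\{u\leq r_s\}}\,ds\,du\]
is, under $\mathbb{Q}$, a centred orthogonal continuous martingale measure with unchanged quadratic variation $ds\,du$; by L\'evy's characterization it is a white noise on $\mathbb{R}_+^2$ with intensity $ds\,du$. For the jump part, the $\mathbb{Q}$-compensator of $N$ becomes
\[\nu^{\mathbb{Q}}(ds,du,d\zeta)=\mathbf{1}_{\{u>r_{s-}\}}\,ds\,du\,\mu_\alpha(d\zeta)+\mathbf{1}_{\{u\leq r_{s-}\}}\,ds\,du\,\mu_\alpha'(d\zeta),\]
which on the region $\{u\leq r_{s-}\}$ relevant to \eqref{lambda-integral} reduces to $ds\,du\,\mu_\alpha'(d\zeta)$. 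Following the construction of Proposition \ref{2-1}, I would then enlarge the probability space by an independent $\mathbb{Q}$-Poisson random measure of intensity $ds\,du\,\mu_\alpha'(d\zeta)$ supported on $\{u>r_{s-}\}$ to produce a genuine Poisson random measure $\widetilde N'$ on $\mathbb{R}_+^3$ with intensity $ds\,du\,\mu_\alpha'(d\zeta)$ under $\mathbb{Q}$.

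Substituting these representations into \eqref{lambda-integral} introduces an additional drift
\[\Big(\sigma\eta+\sigma_Z\int_0^\infty\zeta(e^{-\theta\zeta}-1)\mu_\alpha(d\zeta)\Big)r_s\,ds.\]
The explicit evaluation
\[\int_0^\infty \zeta(1-e^{-\theta\zeta})\mu_\alpha(d\zeta)=-\frac{\alpha\theta^{\alpha-1}}{\cos(\pi\alpha/2)},\]
obtained via the substitution $u=\theta\zeta$ together with the identity $\Gamma(2-\alpha)=\alpha(\alpha-1)\Gamma(-\alpha)$, recasts the drift as $a'(b'-r_s)$ with $a'=a-\sigma\eta-\frac{\alpha\sigma_Z}{\cos(\pi\alpha/2)}\theta^{\alpha-1}$ and $b'=ab/a'$, completing the identification of $r$ under $\mathbb{Q}$ as an $\alpha$-CIR type process with parameters $(a',b',\sigma,\sigma_Z,\mu_\alpha')$.
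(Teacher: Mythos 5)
The one genuine gap is in your martingale step, and unfortunately it sits at exactly the point you yourself flag as delicate. You propose to pass to the limit in $\mathbb{E}[\mathcal{E}(U)_{t\wedge\tau_n}]=1$ ``using the exponential-moment control of $\int_0^t r_s\,ds$ available from the affine Laplace transform.'' No such control exists when $\alpha\in(1,2)$: the measure $\mu_\alpha$ has a power tail, so $r_t$ and $\int_0^t r_s\,ds$ have power-law tails and $\mathbb{E}\bigl[\exp\bigl(c\int_0^t r_s\,ds\bigr)\bigr]=+\infty$ for every $c>0$. Correspondingly, the Laplace transforms of Propositions \ref{pro: integral CBI} and \ref{prop 1} are only stated (and only valid) for nonnegative arguments, and the branching mechanism $\Psi_\alpha(q)$ is not even defined for $q<0$. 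So the tool you invoke for the uniform-integrability/limit-passage step is unavailable, and Novikov-type arguments fail here for the same reason. The standard repair keeps your localization but replaces the moment bound by a non-explosion argument: writing $1=\mathbb{E}[\mathcal{E}(U)_t 1_{\{\tau_n>t\}}]+\mathbb{E}[\mathcal{E}(U)_{\tau_n}1_{\{\tau_n\le t\}}]$ and using monotone convergence gives $1-\mathbb{E}[\mathcal{E}(U)_t]=\lim_n \mathbb{Q}_n(\tau_n\le t)$, where $\mathbb{Q}_n$ has density $\mathcal{E}(U)_{t\wedge\tau_n}$; this limit vanishes because under $\mathbb{Q}_n$ the stopped process follows the (conservative, non-explosive) $\alpha$-CIR-type dynamics with the tempered measure $\mu_\alpha'$. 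That is essentially what the paper's appeal to the exponentially affine martingale criterion of Kallsen and Muhle-Karbe \cite{KMK2010} encapsulates, after noting that $(r,U)$ is affine.

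Apart from this, your argument is correct and takes a genuinely different route from the paper for the second assertion. The paper never transforms the driving noises: it identifies the $\mathbb{Q}$-dynamics abstractly by checking that $Y_tf(r_t)-\int_0^t Y_s\mathcal{L}'f(r_s)\,ds$ is a local $\mathbb{P}$-martingale for the candidate generator $\mathcal{L}'$, i.e.\ a martingale-problem verification. You instead compute the $\mathbb{Q}$-compensators of $W$ and $N$ explicitly, patch the tempered Poisson measure off the relevant region as in Proposition \ref{2-1}, and evaluate $\int_0^\infty\zeta(1-e^{-\theta\zeta})\mu_\alpha(d\zeta)=-\alpha\theta^{\alpha-1}/\cos(\pi\alpha/2)$; all of this is right (including the Gamma-function identity) and has the merit of exhibiting the new driving random fields and making the origin of $a'$, $b'$ and $\mu_\alpha'$ transparent, at the cost of the extra enlargement-of-space construction that the generator approach avoids.
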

\proof  The couple $(r, U)$ is a time homogeneous affine process (c.f. \cite[Theorem 6.2]{DL06}).
The Dol\'eans-Dade exponential $\mathcal{E} (U)$ is a true martingale 
by checking that the conditions in \cite[Corollary 3.2]{KMK2010} are satisfied, so it defines an equivalent probability measure $\mathbb Q$.
Note that $Y=\mathcal{E} (U)$ is the unique strong solution of $dY_t=Y_{t-}dU_t$. Then for any function  $f\in C^2(\mathbb{R_+})$, the process  
\beqnn
&&Y_tf(r_t)-\int_0^tY_sf'(r_s)\bigg(ab-\Big(a-\sigma\eta-\sigma_Z\int_0^\infty \zeta(e^{-\theta\zeta}-1)\mu_\alpha(d\zeta)\Big)r_s\bigg)ds-\frac{\sigma^2}{2}\int_0^tY_sf''(r_s)r_sds\\
&&\quad\quad\quad-\int_0^t Y_sr_sds\int_0^\infty\Big(f(r_{s-}+\sigma_Z\zeta)-f(r_s)-f'(r_{s-})\sigma_Z\zeta\Big)e^{-\theta\zeta}\mu_\alpha(d\zeta), \quad t\geq 0
\eeqnn
is a local martingale,  which implies that under $\mathbb Q$, $r$ is an $\alpha$-CIR type process with the parameters $(a',b',\sigma',\sigma_Z',\mu_\alpha')$.\qed

\begin{remark}Usually we choose $\eta$ and $\theta$ such that $a'>0$. When $\theta=0$, $\mu_\alpha'$ coincides with $\mu_\alpha$ given in \eqref{Levymeasure}, so that  an $\alpha$-CIR process will remain in the same class under an equivalent change of probability measure. When $\theta>0$, the $\alpha$-CIR process becomes an $\alpha$-CIR type process driven by a tempered stable process under the change of probability measure. In this case, we can apply the following result on the general CBI processes to compute the bond prices. 
\end{remark}


As highlighted by Filipovi\'c \cite{F01,F02}, a large class of bond options admits a nice expression via the exponential affine transformation, see  \cite[Theorem 10.5]{F02} and \cite[Section 6]{F01}. 
The next proposition gives a general result about the joint Laplace transform of a CBI process and its integrated process, which will be useful for the bond pricing. 

\begin{proposition}\label{prop 1}Let $X$ be a CBI $(\Psi,\Phi)$ process given by \eqref{laplace} with $X_0=x$. For non-negative real numbers $\xi$ and $\theta$, we have
\beqlb\label{non consertative}
{\mathbb E}_x\Big[e^{-\xi X_t-\theta\int_0^t X_s ds}\Big]
=\exp\Big(-xv(t,\xi,\theta)-\int_0^t \Phi\big(v(s,\xi,\theta)\big)ds\Big),
\eeqlb
where $v(t,\xi,\theta)$ is the unique solution of
 \beqlb\label{ODE1}
\frac{\partial v(t,\xi,\theta)}{\partial t}=-\Psi(v(t,\xi,\theta))+\theta, \quad v(0,\xi,\theta)=\xi.
 \eeqlb
\end{proposition}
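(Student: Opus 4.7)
The plan is to adapt the verification argument already employed in the proof of Proposition \ref{pro: integral CBI} by incorporating a Feynman--Kac type exponential discount. First I would check that the ODE \eqref{ODE1} admits a unique non-negative $C^1$ solution $t\mapsto v(t,\xi,\theta)$ on $\mathbb R_+$: the vector field $q\mapsto -\Psi(q)+\theta$ is locally Lipschitz on $\mathbb R_+$, and since $\Psi(0)=0\leq\theta$, the non-negativity of $v$ starting from $v(0,\xi,\theta)=\xi\geq 0$ is preserved.

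Fix $t>0$ and set $u(s):=v(t-s,\xi,\theta)$ for $0\leq s\leq t$, so that $u(0)=v(t,\xi,\theta)$, $u(t)=\xi$, and $u'(s)=\Psi(u(s))-\theta$. I would then introduce the candidate martingale
\[
M_s:=\exp\Bigl(-u(s)X_s-\theta\int_0^s X_l\,dl+\int_0^s\Phi(u(l))\,dl\Bigr),\quad 0\leq s\leq t.
\]
A direct computation using the CBI generator \eqref{generator1} applied to $x\mapsto e^{-u(s)x}$ yields the identity $\mathcal L(e^{-u(s)\cdot})(x)=e^{-u(s)x}\bigl[x\Psi(u(s))-\Phi(u(s))\bigr]$. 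Applying It\^o's formula to $M_s$ (the smooth time dependence being handled as in \cite[Theorem~9.18]{Li11}), the finite-variation part is the integral of
\[
M_{s-}\Bigl[-u'(s)X_s-\theta X_s+X_s\Psi(u(s))-\Phi(u(s))+\Phi(u(s))\Bigr],
\]
which vanishes identically thanks to the ODE $u'(s)=\Psi(u(s))-\theta$. Hence $M$ is a local martingale on $[0,t]$.

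The main technical step, and the one I expect to be the real obstacle, is to promote $M$ from a local to a genuine martingale, since \eqref{non consertative} requires taking expectation at the deterministic time $t$. Here the non-negativity hypotheses on $\xi$ and $\theta$ pay off: the non-negativity of $u$ established above, together with $X\geq 0$ almost surely (which is the defining property of a CBI process), implies $\exp(-u(s)X_s-\theta\int_0^s X_l\,dl)\leq 1$, while the deterministic factor $\exp(\int_0^s\Phi(u(l))\,dl)$ is bounded on the compact interval $[0,t]$ by continuity of $u$ and $\Phi$. Consequently $M$ is uniformly bounded on $[0,t]$, hence a true martingale.

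Finally, equating $\mathbb E_x[M_t]=M_0$ gives
\[
\mathbb E_x\Bigl[e^{-\xi X_t-\theta\int_0^tX_l\,dl}\Bigr]\exp\Bigl(\int_0^t\Phi(u(l))\,dl\Bigr)=\exp\bigl(-v(t,\xi,\theta)\,x\bigr),
\]
and the change of variable $s=t-l$ in the deterministic integral transforms $\int_0^t\Phi(u(l))\,dl$ into $\int_0^t\Phi(v(s,\xi,\theta))\,ds$, yielding exactly the announced formula \eqref{non consertative}. No additional moment assumption on $\pi$ or $\nu$ beyond \eqref{levy measure moment} is needed, and the uniqueness statement for $v$ comes from standard Picard--Lindel\"of theory applied to \eqref{ODE1}.
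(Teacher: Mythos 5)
Your proposal is correct, and it reaches the conclusion by a genuinely different route than the paper. Both arguments share the same starting point, namely that $f(X_t)-f(x)-\int_0^t\mathcal Lf(X_s)\,ds$ is a local martingale for the generator \eqref{generator1}, and both rest on the identity $\mathcal L\bigl(e^{-q\cdot}\bigr)(x)=e^{-qx}\bigl(x\Psi(q)-\Phi(q)\bigr)$. From there, however, the paper introduces the Feynman--Kac semigroup $Q_tf(x)=\mathbb E_x\bigl[f(X_t)e^{-\theta\int_0^tX_s\,ds}\bigr]$, identifies it (via Kawazu--Watanabe) as the semigroup of a \emph{non-conservative} CBI process with generator $\mathcal Af(x)=\mathcal Lf(x)-\theta xf(x)$, and reads off the Laplace-transform formula from the general theory of such processes. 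You instead run a self-contained verification in the spirit of the paper's own proof of Proposition \ref{pro: integral CBI}: the explicit process $M_s$ built from the time-reversed solution $u(s)=v(t-s,\xi,\theta)$ of \eqref{ODE1} is a local martingale because the drift term $M_sX_s\bigl(-u'(s)+\Psi(u(s))-\theta\bigr)$ vanishes, and it is a true martingale because it is bounded on $[0,t]$ (using $u\geq 0$, $X\geq 0$, $\theta\geq 0$ and the continuity of $\Phi\circ u$). Your route buys transparency: it makes the local-to-true martingale upgrade explicit, whereas the paper delegates this to the cited classification result, and it requires nothing beyond the generator identity and elementary ODE theory (local Lipschitzness of $\Psi$ follows from \eqref{levy measure moment}; one small addition worth making is that the bound $-\Psi(q)+\theta\leq|\beta|q+\theta$ for $q\geq 0$ rules out finite-time blow-up of $v$, so the solution is global). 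The paper's route is shorter and places the result inside the general theory of non-conservative CBI processes, which is also what justifies uniqueness and well-posedness there. Both proofs finish with the same change of variables turning $\int_0^t\Phi(u(l))\,dl$ into $\int_0^t\Phi(v(s,\xi,\theta))\,ds$.
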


\proof For any function $f\in C^2(\mathbb{R_+})$, $(f(X_t)-f(x)-\int_0^t \mathcal{L}f(X_s) ds,\;  t\geq 0)$ is a local martingale, where 
the operator $\mathcal{L}$ given by (\ref{generator1}).
Then\beqnn
f(X_t)e^{-\theta\int_0^t X_s ds}-f(x)-\int_0^t {e^{-\theta\int_0^tX_sds}}\big(\mathcal{L} f(X_s)-\theta X_sf(X_s)\big)ds, \quad t\geq 0
\eeqnn
is also a local martingale. 
Denote the Feynman-Kac semigroup by $Q_t$ and the corresponding process by $\overline{X}$ as follows:
\beqnn
Q_tf(x)=\overline{\mathbb{E}}_x\left[f\left(\overline{X}_t\right)\right]:=\mathbb{E}_x\Big[f(X_t)e^{-\theta\int_0^tX_sds}\Big].
\eeqnn
Then $\overline{X}$ is a non-conservative 
(in other words the process may explode in finite time)
CBI process with generator 
defined by $\mathcal{A}f(x)=\mathcal{L}f(x)-\theta xf(x)$  
by  \cite[Theorem 1.1]{KaW71} and implies 
in addition that  
 \beqnn
\overline{\mathbb{E}}_x[e^{-\xi \overline{X}_t}]=\exp\Big(-xv(t,\xi,\theta)-\int_0^t\Phi(v(s,\xi,\theta))ds\Big),
\eeqnn
where $v(\cdot,\xi,\theta)$ is the unique solutions of 
 \beqnn\label{ODE1-bis}
\frac{\partial v(t,\xi,\theta)}{\partial t}=-\Psi(v(t,\xi,\theta))+\theta, \quad v(0,\xi,\theta)=\xi.
 \eeqnn
The proposition is thus proved.
\qed


The above proposition allows to compute the zero-coupon bond price with a short rate driven by $\alpha$-CIR model, under an equivalent  probability measure. In the following, we give the zero-coupon price when the short rate $r$ satisfies the $\alpha$-CIR model of parameter $(a,b,\sigma,\sigma_Z,\alpha)$ under the equivalent risk-neutral probability $\mathbb Q$, and analyze its decreasing property with respect to $\alpha$.  Recall that the value of a zero-coupon bond of maturity $T$ at time  $t\leq T$ is given by 
\begin{equation}\label{defaultable bond zero recovery}
B(t,T)=\mathbb E^{\mathbb Q}\Big[\exp\Big(-\int_t^Tr_sds\Big)\,|\,\mathcal F_t\Big].
\end{equation}
For the sake of simplicity,  we will use the notation 
$\mathbb E$ in place of $\mathbb E^{\mathbb Q}$.

\begin{proposition}\label{prop:bond price} Let the short rate $r$ be given by the $\alpha$-CIR model \eqref{lambda-root} under the probability measure $\mathbb Q$.
Then the zero-coupon bond price is given by 
\begin{equation}\label{bond formula}B(t,T)=\exp\Big(-r_t v(T-t)- a b\int_0^{T-t}v(s)ds\Big),\end{equation}
where $v(s)$ is the unique solution of the equation
 \beqlb\label{ODE2}
\frac{\partial v(t)}{\partial t}=1-\Psi_\alpha(v(t)), \quad v(0)=0,
 \eeqlb
with $\Psi_\alpha(q)=a q+\frac{\sigma^2}{2}q^2-\frac{\sigma_Z^\alpha}{\cos(\pi\alpha/2)}q^\alpha$ as in \eqref{equ: Psi SCIR}. Moreover, we have 
\begin{equation}\label{function-f}
v(t)=f^{-1}(t)\, \text{ where }\, f(t) = \int_0^t  \frac{dx}{1-\Psi_\alpha(x)}
\end{equation}

\end{proposition}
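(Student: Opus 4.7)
The plan is to combine three earlier results: the CBI characterization of the $\alpha$-CIR process (Corollary \ref{pro: SCIR CBI}), the Markov property of $r$, and the joint Laplace transform formula of Proposition \ref{prop 1}. By Corollary \ref{pro: SCIR CBI}, under $\mathbb Q$ the process $r$ is a CBI process with branching mechanism $\Psi_\alpha$ given by \eqref{equ: Psi SCIR} and immigration rate $\Phi(q)=abq$. The Markov property combined with time-homogeneity turns \eqref{defaultable bond zero recovery} into
\[
B(t,T)=\mathbb E_{r_t}\Big[\exp\Big(-\int_0^{T-t} r_s\,ds\Big)\Big].
\]

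Next I would apply Proposition \ref{prop 1} with parameters $\xi=0$ and $\theta=1$, which yields
\[
\mathbb E_{x}\Big[\exp\Big(-\int_0^{\tau} r_s\,ds\Big)\Big]
=\exp\Big(-x\,v(\tau,0,1)-\int_0^{\tau}\Phi\big(v(s,0,1)\big)\,ds\Big),
\]
where $v(\cdot,0,1)$ is the unique solution of $\partial_t v=-\Psi_\alpha(v)+1$ with $v(0,0,1)=0$. Setting $v(\tau):=v(\tau,0,1)$ recovers exactly the ODE \eqref{ODE2}, and since $\Phi(q)=abq$, the integral term becomes $ab\int_0^{T-t}v(s)\,ds$. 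Evaluating at $x=r_t$ and $\tau=T-t$ gives the affine expression \eqref{bond formula}.

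The remaining task is to justify the closed form \eqref{function-f}. Observe that $\Psi_\alpha$ is smooth, convex, and strictly increasing on $\mathbb R_+$ with $\Psi_\alpha(0)=0$ (note that $-1/\cos(\pi\alpha/2)>0$ for $\alpha\in(1,2]$, so all three coefficients in \eqref{equ: Psi SCIR} contribute a non-negative, strictly increasing term on $\mathbb R_+$). Hence there exists a unique $q^\ast>0$ with $\Psi_\alpha(q^\ast)=1$, and on $[0,q^\ast)$ the right-hand side of the ODE is strictly positive. Since $v(0)=0$ and $\dot v(0)=1>0$, the solution $v$ is strictly increasing and remains in $[0,q^\ast)$ for all $t\ge 0$ (as $\dot v\to 0$ when $v\to q^\ast$). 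Separating variables, $dv/(1-\Psi_\alpha(v))=dt$, and integrating from $0$ to $t$ gives $f(v(t))=t$; strict monotonicity of $f$ on $[0,q^\ast)$ makes $f$ invertible, whence $v(t)=f^{-1}(t)$.

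The conceptually subtle step is the passage from Proposition \ref{prop 1} (stated with the CBI generator) to the $\alpha$-CIR process: one must confirm that the semigroup extension which appears in the proof of Proposition \ref{prop 1}, corresponding to the Feynman–Kac penalization by $\theta\int_0^t X_s\,ds$, applies to our choice $\theta=1$ even though the resulting process is a priori only non-conservative. This is not a real obstacle, however, because the formula \eqref{non consertative} is stated for all non-negative $\theta$ and the CBI identification of $r$ holds under any equivalent measure in the class described by Proposition \ref{pro:changementprob}. The rest is just bookkeeping.
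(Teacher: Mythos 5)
Your proposal is correct and follows essentially the same route as the paper: identify $r$ as a CBI process, apply Proposition \ref{prop 1} with $\xi=0$, $\theta=1$ to get the affine formula with the ODE \eqref{ODE2}, and then obtain $v=f^{-1}$ by separation of variables using monotonicity and convexity of $\Psi_\alpha$ and the unique root of $\Psi_\alpha(x)=1$. The only addition is your explicit remark on the non-conservative Feynman--Kac semigroup, which the paper leaves implicit inside the proof of Proposition \ref{prop 1}.
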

\proof  Applying Propositions \ref{prop 1} with $\xi= 0$ and $\theta=1$, we have 
\[\mathbb E \left[\left. e^{-\int_t^Tr_sds}\right|\mathcal F_t\right]
=\exp\Big(-r_t v(T-t)- a b\int_0^{T-t}v(s)ds\Big),\]
where $v(t)$ is the unique solution of (\ref{ODE2})
with $\Psi_\alpha$ given in \eqref{equ: Psi SCIR}. Since $\Psi_\alpha(\cdot)$ is a nonnegative, increasing and convex function,
the equation $\Psi_\alpha(x)=1$ has the unique solution denoted by $x_0$. For $0\leq x<x_0$, $1-\Psi_\alpha(x)>0$. Note that $f(u)$ is strictly increasing in $u\in[0,x_0)$ and 
$f(u)\rightarrow\infty$ as $u\rightarrow x_0$.  It follows from (\ref{ODE2})
that
 \beqnn
 \int_0^{v(t)}  \frac{dv}{1-\Psi_\alpha(v)}=t.
 \eeqnn
Let $t$ tend to infinity on both sides of the above equality.  Then $v(t)\rightarrow x_0$ as $t\rightarrow\infty$ and $v(t)<x_0$ for any $t\geq0$.
Also by (\ref{ODE2}), $v(t)$ is strictly increasing. So one has $v(t)=f^{-1}(t)$.\qed

\begin{proposition}\label{decrasing-bond} The function $v$ is increasing with respect to $\alpha\in(1,2]$. In particular, the bond price $B(0,T)$ is decreasing with respect to $\alpha$.
\end{proposition}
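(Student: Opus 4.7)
The plan is to derive the monotonicity of $v$ in $\alpha$ from a pointwise comparison of the branching mechanisms $\Psi_\alpha$ along the trajectory of $v$, and then read off the bond-price monotonicity from \eqref{bond formula}.

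I would first show that the map $\alpha \mapsto \Psi_\alpha(q)$ is strictly decreasing on $(1, 2]$ for every $q \in (0, 1/\sigma_Z)$. Since only the jump part $\phi_\alpha(q) = (\sigma_Z q)^\alpha / |\cos(\pi\alpha/2)|$ depends on $\alpha$, a direct logarithmic differentiation gives
$$\frac{d}{d\alpha}\log\phi_\alpha(q) \;=\; \log(\sigma_Z q) \,-\, \frac{(\pi/2)\sin(\pi\alpha/2)}{|\cos(\pi\alpha/2)|},$$
and one checks that this is strictly negative on $(1, 2]$ whenever $\sigma_Z q < 1$, since then $\log(\sigma_Z q) < 0$ and the subtracted fraction is positive on $(1, 2)$ while vanishing at $\alpha = 2$. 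Continuity of $\phi_\alpha$ at the right endpoint then extends the strict decrease to the closed interval.

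Next I would confirm that the relevant trajectory indeed stays in the range where Step~1 applies, namely $v^{(\alpha)}(t) \in [0, 1/\sigma_Z)$ for all $t \ge 0$. Since $v^{(\alpha)}(t) \uparrow x_0^{(\alpha)}$, the unique positive root of $\Psi_\alpha - 1$, it is enough to evaluate
$$\Psi_\alpha(1/\sigma_Z) \;=\; \frac{a}{\sigma_Z} \,+\, \frac{\sigma^2}{2\sigma_Z^2} \,+\, \frac{1}{|\cos(\pi\alpha/2)|} \;\ge\; 1,$$
using $|\cos(\pi\alpha/2)| \le 1$, which gives $x_0^{(\alpha)} \le 1/\sigma_Z$ and hence $v^{(\alpha)}(t) < 1/\sigma_Z$.

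For $1 < \alpha_1 < \alpha_2 \le 2$, both solutions of \eqref{ODE2} start at $0$ and stay in $[0, 1/\sigma_Z)$, a range on which the vector field $1 - \Psi_\alpha(v)$ is strictly increasing in $\alpha$ by Step~1. A standard scalar ODE comparison then yields $v^{(\alpha_1)}(t) \le v^{(\alpha_2)}(t)$ for every $t \ge 0$, and the bond-price monotonicity is immediate from \eqref{bond formula} since $r_0 \ge 0$ and $ab \ge 0$. The subtle point is Step~1: the monotonicity of $\Psi_\alpha$ in $\alpha$ is \emph{not} global---indeed $\partial_\alpha \Psi_\alpha|_{\alpha=2} = (\sigma_Z q)^2 \log(\sigma_Z q)$ flips sign at $\sigma_Z q = 1$---so the range confinement produced in Step~2 is genuinely needed to close the argument.
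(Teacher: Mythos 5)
Your proof is correct and rests on exactly the same two ingredients as the paper's: the confinement $\sigma_Z x_0^{(\alpha)}\le 1$ (the paper gets it from $(\sigma_Z v^*(\alpha))^\alpha\le -\cos(\pi\alpha/2)\le 1$, you from evaluating $\Psi_\alpha$ at $1/\sigma_Z$) and the sign of $\partial_\alpha\Psi_\alpha(q)$ for $\sigma_Z q\le 1$. The only difference is cosmetic — the paper differentiates the implicit relation $t=\int_0^{v(t,\alpha)}dx/(1-\Psi_\alpha(x))$ with respect to $\alpha$, whereas you invoke a scalar ODE comparison theorem — so this is essentially the paper's argument.
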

\proof
 We write the function $v$  as $v(t,\alpha)$ to emphasize the dependence on the parameter $\alpha$. 
Since $1-\Psi_\alpha(u)$ is a decreasing concave function of $u$ and $\Psi_\alpha(0)=0$, there is a unique positive solution, denoted by $v^*(\alpha)$, to the equation $1-\Psi_\alpha(u)=0$. It is not hard to see that $0\leq v(s,\alpha)<v^*(\alpha)$ and $\lim_{t\rightarrow\infty}v(s,\alpha)=v^*(\alpha)$. Moreover, from the relation $1-\Psi_\alpha(v^*(\alpha))=0$ we obtain that $(\sigma_Zv^*(\alpha))^\alpha\leq-\cos(\pi\alpha/2)\leq 1$ and hence $\sigma_Zv^*(\alpha)\leq 1$.

For any $t\in\mathbb R_+$, one has
\[ t=\int_0^{v(t,\alpha)}\frac{dx}{1-\Psi_\alpha(x)}.\] Taking the derivative with respect to $\alpha$, we obtain
\[\frac{1}{1-\Psi_\alpha(v(t,\alpha))}\cdot \frac{\partial v}{\partial \alpha}(t,\alpha)+\int_0^{v(t,\alpha)}\frac{1}{(1-\Psi_\alpha(x))^2}\cdot\frac{\partial\Psi_\alpha}{\partial\alpha}(x)dx=0.\]
Note that by \eqref{equ: Psi SCIR},
\[\frac{\partial\Psi_\alpha}{\partial\alpha}(x)=-\frac{\sin(\pi\alpha/2)}{
\cos^2(\pi\alpha/2)}\Big(\frac{\pi}{2}\Big)(\sigma_Zx)^{\alpha}-\frac{(\sigma_Zx)^\alpha}{\cos(\pi\alpha/2)}\ln(\sigma_Zx)\leq 0\]
on $x\in(0,v^*(\alpha)]$ since $\sigma_Zv^*(\alpha)\leq 1$ and $\cos(\pi\alpha/2)<0$. Therefore we obtain $\partial v/\partial\alpha\geq 0$, namely the function $v$ is increasing with respect to $\alpha$. In particular, the bond price $B(0,T)$ is a decreasing function of $\alpha$. \qed 
 
Proposition \ref{decrasing-bond} shows that the $\alpha$-CIR model with $\alpha<2$ permits to capture the low interest rate behavior from the point of view of bond pricing.
This result is  surprising at first sight since the parameter $\alpha$ is an inverse measure of heaviness of distribution tails, more as $\alpha$ close to $1$, more likely that the large jumps appear (see also Section \ref{section-jumps}). In addition, in the $\alpha$-CIR model, $\alpha$ coincides with the so-called generalized Blumenthal-Getoor index which is defined as 
 $ \inf\{\beta>0: \sum_{0\leq s\leq T}\Delta r_s^{\beta}<\infty,\ a.s.\}$ with $\Delta r_s:=r_s-r_{s-}$ and $T$ a horizon time (see e.g. \cite{AJ09}) and is often used to measure the activity of the small jumps in a semimartingale. 
In fact, when $\mu_\alpha(du)$ is defined by (\ref{Levymeasure}), this index is reduced to
$ \inf\big\{\beta>0:\int_0^Tr_{s}ds \int_0^1u^\beta \mu_\alpha(du)<\infty,\ a.s.\big\}$
and thus is equal to $\alpha$.  The index $\alpha\in(1,2)$ shows that  the
jumps are of infinite variation. 
 
%

\subsection{Application to bond derivatives}

We now consider bond derivatives. The $\alpha$-CIR model framework allows to obtain closed-form formulae for a large class of derivatives as we show below by the example of path-dependent option. Denote the zero-coupon bond yield of constant maturity $\kappa$ at time $t$ by $Y(t,t+\kappa)$.  It follows from Proposition \ref{prop:bond price} that  
 \begin{equation}\label{bond yield}
 Y(t,t+\kappa)=-\frac{1}{\kappa}\ln B(t,t+\kappa)=\frac{1}{\kappa}\Big(r_tf^{-1}(\kappa)+ab\int_0^{\kappa} f^{-1}(s) ds\Big).
 \end{equation}
Let us consider a European Put option of maturity $T$ and strike $K$, which is written on the running minimum of the bond yield. The price is given by
\begin{equation}\label{put P price}
P\Big(\inf_{u\in[0,T]}Y(u,u+\kappa),0,T,K\Big) := \mathbb E\Big[e^{-\int_0^T r_sds}\Big(K-\inf_{u\in
[0,T]}Y(u,u+\kappa)\Big)_+\Big]
\end{equation}
We define the Laplace transform with respect to the maturity of the above functional. For $\theta>0$, let 
\begin{equation}\label{laplace put option}
L_\theta\left(0,\kappa, {K};r_0\right)=\int_0^\infty e^{-\theta T}
P\Big(\inf_{u\in[0,T]}Y(u,u+\kappa),0,T,K\Big)dT.
\end{equation}
The following result gives a closed-form expression of this Laplace transform.

\begin{proposition} Let $r$ be an $\alpha$-CIR$(a,b,\sigma,\sigma_Z,\alpha)$ process with initial value $r_0>0$. Then
\beqlb\label{closed form of LS}
L_\theta\left(0,\kappa,{K};r_0\right)=\frac{f^{-1}(\kappa)}{\kappa}\int_0^{\overline{K}}\frac{H_\varepsilon(\theta,r_0)}{H_\varepsilon(\theta,y)}
M(\theta,y) dy,
\eeqlb
where the function $f^{-1}$ is defined in (\ref{function-f}), $ \overline{K}=\left(\kappa K-ab\int_0^\kappa f^{-1}(s)ds\right)/f^{-1}(\kappa)$,
\begin{equation}\label{H}H_{\varepsilon}(\theta,x)=\int_{q_1}^\infty\frac{e^{-xz}}{\Psi_\alpha(z)-1}\exp\Big(\int_{q_1+\varepsilon}^z\frac{abu+\theta}{\Psi_\alpha(u)-1}du\Big)dz,
\end{equation}
with $q_1$ given by 
$\Psi_\alpha(q_1)=1$ and $\varepsilon$ is an arbitrary positive number, and 

\[M(\theta,y)=
\int_0^\infty e^{-\theta u} B_y
(0,u)du
\]
with $B_y(0,u)$ being the zero-coupon bond price given by \eqref{bond formula} with initial short rate $y$.
\end{proposition}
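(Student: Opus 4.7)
The plan is to reduce the path-dependent payoff to a put on the running minimum of $r$ itself, then apply the strong Markov property at the first-passage time $\tau_y := \inf\{t \geq 0 : r_t \leq y\}$, and finally identify the joint Laplace transform $\mathbb E_{r_0}[\exp(-\theta \tau_y - \int_0^{\tau_y} r_s\,ds)]$ with $H_\varepsilon(\theta, r_0)/H_\varepsilon(\theta, y)$ via a Feynman--Kac martingale argument built on the CBI generator of Corollary \ref{pro: SCIR CBI}.

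By (\ref{bond yield}), $Y(u, u+\kappa)$ is an affine strictly increasing function of $r_u$ with slope $f^{-1}(\kappa)/\kappa$, so $(K - \inf_{u \in [0,T]} Y(u, u+\kappa))_+ = \frac{f^{-1}(\kappa)}{\kappa}\bigl(\overline K - \inf_{u \in [0,T]} r_u\bigr)_+$ with $\overline K$ as in the statement. Using the layer-cake identity $(\overline K - m)_+ = \int_0^{\overline K} \mathbf 1_{\{m < y\}}\,dy$ together with Fubini, the problem reduces to computing $\mathbb E[e^{-\int_0^T r_s\,ds}\mathbf 1_{\{\tau_y < T\}}]$ for each $y \in (0, \overline K)$. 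Because the jumps of the $\alpha$-CIR process are positive, the paths of $r$ cross level $y$ continuously from above, so $r_{\tau_y} = y$ on $\{\tau_y < \infty\}$. Applying the strong Markov property at $\tau_y$, then Fubini in $T$ and the change of variables $u = T - \tau_y$, gives
\begin{equation*}
\int_0^\infty e^{-\theta T}\,\mathbb E\bigl[e^{-\int_0^T r_s\,ds}\mathbf 1_{\{\tau_y < T\}}\bigr]\,dT = \mathbb E_{r_0}\!\left[e^{-\theta \tau_y - \int_0^{\tau_y} r_s\,ds}\right] M(\theta, y).
\end{equation*}

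It remains to identify the Laplace transform on the right. I would show that $h(x) := H_\varepsilon(\theta, x)$ satisfies the eigenfunction equation $\mathcal L h(x) = (\theta + x)\, h(x)$, where $\mathcal L$ is the CBI generator (\ref{generator1}) of the $\alpha$-CIR process. Inserting the Laplace-type representation $h(x) = \int_{q_1}^\infty e^{-xz} g(z)\,dz$ into $\mathcal L$ and using the formula $\Psi_\alpha(z) = az + \frac{\sigma^2}{2}z^2 + \int [e^{-z\sigma_Z \zeta} - 1 + z\sigma_Z \zeta]\,\mu_\alpha(d\zeta)$, the drift, diffusion, and jump contributions combine into $\mathcal L h(x) = \int_{q_1}^\infty e^{-xz} g(z)[x\Psi_\alpha(z) - abz]\,dz$. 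Equating this with $(\theta + x) h(x)$ and rewriting $x e^{-xz} = -\partial_z e^{-xz}$, an integration by parts --- whose boundary term at $z = q_1$ vanishes precisely because $\Psi_\alpha(q_1) = 1$ --- reduces the identity to the first-order linear ODE $\partial_z[g(z)(\Psi_\alpha(z) - 1)] = g(z)(abz + \theta)$, which integrates to the exponential kernel in the definition of $H_\varepsilon$; the lower endpoint $q_1 + \varepsilon$ contributes only an overall multiplicative constant, which cancels in the ratio $h(r_0)/h(y)$. Once $\mathcal L h = (\theta + x)h$ is established, the process $M_t := e^{-\theta t - \int_0^t r_s\,ds} h(r_t)$ is a local martingale, and optional stopping at $\tau_y$, together with $h(r_{\tau_y}) = h(y)$, yields $\mathbb E_{r_0}[e^{-\theta \tau_y - \int_0^{\tau_y} r_s\,ds}] = h(r_0)/h(y)$. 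Plugging everything back produces (\ref{closed form of LS}).

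The main obstacle will be the analytic book-keeping in the last step: verifying the convergence of the integral defining $H_\varepsilon(\theta, x)$ near $z = q_1$ (where $1/(\Psi_\alpha(z) - 1)$ has a simple pole that must be integrated against the vanishing exponential factor) and as $z \to \infty$, justifying differentiation under the integral and the integration by parts identifying the ODE, and upgrading $M_t$ from a local to a true martingale with enough integrability to apply optional stopping at the potentially unbounded time $\tau_y$. The algebraic identification itself is clean once these standard but nontrivial estimates are in place.
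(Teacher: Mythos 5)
Your proposal is correct and follows the paper's reduction essentially verbatim: the affine rewriting of the yield so that the option becomes a put on $\inf_u r_u$ with nominal $f^{-1}(\kappa)/\kappa$ and strike $\overline K$, the layer-cake identity, Fubini, and the strong Markov property at the first passage time below level $y$ (together with the fact that, the jumps being positive, $r$ crosses the level continuously so $r_{\varTheta_y}=y$). The one place where you genuinely diverge is the key identity $\mathbb E_{r_0}\bigl[e^{-\theta\varTheta_y-\int_0^{\varTheta_y}r_s\,ds}\bigr]=H_\varepsilon(\theta,r_0)/H_\varepsilon(\theta,y)$: the paper simply cites Duhalde, Foucart and Ma \cite[Theorem 1]{DFM14}, whereas you open that black box and re-derive it by showing $\mathcal L H_\varepsilon(\theta,\cdot)=(\theta+x)H_\varepsilon(\theta,\cdot)$ and applying optional stopping to the local martingale $e^{-\theta t-\int_0^t r_s\,ds}H_\varepsilon(\theta,r_t)$. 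Your algebra checks out: with $h(x)=\int_{q_1}^\infty e^{-xz}g(z)\,dz$ one indeed gets $\mathcal Lh(x)=\int_{q_1}^\infty e^{-xz}g(z)\bigl(x\Psi_\alpha(z)-abz\bigr)dz$, the boundary term at $q_1$ in the integration by parts vanishes because $g(z)(\Psi_\alpha(z)-1)=\exp\bigl(\int_{q_1+\varepsilon}^z\frac{abu+\theta}{\Psi_\alpha(u)-1}du\bigr)\to 0$ as $z\downarrow q_1$ (the integrand has a simple pole with positive residue), and the resulting first-order ODE gives exactly the kernel in \eqref{H}, with the $\varepsilon$-dependence only a multiplicative constant that cancels in the ratio. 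What your route buys is a self-contained proof at the cost of the analytic book-keeping you honestly flag (justifying differentiation under the integral, the martingale property, and optional stopping at the possibly unbounded $\varTheta_y$, where a localization plus monotone/dominated convergence argument is needed); what the paper's route buys is brevity, since those verifications are exactly the content of the cited theorem. Neither treatment addresses the degenerate range $y\geq r_0$ (where $\varTheta_y=0$), so that is not a gap specific to your argument.
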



\begin{remark} We note that $H_{\varepsilon}(\theta,x)$ is well defined. Indeed, $\frac{abu+\theta}{\Psi_\alpha(u)-1}\rightarrow0$ as $u\rightarrow\infty$. Then $\frac{1}{z}\int_{q_1+\varepsilon}^z\frac{abu+\theta}
{\Psi_\alpha(u)-1}du\rightarrow0$ as $z\rightarrow\infty$, which
implies $\int_{q_1+\varepsilon}^\infty \frac{dz}{\Psi_\alpha(z)-1}\exp(-yz+\int_{q_1+\varepsilon}^{z}\frac{abu+\theta}{\Psi_\alpha(u)-1}du)<\infty$. In addition, as $z\rightarrow q_1$,  we have
\beqnn
\int_{q_1}^{q_1+\varepsilon}\frac{dz}{\Psi_\alpha(z)-1}\exp\Big(-yz+\int_{q_1+\varepsilon}^{z}\frac{abu+\theta}{\Psi_\alpha(u)-1} du\Big)\leq
\int_{q_1}^{q_1+\varepsilon}\frac{dz}{\Psi_\alpha(z)-1}\exp \Big(\int_{q_1+\varepsilon}^{z}\frac{\theta}{\Psi_\alpha(u)-1}du\Big)<\infty.\eeqnn
In fact, consider $\theta>0$, a primitive function of the integrand on the right hand side is $z\mapsto\frac{1}{\theta}
\exp\big(-\theta\int_z^{q_1+\varepsilon}\frac{1}{\Psi_\alpha(u)-1}du\big)$, which takes finite value at $q_1$.\end{remark}


\begin{proof} We first rewrite the payoff \eqref{put P price} of the Put option as 
 \beqnn
P\left(\inf_{u\in[0,T]}Y(u,u+\kappa),0,T,K\right)
\ar=\ar
\mathbb{E}\left[e^{-\int_0^T r_s ds}\Big( K-\frac{1}{\kappa}\Big[ f^{-1}(\kappa)\inf_{u\in
[0,T]}r_u+ab\int_0^\kappa f^{-1}(s)ds\Big]\Big)_+\right] \\
\ar=\ar\frac{f^{-1}(\kappa)}{\kappa}\mathbb{E}
\left[e^{-\int_0^T r_sds}\Big( \frac{\kappa K-ab\int_0^\kappa f^{-1}(s)ds}{f^{-1}(\kappa)}-\inf_{u\in
[0,T]}r_u\Big)_+\right],
\eeqnn
which corresponds to another Put option written on the running minimum of the spot rate itself with different nominal ${f^{-1}(\kappa)}/{\kappa}$ and strike $\overline K$, i.e., 
 \beqnn
P\Big(\inf_{u\in[0,T]}Y(u,u+\kappa),0,T,K\Big)=\frac{f^{-1}(\kappa)}{\kappa}
P\Big(\inf_{u\in[0,T]}r_u,0,T,\overline{K}\Big).
\eeqnn
Then the Laplace transform \eqref{laplace put option} becomes
\beqlb
L_\theta\left(0,\kappa,{K};r_0\right)=\frac{f^{-1}(\kappa)}{\kappa}\int_0^\infty e^{-\theta T}
P\Big(\inf_{u\in[0,T]}r_u,0,T,\overline{K}\Big)dT.
\eeqlb
Note that
\beqnn
\big(\overline{K}-\inf_{u\in
[0,T]}r_u\big)_+=\int_0^{\overline{K}} 1_{\{\inf_{u\in[0,T]}r_u<y\}}dy,
\eeqnn
hence we have
\beqnn
L_\theta\left(0,\kappa,{K};r_0\right)
\ar=\ar \frac{f^{-1}(\kappa)}{\kappa}\mathbb{E} \Big[  \int_0^{\overline{K}}dy \int_0^\infty dT
\exp\Big(-\theta T-\int_0^T r_sds\Big)1_{\{\inf_{u\in[0,T]}r_u<y\}}
\Big]\\
\ar=\ar \frac{f^{-1}(\kappa)}{\kappa}
\mathbb{E}\Big[\int_0^{\overline{K}} dy\int_{\varTheta_y}^\infty dT
\exp\big(-\theta T-\int_0^T r_sds\big)\Big]\\
\ar=\ar \frac{f^{-1}(\kappa)}{\kappa}
\mathbb{E}\Big[ \int_0^{\overline{K}}dy \int_{\varTheta_y}^\infty dT
\exp\big(-\theta(T-\varTheta_y)-\theta\varTheta_y-\int_0^{\varTheta_y}r_sds-\int_{\varTheta_y}^T r_sds\big)
\Big]\eeqnn where $\varTheta_y$ denotes the first entrance time of $r$ in $[0,y]$ with $y<r_0$ , i.e. $\varTheta_y:=\inf\{t>0: r_t\leq y\}$.
By Duhalde, Foucart and Ma \cite[Theorem 1]{DFM14}, we have \begin{equation} \label{hitting time}
\mathbb{E}
\Big[\exp\Big(-\theta \varTheta_{y}-\int_0^{\varTheta_y}r_t \ddr t\Big)\Big]=\frac{H_\varepsilon(\theta,r_0)}{H_\varepsilon(\theta,y)}
\end{equation}
where the function $H_{\varepsilon}(\theta,x)$ defined in \eqref{H} is a decreasing $C^{2}_0$ function   on $x\in (0,\infty)$ for $\theta>0$.
By using the strong Markov property of $r_t$ on the stopping time $\varTheta_y$,
\beqnn
L_\theta\left(0,\kappa, {K};r_0\right)
\ar=\ar \frac{f^{-1}(\kappa)}{\kappa}
\int_0^{\overline{K}}  \mathbb{E}\Big[  \exp\big(
-\theta\varTheta_y-\int_0^{\varTheta_y}r_sds\big)\Big]\int_0^\infty e^{-\theta t}\mathbb{E}_y\Big[\exp\big(-\int_0^t r_s ds\big)\Big]dt\, dy.
\eeqnn
Note that $B_y(0, t)=\mathbb{E}_y[\exp(-\int_0^t r_s ds)]$, thus we obtain (\ref{closed form of LS}). \qed
\end{proof}



\section{Analysis of jumps}\label{section-jumps}

This section is focused on the jump part of the short interest rate $r$. In particular, we are interested in the large jumps which capture the significant changes in the interest rate dynamics and may imply the downgrade credit risk. 


Similar as in Section \ref{sec:Model}, we fix a jump threshold $\overline{y}=\sigma_Z y>0$. Let $J_t^{\overline{y}}$ denote the number of jumps of $r$ with jump size larger than $\overline{y}$ in $[0,t]$, i.e. 
\begin{equation}\label{jump number J} J_t^{\overline{y}}:=\sum_{0\leq s\leq t}1_{\{\Delta r_s>\overline{y} \}}.\end{equation}
Using the integral representation \eqref{lambda-integral}, we have 
\beqlb\label{JT}
J_t^{\overline{y}}=\int_0^t\int_0^{r_{s-}}\int_{\overline{y}/\sigma_Z}^{\infty} N(ds,du,d\zeta) = 
\int_0^t\int_0^{r_{s-}}\int_{y}^{\infty} N(ds,du,d\zeta),
\eeqlb
where $N$ is the (non-compensated) Poisson random measure corresponding to $\widetilde N$.
Since $\mu_\alpha((0,\infty))=\infty$, we have $\lim_{\overline{y}\rightarrow0}J^{\overline{y}}_t=\infty$, a.s.. In the following, we show that the Laplace transform of this counter process is exponential affine where the exponent  coefficient satisfies  a non-linear ordinary differential equation.



\begin{proposition} \label{Th3.7}Let $r$ be $\alpha$-CIR$(a,b,\sigma,\sigma_Z,\alpha)$ process with initial value $r_0\geq0$. Then for $p\geq0$,
\beqlb\label{lt}
\mathbb{E}\big[e^{-pJ_t^{\overline{y}} }\big]=\exp\Big({-l(p,y,t)r_0-ab\int_0^tl(p,y,s)ds}\Big)
\eeqlb
where $l(p,y,t)$ is the unique solution of the following equation 
 \beqlb\label{ODEl}
 \frac{\partial l(p,y,t)}{\partial t}=\sigma_Z^\alpha\int_y^\infty \big(1-e^{-p-l(p,y,t)\zeta}\big)\mu_\alpha(d\zeta)-\Psi^{(y)}_\alpha(l(p,y,t)),
 \eeqlb
with initial condition $l(p,y,0)=0$ and  
$\Psi^{(y)}_\alpha$ given by (\ref{Psix}).

\end{proposition}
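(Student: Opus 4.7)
My approach is the affine/martingale method applied to the two-dimensional Markov process $(r, J^{\overline y})$. Since $J^{\overline y}$ is driven by the large-jump part of the same Poisson random measure $N$ that drives $r$, the pair forms an affine process on $\mathbb R_+ \times \mathbb N$, so its joint Laplace transform admits an exponential-affine form; the target formula \eqref{lt} is the restriction of this joint transform to the slice $q = 0$ on the $r$-component.

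Fix $T > 0$ and $p \geq 0$, and assume that the Riccati-type equation \eqref{ODEl} admits a nonnegative solution $l = l(p, y, \cdot)$ on $[0, T]$ with $l(p,y,0) = 0$. Define the candidate process
\[
M_t := \exp\Big(-l(p, y, T-t)\, r_t - p\, J_t^{\overline y} - ab \int_0^{T-t} l(p, y, s)\, ds\Big), \quad t \in [0, T].
\]
Then $M_0 = \exp(-l(p, y, T)\, r_0 - ab\int_0^T l(p, y, s)\, ds)$, and, using $l(p, y, 0) = 0$, $M_T = \exp(-p\, J_T^{\overline y})$. Thus once $M$ is shown to be a true martingale, the identity $\mathbb E[M_T] = M_0$ yields \eqref{lt}.

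The central step is an It\^o computation. Starting from \eqref{lambda-integral}, I would split the jump integral against $\widetilde N$ at the threshold $\zeta = y$: the small jumps ($0 < \zeta \leq y$) stay compensated, whereas the large jumps ($\zeta > y$) are rewritten in terms of the uncompensated measure $N$ with the corresponding compensator absorbed into the drift. I then apply It\^o's formula to $F(t, x, j) = \exp(-l(p, y, T-t)\, x - p j - ab\int_0^{T-t} l(p, y, s)\, ds)$ along $(t, r_t, J_t^{\overline y})$. The small jumps contribute to the drift via $r_t \int_0^y(e^{-l\sigma_Z\zeta} - 1 + l\sigma_Z\zeta)\mu_\alpha(d\zeta)$, while each large jump simultaneously increments $r$ by $\sigma_Z\zeta$ and $J^{\overline y}$ by one, contributing $r_t \int_y^\infty (e^{-l\sigma_Z\zeta - p} - 1)\mu_\alpha(d\zeta)$. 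The time derivative of the exponent produces $l'(T-t) r_t + ab\, l(T-t)$; the drift $a(b - r_t)$ in $r$ together with the compensator $-\sigma_Z r_t \int_y^\infty \zeta \mu_\alpha(d\zeta)$ of the large jumps and the Brownian second-order term $\frac{\sigma^2}{2} l(T-t)^2 r_t$ combine so that the two $ab\, l(T-t)$ contributions cancel; after the change of variable $u = \sigma_Z\zeta$ that matches the normalizations used in \eqref{Psix}, the remaining $r_t$-proportional part becomes
\[
r_t M_t\, \Big[\, l'(p, y, T-t) + \Psi^{(y)}_\alpha\bigl(l(p, y, T-t)\bigr) - \sigma_Z^\alpha \int_y^\infty \bigl(1 - e^{-p - l(p, y, T-t)\zeta}\bigr)\mu_\alpha(d\zeta)\,\Big],
\]
which vanishes identically by \eqref{ODEl}. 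Hence $M$ is a local martingale.

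To upgrade this to a true martingale, observe that $M \in (0,1]$, since $p, J^{\overline y}, r, l \geq 0$: nonnegativity of $l$ follows because at $l = 0$ the right-hand side of \eqref{ODEl} equals $\sigma_Z^\alpha(1 - e^{-p})\int_y^\infty \mu_\alpha(d\zeta) \geq 0$ (finite as $y > 0$), so the solution cannot cross zero downward. Local existence of $l$ is standard via Picard iteration, using that the right-hand side of \eqref{ODEl} is locally Lipschitz in $l$ (the L\'evy-measure integrals are well-defined thanks to $y > 0$ and $\int(\zeta \wedge \zeta^2)\mu_\alpha(d\zeta) < \infty$); global existence on $[0, T]$ follows from an a priori bound obtained by comparison with the linearization of the equation near $l = 0$. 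Being a bounded local martingale, $M$ is a true martingale, and $\mathbb E[M_T] = M_0$ delivers \eqref{lt}. I expect the main obstacle to be the careful book-keeping of the It\^o computation: matching the $\sigma_Z$ scaling with the convention used in \eqref{Psix}, and verifying exactly the cancellation of the $ab\, l$ terms and the grouping of the small-jump compensator with the Brownian diffusion into $\Psi^{(y)}_\alpha(l)$.
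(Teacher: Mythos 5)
Your proposal is correct and follows essentially the same route as the paper: both reduce the joint Laplace transform of the affine pair $(r,J^{\overline y})$ to the Riccati-type equation \eqref{ODEl} via an exponential-affine ansatz, your martingale $M_t$ being exactly the paper's $f(r_t,J_t^{\overline y},t)$ with $\theta=0$. Your explicit justification that $M$ is a bounded local martingale (hence a true one) and your sketch of well-posedness of \eqref{ODEl} via the sign of the right-hand side at $l=0$ and the a priori bound below the root of $F_y$ match, in substance, the paper's construction of $l$ as the inverse of $\Gamma(l)=\int_0^l F_y(q)^{-1}\,dq$.
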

\proof 
We first show that (\ref{ODEl}) has a unique solution. 
Note that $\sigma_Z^\alpha\int_y^\infty\mu_\alpha(d\zeta)-\Psi^{(y)}(q)$ is a decreasing concave function with respect to
$q$ and $\sigma_Z^\alpha\int_y^\infty e^{-p-q\zeta}\mu_\alpha(d\zeta)$ is a decreasing convex function of $q$. Since $p\geq0$, one has 
$\sigma_Z^\alpha\int_y^\infty\mu_\alpha(d\zeta)-\Psi^{(y)}(0)\geq \sigma_Z^\alpha\int_y^\infty e^{-p}\mu_\alpha(d\zeta)$. Moreover, for $q$ large enough, 
$\sigma_Z^\alpha\int_y^\infty\mu(d\zeta)-\Psi^{(y)}(q)<0< \sigma_Z^\alpha\int_y^\infty e^{-p-q\zeta}\mu_\alpha(d\zeta)$. Thus there is the unique positive solution, denoted by $l^*>0$, to the equation
\beqnn
F_y(q):=\sigma_Z^\alpha\int_y^\infty(1-e^{-p-q\zeta})\mu(d\zeta)-\Psi^{(y)}(q)=0.
\eeqnn
One has $F_y(q)>0$ when $0\leq q <l^*$, and $F_y(q)<0$ when $q >l^*$. Moreover
$
\Gamma(l):=\int_0^l\frac{1}{F_y(q)}dq
$
is an increasing function from $[0,l^*)$ to $[0,\infty)$ and its inverse function $l(p,y,\cdot): [0,\infty)\rightarrow[0,l^*)$ exists. 
It is not hard to see that for any $t\geq0$,
$
\int_0^t\frac{1}{F_y(l(p,y,s))}dl(p,y,s)=t,
$
which implies (\ref{ODEl}). Since $F_y(q)$ is locally Lipschitz, the uniqueness follows.

The couple $(J^{\overline{y}},r)$
is a Markov process taking values in $\mathbb{N}\times\mathbb{R}_+$, where $\mathbb{N}:=\{0,1,\cdots\}$. The generator of $(J^{\overline{y}},r)$ acting on a function $f(x,n,t)$ is given by
 \begin{equation}\label{ODEB}
 \begin{split}
 \mathcal{A} f(x,n,t)= \frac{\partial f}{\partial t}&(x,n,t) +a(b-x)\frac{\partial f}{\partial x}(x,n,t)
 +\frac{1}{2}\sigma^2x\frac{\partial ^2f}{\partial x^2}(x,n,t) \\
& +\sigma_Z^\alpha x\int_0^y\big(f(x+\zeta,n,t)-f(x,n,t)-  \zeta \frac{\partial f}{\partial x} (x,n,t)
\big)\mu_\alpha(d\zeta)\\
&
+\sigma_Z^\alpha x\int_y^{\infty} \big(f(x+\zeta,n+1,t)-f(x,n,t)-\zeta\frac{\partial f}{\partial x} (x,n,t)\big)\mu_\alpha(d\zeta),
\end{split}
 \end{equation}
where $f(x,n,t)$ is differentiable with respect to $t$ and twice differentiable with respect to $x$,  and the measure 
$\mu_\alpha(d\zeta)$ is defined by (\ref{Levymeasure}). 
Let $p$ and $\theta$ be non-negative numbers, and $T\geq 0$ be a time horizon. We consider the integral-differential equation $\mathcal Af=0$ with boundary condition $f(x,n,T)=\exp(-pn-\theta x)$ and look for  a solution of the form
\[f(x,n,t)=\exp\big(C_0(t)-C_1(t)n-C_2(t)x\big), \quad t\in[0,T].\]
Then the equation $\mathcal Af=0$ reduces to the following system of ordinary differential equations
\begin{equation}\label{equ:C012}\begin{cases}C_0'(t)=abC_2(t),\\  C_1'(t)=0,\\C_2'(t)=\Psi_\alpha^{(y)}(C_2(t))+
\sigma_Z^\alpha\int_y^\infty(e^{-C_2(t)\zeta-C_1(t)}-1)\mu_\alpha(d\zeta).
\end{cases}\end{equation}
Moreover, the boundary condition $f(x,n,T)=\exp(-pn-\theta x)$ reads 
$(C_0(T),C_1(T),C_2(T))=(0,p,\theta)$. 
In particular, one has $C_1(t)=p$ on $t\in[0,T]$. Moreover, the functions $C_0$ and $C_2$ are also uniquely determined by the equation system \eqref{equ:C012} and the boundary condition. 
Notably one has $C_0(t)=-ab\int_t^TC_2(s)ds$.
Since $\mathcal A$ is the generator of the Markov process $(J^{\overline y},r)$, one has
\[\mathbb E[e^{-pJ_T^{\overline{y}}-\theta r_T}|\mathcal F_t]=f(r_t,J_t^{\overline{y}},t)=\exp\Big(-ab\int_t^TC_2(s)ds-pJ_t^{\overline y}-C_2(t)r_t\Big),\]
where $C_2$ is the solution of the following ordinary differential equation with boundary condition
\[C_2'(t)=\Psi_\alpha^{(y)}(C_2(t))+
\sigma_Z^\alpha\int_y^\infty(e^{-C_2(t)\zeta-p}-1)\mu_\alpha(d\zeta),\quad C_2(T)=\theta.\]
The particular case where $\theta=0$ and $t=0$ leads to
\begin{equation}\label{equ:CT}\mathbb E[e^{-pJ_T^{\overline{y}}}]=\exp\Big(-ab\int_0^T C(T,p,y,s)ds-C(T,p,y,0)r_0\Big),\end{equation}
with $C(T,p,y,\cdot)$ being the solution of
\[\frac{\partial{C(T,p,y,t)}}{\partial t}=\Psi_\alpha^{(y)}(C(T,p,y,t))+
\sigma_Z^\alpha\int_y^\infty(e^{-C(T,p,y,t)\zeta-p}-1)\mu_\alpha(d\zeta),\quad C(T,p,y,T)=0.\]
Finally, the comparison between the differential equations \eqref{ODEl} and \eqref{equ:CT} shows that 
 $l(p,y,t)=C(T,p,y,T-t)$ for any $t\leq T$. 
Hence we obtain \eqref{lt}.\qed


Now we consider the first time when the jump size of  the short rate $r$ is larger than $\overline{y}=\sigma_Zy$, i.e.,
\begin{equation}\label{first jump time}\tau_{\overline{y}}=\inf\{t>0: \Delta r_t>\overline{y} \}.\end{equation} 
 We show that this random time also exhibits  an exponential affine cumulative distribution function. 
The following result gives its distribution function as a consequence of the above proposition.

\begin{corollary}\label{distribution of taux}  For any $t\geq 0$, we have
\beqlb\label{maximal jump}
\mathbb P(\tau_{\overline{y}}>t)=\exp \Big(\displaystyle-l(y,t)r_0-ab\int_0^tl(y,s)ds\Big)
\eeqlb
where $l(y,t)$ is the unique solution of the following ODE 
 \beqlb\label{ODEll}
 \frac{dl}{dt}(y,t)=\sigma_Z^\alpha\int_y^\infty\mu_\alpha(d\zeta)-\Psi^{(y)}_\alpha(l(y,t)),
 \eeqlb
with initial condition $l(y,0)=0$ and $\Psi^{(y)}_\alpha$ given by (\ref{Psix}).
\end{corollary}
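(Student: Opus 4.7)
The approach is to derive the tail distribution of $\tau_{\overline{y}}$ directly from the Laplace transform of $J_t^{\overline{y}}$ obtained in Proposition \ref{Th3.7}, by letting the parameter $p$ tend to infinity. The key observation is that, since $J_t^{\overline{y}}$ counts the jumps of size larger than $\overline{y}$ up to time $t$, one has the identity
\[
\{\tau_{\overline{y}}>t\}=\{J_t^{\overline{y}}=0\}.
\]
Consequently, writing $\mathbb{E}[e^{-pJ_t^{\overline{y}}}]=\sum_{n\geq 0}\mathbb{P}(J_t^{\overline{y}}=n)e^{-pn}$ and invoking monotone convergence as $p\to\infty$, I would obtain
\[
\mathbb{P}(\tau_{\overline{y}}>t)=\lim_{p\to\infty}\mathbb{E}\big[e^{-pJ_t^{\overline{y}}}\big]=\lim_{p\to\infty}\exp\Big(-l(p,y,t)r_0-ab\int_0^tl(p,y,s)\,ds\Big),
\]
where $l(p,y,t)$ solves the ODE \eqref{ODEl}.

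The next step is to identify the limit of $l(p,y,t)$ as $p\to\infty$. I would first show that $p\mapsto l(p,y,t)$ is nondecreasing: this follows from a standard ODE comparison applied to \eqref{ODEl}, since the right-hand side is monotone in $p$ for fixed $l$ (the term $1-e^{-p-l\zeta}$ is increasing in $p$). Next, a uniform upper bound on $l(p,y,t)$ can be obtained from the fact that the right-hand side of \eqref{ODEl} stays below $\sigma_Z^\alpha\int_y^\infty\mu_\alpha(d\zeta)-\Psi_\alpha^{(y)}(l)$, whose unique positive zero furnishes a global bound independent of $p$. Therefore the monotone pointwise limit $l(y,t):=\lim_{p\to\infty}l(p,y,t)$ exists and is finite.

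The final step is to verify that $l(y,t)$ satisfies \eqref{ODEll}. Rewriting \eqref{ODEl} in integral form,
\[
l(p,y,t)=\int_0^t\Big[\sigma_Z^\alpha\int_y^\infty\big(1-e^{-p-l(p,y,s)\zeta}\big)\mu_\alpha(d\zeta)-\Psi_\alpha^{(y)}(l(p,y,s))\Big]ds,
\]
and passing to the limit using dominated convergence (the integrand is uniformly bounded by the bound just established, and $e^{-p-l(p,y,s)\zeta}\to 0$ pointwise), yields
\[
l(y,t)=\int_0^t\Big[\sigma_Z^\alpha\int_y^\infty\mu_\alpha(d\zeta)-\Psi_\alpha^{(y)}(l(y,s))\Big]ds,
\]
which is exactly the integral form of \eqref{ODEll} with $l(y,0)=0$. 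Uniqueness of the solution follows from local Lipschitz continuity of $\Psi_\alpha^{(y)}$, exactly as in the proof of Proposition \ref{Th3.7}. Plugging this back into the expression for $\mathbb{P}(\tau_{\overline{y}}>t)$ (and using dominated convergence for the time integral in the exponent) gives \eqref{maximal jump}. The main technical point is controlling the monotone limit of $l(p,y,t)$ and justifying the interchange of limit and integration, but the a priori bound from the equilibrium of the limiting ODE makes this routine.
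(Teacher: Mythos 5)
Your proposal is correct and follows essentially the same route as the paper: identify $\{\tau_{\overline{y}}>t\}=\{J_t^{\overline{y}}=0\}$, pass to the limit $p\to\infty$ in the Laplace transform of Proposition \ref{Th3.7} by monotone convergence, and show that $l(p,y,t)$ increases to the solution of \eqref{ODEll} by bounding it uniformly and taking limits in the integral form of \eqref{ODEl}. The only cosmetic difference is your choice of a priori bound (the equilibrium $l^*_y$ of the limiting ODE rather than the paper's explicit bound $\frac{\sigma_Z^\alpha}{a}(1-e^{-at})\int_y^\infty\mu_\alpha(d\zeta)$), which changes nothing of substance.
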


\proof For $q\geq 0$ one has
\[\sigma_Z^\alpha\int_y^\infty(1-e^{-p-q\zeta})\mu_\alpha(d\zeta)-\Psi^{(y)}(q)\leq \sigma_Z^\alpha\int_y^\infty \mu_\alpha(d\zeta)-aq.\]
By the equation (\ref{ODEl}) in Proposition \ref{Th3.7},  we obtain that 
 \beqlb\label{bound}
 l(p,y,t)\leq\frac{\sigma_Z^\alpha}{a}\left(1-e^{-at}\right)\int_y^\infty\mu_\alpha(d\zeta),
  \eeqlb
 and $l(p,x,t)$ is increasing of $p$. Thus $l(y,t):=\lim_{p\rightarrow\infty}l(p,y,t)$ exists. By (\ref{ODEl}), 
 \beqnn
 l(p,y,t)=\sigma_Z^\alpha\int_0^t \Big(\int_y^\infty(1-e^{-p-l(p,y,s)\zeta})\mu_\alpha(d\zeta)-\Psi^{(y)}_\alpha(l(p,y,s))\Big)ds 
 \eeqnn
 Since $\Psi^{(y)}(q)$ is locally Lipschitz and $e^{-p-l(p,y,s)\zeta}\leq
 e^{-p}$, by taking limit as $p\rightarrow\infty$ on the both sides of the above equation we have 
  \beqnn
 l(y,t)=\int_0^t \Big(\sigma_Z^\alpha\int_y^\infty\mu(d\zeta)-\Psi^{(y)}(l(y,s))\Big)ds,  
  \eeqnn
which implies that the limit function $l$ is the unique solution of the equation (\ref{ODEll}). By Proposition \ref{Th3.7} and (\ref{JT}),
\beqnn
\mathbb P(\tau_{\overline{y}}>t)=\mathbb P(J^{\overline{y}}_t=0)=\lim_{p\rightarrow\infty}\mathbb{E}\big[e^{-pJ_t^{\overline{y}}}\big]=\exp\Big({-l(y,t)r_0-ab\int_0^tl(y,s)ds}\Big).
\eeqnn
The last equality follows from the monotone convergence theorem. \qed

\begin{proposition}\label{pro:expectation duration} We have $\mathbb{P}(\tau_{\overline{y}} <\infty)=1$. Furthermore, denote $F(q):=\sigma_Z^\alpha\int_y^\infty\mu_\alpha(d\zeta)-\Psi^{(y)}_\alpha(q)$, then the equation $F(q)=0$ admits a unique solution $l^*_y$, which identifies with $\lim_{t\rightarrow\infty}l(y,t)$ where  $l$ is given by \eqref{ODEll}. Moreover, one has
\beqlb \label{expectation}
\mathbb{E}\left[\tau_{\overline{y}} \right]=\int_0^{l^*_y}\frac{1}{F(u)}\exp\Big(
-ur_0-\int_0^u\frac{abs}{F(s)}ds\Big)du<\infty.
\eeqlb

\end{proposition}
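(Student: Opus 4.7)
My plan is to exploit the survival function of Corollary \ref{distribution of taux} together with a fixed-point analysis of the function $F$. First I would analyze $F$: since $\Psi^{(y)}_\alpha$ is strictly convex and strictly increasing on $\mathbb{R}_+$ with $\Psi^{(y)}_\alpha(0)=0$ and $\Psi^{(y)}_\alpha(q)\to\infty$ as $q\to\infty$, the function $F$ is concave and strictly decreasing, with $F(0)>0$ and $F(q)\to-\infty$. This yields existence and uniqueness of a positive root $l^*_y$, together with $F'(l^*_y)<0$. Applied to equation \eqref{ODEll}, whose right-hand side is exactly $F(l(y,t))$, this shows that $l(y,\cdot)$ is strictly increasing on $[0,\infty)$ and bounded above by $l^*_y$; its limit as $t\to\infty$ must be a fixed point of $F$, so $\lim_{t\to\infty}l(y,t)=l^*_y$.

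With this asymptotic in hand, Part 1 becomes immediate. Substituting $l(y,s)\to l^*_y>0$ into the formula \eqref{maximal jump}, the integral $\int_0^t l(y,s)\,ds$ diverges as $t\to\infty$, which forces $\mathbb{P}(\tau_{\overline{y}}>t)\to 0$ and hence $\mathbb{P}(\tau_{\overline{y}}<\infty)=1$ (under the standing assumption $ab>0$).

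For the expectation I would write $\mathbb{E}[\tau_{\overline{y}}]=\int_0^\infty \mathbb{P}(\tau_{\overline{y}}>t)\,dt$ and perform the change of variable $u=l(y,t)$. The ODE gives $dt=du/F(u)$ and the range $t\in[0,\infty)$ maps bijectively to $u\in[0,l^*_y)$; the same substitution inside the inner integral transforms $\int_0^t l(y,s)\,ds$ into $\int_0^u s/F(s)\,ds$. Combining these changes produces formula \eqref{expectation} directly.

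The main obstacle will be the finiteness claim. Near $u=l^*_y$ I would use the first-order expansion $F(u)\sim|F'(l^*_y)|(l^*_y-u)$, so that $\int_0^u ab\,s/F(s)\,ds\sim\frac{ab\,l^*_y}{|F'(l^*_y)|}\log\frac{1}{l^*_y-u}$ and therefore $\exp\bigl(-ur_0-\int_0^u ab\,s/F(s)\,ds\bigr)$ behaves like $(l^*_y-u)^{ab\,l^*_y/|F'(l^*_y)|}$ as $u\uparrow l^*_y$. Dividing by $F(u)$ yields an integrand of order $(l^*_y-u)^{ab\,l^*_y/|F'(l^*_y)|-1}$, which is integrable at $l^*_y$ since this exponent strictly exceeds $-1$. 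The integrand is continuous at the left endpoint $u=0$, so the entire integral in \eqref{expectation} is finite.
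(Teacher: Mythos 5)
Your proposal is correct and follows essentially the same route as the paper: existence and uniqueness of the root $l^*_y$ of $F$ from concavity/monotonicity, the identification $\lim_{t\to\infty}l(y,t)=l^*_y$ via the ODE \eqref{ODEll}, divergence of $\int_0^\infty l(y,s)\,ds$ to get $\mathbb{P}(\tau_{\overline{y}}<\infty)=1$, the change of variable $u=l(y,t)$ to obtain \eqref{expectation}, and the linearization $F(u)\sim |F'(l^*_y)|(l^*_y-u)$ near the root for finiteness. Your finiteness step is in fact slightly more explicit than the paper's (you extract the power-law exponent $ab\,l^*_y/|F'(l^*_y)|-1>-1$ rather than just asserting integrability of the comparison integral), which is a welcome refinement but not a different method.
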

\proof We note that $F$ is a
decreasing concave function and $F(0)>0$. Hence 
the equation $F(q)=0$ has a unique positive solution $l^*_y>0$. One has $F(q)>0$ when $q\in[0,l^*_y)$. 
By (\ref{ODEll}),
\begin{equation}\label{equ:equationlt}
\int_0^{l(y,t)}\frac{1}{F(q)}dq=t,
\end{equation}
which implies that $0\leq l(y,t)<l^*_y$ for any $t\geq0$. Then $l(y,t)$ is strictly increasing on $t$. Let $t$ tend to infinity in the above equality \eqref{equ:equationlt}, we deduce
that $\lim_{t\rightarrow\infty}l(y,t)=l^*_y>0$. Then  $\int_0^\infty l(y,s)ds=\infty$. By Corollary \ref{distribution of taux}, $\mathbb{P}(\tau_{\overline{y}}=\infty)=0$.
Note that $\mathbb{E}[\tau_{\overline{y}}]=\int_0^{\infty}\mathbb{P}(\tau_{\overline{y}}>t)dt$, so
\beqnn
\mathbb{E}[\tau_{\overline{y}} ]=\int_0^\infty \exp\Big({-l(y,t)r_0-ab\int_0^tl(y,s)ds}\Big)dt=\int_0^{l^*_y}
\frac{1}{F(q)}\exp\Big({-qr_0-\int_0^{q}\frac{abp}{F(p)}dp}\Big)dq,
\eeqnn
where the second equality follows from (\ref{ODEll}) and implies (\ref{expectation}).  Since $F$ is decreasing, $F'(l^*_x)<0$ and then 
by concavity
\beqnn
\frac{1}{F(u)}\exp\Big({-ur_0-\int_0^u\frac{abs}{F(s)}ds}\Big)\,\sim\,
\frac{c}{F^\prime(l^*_y)(u-l^*_y)}\exp\Big({-ur_0-\int_0^u\frac{abs}{F^\prime(l^*_y)(s-l^*_y)}ds}\Big)
\eeqnn
for some constant $c>0$, as $u\rightarrow  l^*_y$. Then $E[\tau_{\overline{y}} ]<\infty$ follows from
 \beqnn
 \int_0^{l^*_y}\frac{1}{F^\prime(l^*_y)(u-l^*_y)}\exp\Big({-ur_0-\int_0^u\frac{abs}{F^\prime(l^*_y)(s-l^*_y)}ds}\Big)du<\infty. \eeqnn\qed

The following result gives an alternative form of Corollary \ref{distribution of taux} and a more intuitive explanation. It shows that  the distribution of the first jump time  $\tau_{\overline{y}}$  can also be given by using the Laplace transform of the
integrated auxiliary process $\widehat{r}^{(y)}$, which is introduced previously in (\ref{rhat1}), computed on the mass of the jump measure larger than $y= \overline{y}/\sigma_Z$. 
In other words,  the probability $\mathbb{P}(\tau_{\overline{y}}>t)$ is equal to a bond price written on the auxiliary rate $\widehat{r}^{(y)}$ remodulated by 
the measure $\mu_\alpha$ restricted on $(y,+\infty)$. 
When $b=0$, it 
recovers a result of He and Li \cite[Theorem 3.2]{HL15}. 

 \begin{proposition}\label{prop-taux-aux}
 Let $\widehat{r}^{(y)}$ be defined by (\ref{rhat1}), then we have
 \beqlb\label{probability rep}
\mathbb{P}(\tau_{\overline{y}}>t)=\mathbb{E}\Big[\exp{\Big\{-\sigma_Z^\alpha\Big(\int_y^\infty\mu_\alpha(d\zeta)\Big)\Big(\int_0^t\widehat{r}^{(y)}_sds\Big)\Big\}}\Big].
\eeqlb
 \end{proposition}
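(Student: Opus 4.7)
The plan is to evaluate the right-hand side of \eqref{probability rep} via the CBI structure of $\widehat{r}^{(y)}$ established in Corollary~\ref{pro: auxiliary CBI}, and then to recognize the resulting expression as the distribution of $\tau_{\overline{y}}$ given by Corollary~\ref{distribution of taux}. The crux is that the constant $\theta := \sigma_Z^\alpha\int_y^\infty\mu_\alpha(d\zeta)$ multiplying the time integral of $\widehat{r}^{(y)}$ on the right-hand side of \eqref{probability rep} is precisely the source term appearing in the ODE \eqref{ODEll} that determines $\mathbb{P}(\tau_{\overline{y}} > t)$.

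First, I would apply Proposition~\ref{prop 1} to the CBI process $\widehat{r}^{(y)}$ with $\xi = 0$ and the above choice of $\theta$. Since the branching mechanism is $\Psi^{(y)}_\alpha$ given by \eqref{Psix}, the immigration rate is $\Phi(q) = abq$, and the initial value is $r_0$, this yields
$$\mathbb{E}\Big[\exp\Big(-\theta\int_0^t\widehat{r}^{(y)}_s\,ds\Big)\Big]=\exp\Big(-r_0\,v(t,0,\theta)-ab\int_0^t v(s,0,\theta)\,ds\Big),$$
where $v(\cdot, 0, \theta)$ is the unique solution of
$$\frac{\partial v}{\partial t}(t, 0, \theta) = -\Psi^{(y)}_\alpha(v(t, 0, \theta)) + \theta, \qquad v(0, 0, \theta) = 0.$$

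Second, I would compare this Cauchy problem with the one in \eqref{ODEll} that defines $l(y, \cdot)$. For the above value of $\theta$ the two ODEs coincide and both carry the initial condition $0$; local Lipschitz continuity of $\Psi^{(y)}_\alpha$ (already exploited in the proof of Proposition~\ref{Th3.7}) ensures global well-posedness, so by uniqueness $v(\cdot,0,\theta) = l(y,\cdot)$. Substituting this identity into the expression obtained above and invoking Corollary~\ref{distribution of taux} gives
$$\mathbb{E}\Big[\exp\Big(-\sigma_Z^\alpha\int_y^\infty\mu_\alpha(d\zeta)\int_0^t\widehat{r}^{(y)}_s\,ds\Big)\Big]=\exp\Big(-r_0\,l(y,t)-ab\int_0^t l(y,s)\,ds\Big)=\mathbb{P}(\tau_{\overline{y}}>t).$$

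The proof is essentially an identification of two representations of the same Laplace functional via the ODE satisfied by the CBI Laplace exponent, and the argument presents no serious obstacle. The only non-routine step is to recognize that the specific constant $\sigma_Z^\alpha\int_y^\infty\mu_\alpha(d\zeta)$ in \eqref{probability rep} is exactly the one needed to match the ODE governing the survival probability of the first large jump with the ODE governing a Feynman--Kac functional of the auxiliary CBI process; this matching is visible by direct inspection of \eqref{ODEl}--\eqref{ODEll} against the ODE of Proposition~\ref{prop 1}.
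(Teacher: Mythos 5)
Your proposal is correct and follows essentially the same route as the paper: identify $\widehat{r}^{(y)}$ as a CBI process via Corollary~\ref{pro: auxiliary CBI}, compute the Laplace transform of its time integral through the Riccati-type ODE of Proposition~\ref{prop 1}, and observe that for $\theta=\sigma_Z^\alpha\int_y^\infty\mu_\alpha(d\zeta)$ this ODE with zero initial condition coincides with \eqref{ODEll}, so the result follows from Corollary~\ref{distribution of taux}. Your write-up is in fact slightly cleaner than the paper's, which cites Proposition~\ref{prop:bond price} where it means the general-$\theta$ version of Proposition~\ref{prop 1} and carries a sign/initial-condition typo in the displayed ODE.
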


\proof  
As proved in Corollary \ref{pro: auxiliary CBI}, $\widehat{r}^{(y)}$ is a CBI process.
Then, applying Proposition \ref{prop:bond price}, for any $\theta>0$, we have
\beqnn
\mathbb{E}\Big[e^{-\theta\int_0^t\widehat{r}^{(y)}_sds}\Big]=\exp{\big(\widehat{l}(\theta,t)r_0-ab\int_0^t\widehat{l}(\theta,s)ds\big)},
\eeqnn
where $\widehat{l}(\theta,t)$ is the unique solution of 
\beqnn
\frac{d\widehat{l}(\theta,t)}{dt}=\theta-\Psi^{(y)}_\alpha\big(\widehat{l}(\theta,t)\big),
\eeqnn
with $\widehat{l}(\theta,0)=\theta$. Then Corollary \ref{distribution of taux} can be rewritten in the form (\ref{probability rep}).
\qed

Finally, we compare  the behaviors of the  first large jump times in $\alpha$-CIR and locally equivalent LOU models respectively.  

\begin{proposition}\label{proposition 5.5}
 Let $\tau_{\overline{y}}^{\lambda}:=\inf\{t>0: \Delta \lambda_t>\overline{y} \}$ 
 denote the first time when the jump size of a LOU process $\lambda$ is larger than $\overline{y} := \sigma_Z y$, in accord with Definition \ref{def-LevyOU}. 
 Let $\tau^r_{\overline{y}}$ be defined as in Corollary \ref{distribution of taux}. 
 Then we have the two following relations
  
  \beqlb
\label{equ:ptaulamaba}  \mathbb{P}\left(\tau_{\overline{y}}^{\lambda}\leq t\right) \ar=\ar  1- \exp\left(- C_{\alpha} r_0t\; y^{-\alpha} \right)  \\
\label{equ:prt}
\mathbb{P}\left(\tau^r_{\overline{y}}\leq t\right)\ar \leq \ar C_{\alpha} \, y^{-\alpha}
\Big(    \widetilde{b}(\alpha,y) t      +\frac{r_0-\widetilde{b}(\alpha,y)}{\widetilde{a}(\alpha,y)}\left[ 1-   e^{-\widetilde{a}(\alpha,y)t}   \right] 
\Big) \, ,
\eeqlb
where $\widetilde{a}$ and $\widetilde{b}$ are defined by (\ref{ab}) 
and $C_{\alpha} :=  \frac{2}{\pi}\Gamma(\alpha)\sin(\pi\alpha/2)$. 
Moreover, we have the two following asymptotic tail probabilities
of maximal jump as $\overline{y}$ goes to infinity.  
\beqlb
\mathcal{M}_\lambda(t,\overline{y}):=\mathbb{P}\left(\sup_{0\leq s\leq t} \Delta \lambda_s>\overline{y}\right)\ar \sim \ar 
C_{\alpha} \, r_0 \, t\,   (\overline{y})^{-\alpha}  \\ 
\mathcal{M}_r(t,\overline{y}):=\mathbb{P}\left(\sup_{0\leq s\leq t}\Delta r_s>\overline{y}\right)
\ar \sim \ar C_{\alpha} 
\Big(   b t      +\frac{r_0-b}{a}( 1-   e^{-at} ) 
\Big)  (\overline{y})^{-\alpha}.
 \eeqlb
 \end{proposition}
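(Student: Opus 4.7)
My plan divides into two distinct arguments, one for each process. For the LOU process $\lambda$, the integral representation \eqref{Levy-OU-integral} has the mark-$u$ domain $(0, r_0)$ deterministic, so the points of $\widetilde N$ for which $\zeta > y$ and $u < r_0$ form a homogeneous Poisson process on $\mathbb R_+$ with constant rate $r_0 \int_y^\infty \mu_\alpha(d\zeta)$, independent of the Brownian part and of the compensated small-jump part. Using \eqref{Levymeasure} and the reflection identity $\Gamma(\alpha)\Gamma(1-\alpha) = \pi/\sin(\pi\alpha)$ together with $\sin(\pi\alpha) = 2\sin(\pi\alpha/2)\cos(\pi\alpha/2)$, one checks that $\int_y^\infty \mu_\alpha(d\zeta) = C_\alpha y^{-\alpha}$. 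Hence $\tau_{\overline y}^\lambda$ is exponentially distributed with parameter $C_\alpha r_0 y^{-\alpha}$, which gives \eqref{equ:ptaulamaba}.

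For the $\alpha$-CIR process I would combine Proposition \ref{prop-taux-aux} with the same computation of $\int_y^\infty \mu_\alpha(d\zeta)$ to rewrite the survival probability as
\[\mathbb P\bigl(\tau_{\overline y}^r > t\bigr) = \mathbb E\Bigl[\exp\Bigl(-C_\alpha y^{-\alpha}\int_0^t \widehat r_s^{(y)}\, ds\Bigr)\Bigr].\]
The elementary inequality $1 - e^{-x} \le x$ for $x \ge 0$ then yields
\[\mathbb P\bigl(\tau_{\overline y}^r \le t\bigr) \le C_\alpha y^{-\alpha}\,\mathbb E\Bigl[\int_0^t \widehat r_s^{(y)}\, ds\Bigr].\]
It remains to evaluate $\mathbb E[\widehat r_s^{(y)}]$. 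Since the stochastic-integral parts in \eqref{rhat1}--\eqref{rhat2} are martingales, taking expectations reduces the dynamics to the linear ODE $\frac{d}{ds}\mathbb E[\widehat r_s^{(y)}] = \widetilde a(\alpha,y)(\widetilde b(\alpha,y) - \mathbb E[\widehat r_s^{(y)}])$ with $\mathbb E[\widehat r_0^{(y)}] = r_0$, whose solution is $\widetilde b(\alpha,y) + (r_0 - \widetilde b(\alpha,y))\, e^{-\widetilde a(\alpha,y) s}$. Integrating over $[0, t]$ produces exactly the bracketed expression in \eqref{equ:prt}.

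The asymptotic for $\mathcal M_\lambda$ follows immediately from \eqref{equ:ptaulamaba} and the expansion $1 - e^{-x} \sim x$ at $x = 0$. For $\mathcal M_r$, the upper bound from \eqref{equ:prt} already has the correct leading order $(\overline y)^{-\alpha}$ once one notes that $\Theta(\alpha,y) = O(y^{-(\alpha-1)}) \to 0$ by \eqref{def-Theta}, so $\widetilde a(\alpha,y) \to a$ and $\widetilde b(\alpha,y) \to b$ as $\overline y \to \infty$. To upgrade the upper bound to an equivalence I would refine the elementary inequality to $1 - e^{-x} \ge x - x^2/2$ and use a uniform-in-$y$ control of the second moment $\mathbb E[(\int_0^t \widehat r_s^{(y)}\, ds)^2]$, so that the correction term is $O(y^{-2\alpha})$ and therefore negligible against the main $O(y^{-\alpha})$ term.

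The main technical obstacle is precisely this uniform second-moment control: one must show that the variance of the integrated auxiliary CBI process $\widehat r^{(y)}$ stays bounded as $y \to \infty$. I would obtain this by differentiating twice the Laplace-transform identity from Proposition \ref{prop 1} applied to $\widehat r^{(y)}$, evaluating at $0$, and then passing to the limit $y \to \infty$, using the continuity of solutions of the associated Riccati-type ODE with respect to the branching mechanism $\Psi_\alpha^{(y)}$ (which converges monotonically to $\Psi_\alpha$). Everything else is routine bookkeeping.
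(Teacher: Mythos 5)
Your treatment of \eqref{equ:ptaulamaba}, \eqref{equ:prt} and the asymptotics of $\mathcal{M}_\lambda$ coincides with the paper's: the large jumps of $\lambda$ form a homogeneous Poisson process of rate $r_0\int_y^\infty\mu_\alpha(d\zeta)=C_\alpha r_0\,y^{-\alpha}$, and the bound on $\tau^r_{\overline y}$ follows from Proposition \ref{prop-taux-aux}, the estimate $1-e^{-x}\le x$, and the first-moment ODE for $\widehat r^{(y)}$ (the paper phrases that last step as ``convexity'', i.e.\ Jensen, which amounts to the same inequality).

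For the asymptotics of $\mathcal M_r$ you depart from the paper, which analyzes the ODE \eqref{ODEll} directly, proves $y^{\alpha}l(y,t)\to C_\alpha\sigma_Z^\alpha(1-e^{-at})/a$ locally uniformly in $t$, and then expands the exponential in Corollary \ref{distribution of taux}. Your route --- the two-sided expansion $x-x^2/2\le 1-e^{-x}\le x$ combined with moment control of $Y^{(y)}_t:=\int_0^t\widehat r^{(y)}_s\,ds$ --- is viable, but the key lemma as you state it is false: the second moment of $\widehat r^{(y)}$ does \emph{not} stay bounded as $y\to\infty$. The compensated jump part of \eqref{rhat1} contributes a term proportional to $\sigma_Z^2\,\mathbb E[\widehat r^{(y)}_s]\int_0^y\zeta^2\mu_\alpha(d\zeta)$ to the derivative of the second moment, and $\int_0^y\zeta^2\mu_\alpha(d\zeta)\propto y^{2-\alpha}\to\infty$ for $\alpha\in(1,2)$; consistently, the limiting branching mechanism satisfies $\Psi_\alpha''(0+)=+\infty$ (the $\alpha$-stable driver has infinite variance), so your plan of twice differentiating the Laplace transform and passing to the limit in $y$ would yield $+\infty$ rather than a finite bound. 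The argument is easily repaired: a Gronwall-type estimate gives $\mathbb E[(Y^{(y)}_t)^2]=O(y^{2-\alpha})$ for fixed $t$, and since the prefactor is $c=O(y^{-\alpha})$ the correction term is $c^2\,\mathbb E[(Y^{(y)}_t)^2]=O(y^{2-3\alpha})=o(y^{-\alpha})$, precisely because $\alpha>1$; together with $\widetilde a(\alpha,y)\to a$ and $\widetilde b(\alpha,y)\to b$ this closes the proof. With that correction your argument is complete and more probabilistic than the paper's, at the price of a moment estimate that the paper's direct asymptotic analysis of $l(y,t)$ avoids.
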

\begin{remark} Before giving the proof of this result, we note that, comparing $\mathcal{M}_\lambda$ and $\mathcal{M}_r$ when  $t$ goes to zero, we have that the two asymptotic tail probabilities coincide. Whereas when $t$ is large enough,  $\mathcal{M}_r$  is approximately proportional to the long term interest rate $b$. 
\end{remark}
 \proof
 By (\ref{Levy-OU-integral-without-compensation}), we have 
$$
  \mathbb{P}\left(\tau_{\overline{y}}^{\lambda}>t\right)=
   \mathbb{P}\left(\int_0^t\int_0^{r_0}\int_{y}^\infty N(ds,du,d\zeta)=0 \right)
$$
 Then the first result \eqref{equ:ptaulamaba} is obtained by a direct integration.
 For the $\alpha$-CIR case, applying Proposition \ref{prop-taux-aux}, we have
  \beqlb\label{tau_y(r)}
    \mathbb{P}\Big(\tau^r_{\overline{y}}>t\Big)  \ar=\ar
\mathbb{E}\Big[\exp{- \Big\{C_\alpha y^{-\alpha}\int_0^t\widehat{r}^{(y)}_sds\Big\}}\Big].
  \eeqlb
Thus we obtain $\mathbb{E}[\widehat{r}^{(y)}_t]=\widetilde{b}(\alpha,y)\Big(1-e^{-\widetilde{a}(\alpha,y)t}\Big)+r_0 e^{-\widetilde{a}(\alpha,y)t}$, by (\ref{tau_y(r)})
we obtain the second result \eqref{equ:prt} by convexity. 

The first asymptotic tail is a direct consequence of the relation $
\mathbb{P}\big(\sup_{0\leq s\leq t} \Delta \lambda_s>\overline{y}\big) =1-
\mathbb{P}\big(\tau_{\overline{y}}^{\lambda} < t\big)$. 
For the asymptotic tail of $r$, by  (\ref{ODEll}), we have that 
 \begin{equation}\label{technical 1}
 \begin{array}{rcl}
 l(y,t)&=&\displaystyle \sigma_Z^\alpha\Big(\int_y^\infty\mu_\alpha(d\zeta)\Big)\Big(\int_0^\infty e^{-a(t-s)}ds\Big)-\sigma_Z^\alpha\Big(\int_y^\infty\zeta\mu_\alpha(d\zeta)\Big)\Big(\int_0^te^{-a(t-s)}l(y,s)ds\Big) \\
&& \displaystyle -\frac{\sigma^2}{2}\int_0^te^{-a(t-s)}l^2(y,s)ds-\int_0^te^{-a(t-s)}\overline{\Psi}^{(y)}_\alpha(l(y,s))ds,
 \end{array}
 \end{equation}
 where $\overline{\Psi}^{(y)}_\alpha(q)=\sigma_Z^\alpha\int_0^y(e^{-q\zeta}-1+q\zeta)\mu_\alpha(d\zeta)$.  This also shows that 
\beqlb\label{technical 2}
 l(y,t)\leq-\frac{\sigma_Z^\alpha}{a\cos(\pi\alpha/2)\alpha\Gamma(-\alpha)}(1-e^{-at})y^{-\alpha}=C_\alpha\frac{\sigma_Z^\alpha}{a}(1-e^{-at})y^{-\alpha}
  \eeqlb
since $-(\alpha\cos(\pi\alpha/2)\Gamma(-\alpha))^{-1}=C_\alpha$.
By (\ref{technical 1}), we also have that 
\beqnn
y^{\alpha}l(y,t)\ar=\ar-\frac{\sigma_Z^\alpha}{\alpha\cos(\pi\alpha/2)\Gamma(-\alpha)}
\int_0^te^{-a(t-s)}ds+\frac{\sigma_Z^\alpha}{(\alpha-1)\cos(\pi\alpha/2)\Gamma(-\alpha)}y^{-1}\int_0^te^{-a(t-s)}l(y,s)ds\\
 \ar\ar-\frac{\sigma^2}{2}\int_0^te^{-a(t-s)}l^2(y,s)y^\alpha ds-\int_0^te^{-a(t-s)}\overline{\Psi}^{(1)}_\alpha(y\, l(y,s))ds.
 \eeqnn
Combing (\ref{technical 2}), we see that   as $y\rightarrow\infty$, 
\beqlb\label{technical 3}
y^{\alpha}l(y,t)\rightarrow-\frac{\sigma_Z^\alpha}{\alpha\cos(\pi\alpha/2)\Gamma(-\alpha)}
\int_0^te^{-a(t-s)}ds=C_\alpha\sigma_Z^\alpha\frac{1-e^{-at}}{a}. 
\eeqlb
 Furthermore this convergence is locally uniform for $t$. By Corollary \ref{distribution of taux},
\beqnn
\mathbb P(\sup_{0\leq s\leq t}\Delta r_s>\overline{y})=\mathbb P(\tau^r_{\overline{y}}\leq t)=1-e^{-l(y,t)r_0-ab\int_0^tl(y,s)ds}\sim
l(y,t)r_0+ab\int_0^tl(y,s)ds.
\eeqnn
We have the tail of the jump of $r$ by (\ref{technical 3}).
\qed

\begin{remark}
Consider $\widehat{r}^{(y)}$ defined by (\ref{rhat1}). We have noted that  for $0<t<
 \tau_{\overline{y}}^{r}$, $r_t=\widehat{r}^{(y)}_t$. Then for any fixed $T$,
  \[
  \sup_{0\leq t\leq T}\Big|E\Big[\exp\Big\{-\int_0^t r_s ds\Big\}\Big]-E\Big[\exp\Big\{-\int_0^t\widehat{r}^{(y)}_sds\Big\}\Big]\Big|
  \leq 2\mathbb P(\tau_{\overline{y}}^r\leq T)= \mathbb P(\sup_{0\leq s\leq T}\Delta r_s>\overline{y})
    \]
By Proposition \ref{proposition 5.5}, one has $\mathbb P(\sup_{0\leq s\leq T}\Delta r_s>\overline{y})  \sim C(T)y^{-\alpha}
$, where $C(T)$ is a constant depending on $T$. This means that as $y\rightarrow\infty$, $r$ can be approximated by 
$\widehat{r}^{(y)}$ with rate $y^{-\alpha}$.  In the approximation sense, we see that the role  of big jumps which leads to  the additional  negative 
drift term shown in  (\ref{rhat1}) and forces the interest rate at a low level as $\alpha$ decreases to 1.
\end{remark}

\section{Numerical illustration}\label{sec:numerics}
In this section, we present numerical examples to illustrate the results obtained in previous sections. We are particularly interested in the role of the parameter $\alpha$.

In the first example, we present in Figure \ref{fig:zt} a trajectory of  the $\alpha$-stable L\'evy process $Z$ for three different values of $\alpha$: 2, 1.5 and 1.2 respectively. The other parameters are fixed to be $a=0.1$, $b=0.3$, $\sigma=0.1$, $\sigma_Z=0.3$ and $r_0=0.1$. We see that  smaller values of $\alpha$ imply larger jumps and deeper negative drift between the jumps  in the  process $Z$.    We then illustrate in Figure \ref{fig:rt} 
the $\alpha$-CIR process for the short interest rate $r$ described in Definition \ref{def-SDE-root}, by using the same trajectory of $Z$ as in Figure \ref{fig:zt}. 
We observe that since the jumps are related to the actual level of the interest rate, the smaller values of $\alpha$ correspond to a persistency of low interest rate in Figure \ref{fig:rt}. 

\begin{figure}[H]
\caption{The $\alpha$-stable L\'evy process $Z$ for different values of $\alpha$.}
\label{fig:zt}
\begin{center}
\epsfig{file=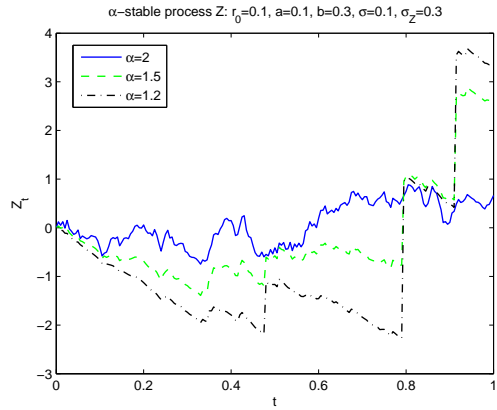,height=7.5cm,angle=270}
\end{center}
\end{figure}

\begin{figure}[H]
\caption{Short interest rates $r$ by the $\alpha$-CIR model with the same $Z$.}
\label{fig:rt}
\begin{center}
\epsfig{file=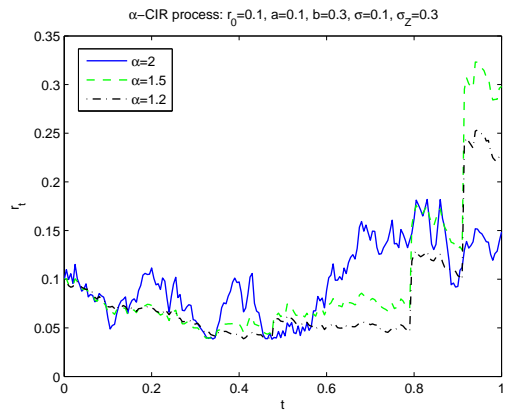,height=7.5cm,angle=270}
\end{center}
\end{figure}

In the second example, we show  by Figure \ref{fig:bond price} the sovereign bonds price $B(0,T)$ given in Proposition \ref{prop:bond price}. The parameters are $a=0.1$, $b=0.3$, $\sigma=0.1$, $\sigma_Z=0.3$ and $r_0=0.05$. Besides the three values of $\alpha$: 2, 1.5 and 1.2, we also consider the bond price in the classical CIR model (when $\sigma_Z=0$). It is interesting to note, as already shown in Proposition \ref{decrasing-bond},  that for a fixed maturity, the bond prices are decreasing with respect to the value of $\alpha$, with the lowest price in the CIR model. This observation means that smaller $\alpha$ corresponds,  in expectation sense, to a lower interest rate phenomenon, even though this case also implies larger positive jumps in the short rate (as in the next figures). 
\begin{figure}[H]
\caption{Sovereign bond prices $B(0,T)$}
\label{fig:bond price}
\begin{center}
\epsfig{file=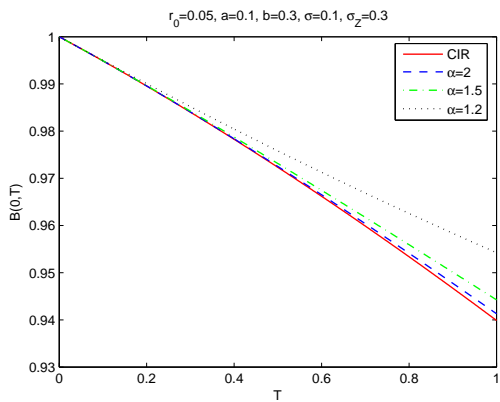,height=7.5cm,angle=270}
\end{center}
\end{figure}


Finally, we illustrate the behaviors of the first large jump $\tau_{\overline y}$ (as in \eqref{first jump time}) that the short rate process exceeding $\overline y$. The parameters are  $a=0.1$, $b=0.1$, $\sigma=0.1$, $\sigma_Z=0.1$, $r_0=0.2$ and $y=0.1$. Figure \ref{fig:proba duration} shows the probability function $\mathbb P(\tau_{\overline y}>t)$, given by Corollary \ref{distribution of taux}, for different values of $\alpha$.  We see that this probability converges to $0$ very quickly for smaller values of $\alpha$, and with a much longer time for  large values of $\alpha$. 
Figure \ref{fig:expectation duration} illustrates the expectation of $\tau_{\overline y}$ given by Proposition \ref{pro:expectation duration}, as a function of $\alpha$. The expected jump time is increasing with $\alpha$, which means that  for a smaller $\alpha$, the first large jump is likely to occur sooner.  These two tests show that the $\alpha$-CIR model with $\alpha<2$ allows to describe the large jumps in the interest rate. 
\begin{figure}[H]
\caption{Probability function $\mathbb P(\tau_{\overline y}>t)$ for the first large jump exceeding $\overline y$}
\label{fig:proba duration}
\begin{center}
\epsfig{file=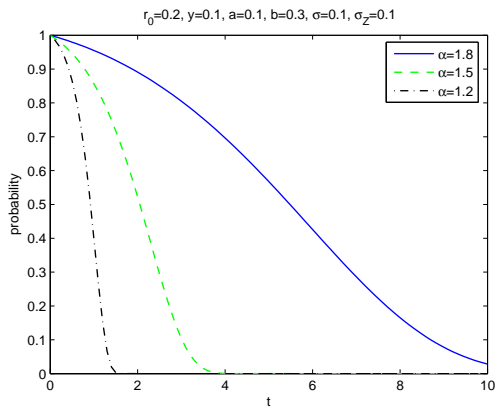,height=7.5cm,angle=270}
\end{center}
\end{figure}

\begin{figure}[H]
\caption{Expectation of the duration time $\tau_{\overline y}$ for the first large jump exceeding $\overline y$}
\label{fig:expectation duration}
\begin{center}
\epsfig{file=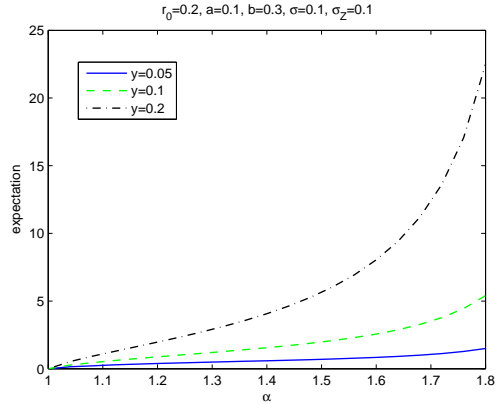,height=7.5cm,angle=270}
\end{center}
\end{figure}

\section{Conclusion}\label{sec:conclusion}

The objective of this paper is to introduce  the $\alpha$-CIR short interest rate model, which is an extension of the standard CIR model by adding, besides the Brownian motion, a spectrally positive $\alpha$-stable L\'evy process and preserving the branching property.

Our main financial contribution is to describe in a parsimonious framework a number of well-established and seemingly puzzling 
facts observed in the current sovereign bond markets. In particular, we reconcile in this relatively simple model the presence of significant variations of the interest rates together with the actual persistency of very low interest rates. 
Moreover, the evolution of the interest rate in our model exhibits the clustering or self-exciting properties which are recently highlighted in stochastic modelling especially in finance. 
 
An interesting financial result is that the bond price increases with the tail fatness of the jump process which is
counter-intuitive and opens the discussion of the consequence on the risk analysis. In particular, our model forecasts 
that the persistency of very low interest rates is accentuated by this tail fatness and this persistency is 
statistically broken by the arrival of the first large jump whose expected arrival probability decreases with the rate itself.

The main mathematical contribution is the introduction of a more general integral representation of the $\alpha$-CIR model by using random fields. This integral representation contributes largely to simplify the mathematical
proofs and helps to establish a link of our model to the CBI processes, and then to the affine interest rate models. We also characterize, using this representation, the law of the frequency of large jumps and the law of the first one. 
 
From computational point of view, we show that our model admits closed-form formulae up to numerical integrations 
for a large class of relevant quantities, for instance for the bond prices and derivatives, and also for the law and the expectation of the first large jump. The perspective of further research work consists of empirical and statistical analysis of the $\alpha$-CIR interest rate model. The integral representation may also open a  range of extensions for other financial modelling. 


\appendix

\section{Constructive proof of Proposition \ref{pro: SCIR CBI}}

\proof {\bf Step 1: branching without immigration.}
 Consider a special case of \eqref{lambda-integral} with $b=0$ and we call it a CB process, i.e. a continuous state branching process 
{\bf without immigration},
 \beqnn
 r_t^x= x -a\int_0^t r_s^x ds + \sigma \int_0^t \int_0^{r_s^x} W(ds,du)
+ \sigma_Z \int_0^t \int_0^{r_{s-}^x} \int_{\mathbb{R}^+} \zeta \widetilde{N}
(ds,du,d\zeta),
 \eeqnn
with initial value $x\geq0$. By the proof of Proposition \ref{prop:decomposition}, $r_t^x$ is increasing of $x$. Furthermore, for $x\geq y$,
$r_t^x-r_t^y$ is independent of $r_t^y$ and have the same distribution of 
$r_t^{x-y}$. Then for any $t$, $\{r_t^x:x\geq0\}$ is a L\'{e}vy
subordinator. The L\'{e}vy-Khintchine Formula implies that
\beqnn
\mathbb{E}[e^{-pr_t^x}]=e^{-xv(t,p)}
\eeqnn
for some L\'{e}vy exponent $v(t,p)$ and $v(0,p)=p$. Since  $\{r_s^x:s\geq0\}$  is the unique strong solution of 
the above equation, it is a Markov process, i.e., $E[\exp({-pr_{t+s}^x})|\mathcal{F}_s]=\exp({r_s^xv(t,p)})$, which implies
that  $v(s,v(t,p))=v(s+t,p)$.
Apply It\^{o}'s formula to  $\exp({-pr_s^x})$ and take the expectation, 
\beqnn
e^{-xv(s,p)}-e^{-px}=\Psi(p)\int_0^s\mathbb{E}[e^{-pr_u^x}r_u^x]du.
\eeqnn
Fix $t$. Replace $p$ by $v(t,p)$ in the above equation,
\beqnn
e^{-xv(s+t,p)}-e^{-xv(t,p)}=\Psi(v(t,p))\int_0^s \mathbb{E}\left[e^{-v(t,p)r_u^x}r_u^x\right]du.
\eeqnn
Differentiating both sides of the equation w.r.t $s$ at $s=0$, we have that $\frac{\partial{v(t,p)}}{\partial{t}}=-\Psi(v(t,p))$.

{\bf Step 2: introduction of the auxiliary jump process.} Let $c>0$ and let $J_t$ 
be a Poisson process with parameter $\lambda_J>0$ independent of $(W,N)$.  Then we define the process $Y$ with initial value 
$x$ by
 \beqlb\label{poisson immigration}
 Y^x_t = x +cJ_t-a\int_0^t Y^x_s ds + \sigma \int_0^t \int_0^{Y^x_s} W(ds,du)
+ \sigma_Z \int_0^t \int_0^{Y^x_{s-}} \int_{\mathbb{R}^+} \zeta \widetilde{N}
(ds,du,d\zeta).
 \eeqlb
Let us assume the following:
(a) At $t=0$ there is one individual with mass $x$. It evolves and gives mass $r_t^x$ at time $t>0$. 
(b) Immigrants with each mass $c$ arrive according to the Poisson process $J_t$. 
The arrival times of $J_t$ is denoted by $0\leq\tau_1\leq\cdots\leq\tau_n\leq\cdots$.
If one immigrant arrives at time $\tau_k$, it gives mass $r^{(k)}_{t-\tau_k}$ at time
$t$, where $r^{(k)}_\cdot$ is an independent copy of $r_\cdot(c)$.  Then we have that 
 \beqlb\label{excursion}
 Y^x_t=r_t^x+\sum_{k=1}^{J_t}r^{(k)}(t-\tau_k).
 \eeqlb
 We now define a Picard sequence $\widehat{r}^{(k)}_t$ by the first step $\widehat{r}^{(0)}_t=r_t^x$ 
 and the relation between $\widehat{r}^{(k-1)}_t$ and $\widehat{r}^{(k)}_t$ defined as
 \beqnn
  \widehat{r}^{(k)}_t = \widehat{r}^{(k-1)}_{\tau_k}+c -a\int_0^t \widehat{r}^{(k)}_s ds + \sigma \int_0^t \int_0^{\widehat{r}^{(k)}_s} W^{\tau_k}(ds,du)
+ \sigma_Z \int_0^t \int_0^{\widehat{r}^{(k)}_{s-}}  \int_{\mathbb{R}^+} \zeta \widetilde{N}^{\tau_k}
(ds,du,d\zeta).
 \eeqnn
Here $(W^{\tau_k},N^{\tau_k})$ is the translator of $(W,N)$ at $\tau_k$, i.e. 
$W^{\tau_k}([0,t]\times A)=W([\tau_k,\tau_k+t]\times A)$ and $N^{\tau_k}([0,t]\times A\times C)=
N([\tau_k,\tau_k+t]\times A\times C)$.  Consider (\ref{poisson immigration}), we have easily that $Y_t=\widehat{r}^{(0)}_t$ for $0\leq t< \tau_1$ 
and similarly $Y_t=\widehat{r}^{(1)}_{t-\tau_1}$ for $\tau_1\leq t<\tau_2$. More generally, applying Proposition \ref{prop:decomposition}, 
it is not hard to see that $\overline{r}^{(1)}_t:=\widehat{r}^{(1)}_t-\widehat{r}^{(0)}_{\tau_1+t}$ is independent of 
$\{\widehat{r}^{(0)}_t\}$ and have the same distribution as $\{r_t^c\}$. 
Thus for $\tau_1\leq t<\tau_2$, $Y_t^x=r_t^x+\overline{r}^{(1)}_{t-\tau_1}$.
Similarly, for $\tau_{k-1}\leq t<\tau_k$, $Y_t^x=r_t^x+\sum_{i=1}^{k} \overline{r}^{(i)}_{t-\tau_i}$, where $\overline{r}^{(k)}_t
=\widehat{r}^{(k)}_t-\widehat{r}^{(k-1)}_{\tau_k-\tau_{k-1}+t}$. Thus we have 
(\ref{excursion}). Also by Step 1 and the exponential formula,
\begin{eqnarray*}
\mathbb{E}\left[e^{-qY^x_t}\right]&=&\exp\left\{-xv(t,p)+\lambda_J\int_0^t\left(1-e^{-cv(t-s,p)}\right)ds\right\} \\
&=&\exp\left\{-xv(t,p)+\lambda_J\int_0^t\left(1-e^{-cv(u,p)}\right)du\right\}.
\end{eqnarray*}
The last equality follows from replacing $t-s$ by $u$.

{\bf Step 3: limit using the renormalization of the auxiliary process.} 
Consider a sequence of $Y^{(n)}$ defined by (\ref{poisson immigration}), where $J_t$ replaced by $J^{(n)}_t$  with parameter $\lambda_n=abn$ and
$c$ replaced by $c_n=1/n$. Let $r$ given by \eqref{lambda-integral} with initial value $x$. It is not hard to see that 
 $Y^{(n)}\rightarrow r$ in law as $n\rightarrow\infty$.
Then 
\beqnn
\mathbb{E}\left[e^{-pr_t}\right]=\lim_{n\rightarrow\infty}\mathbb{E}\left[e^{-pY^{(n)}_t}\right]=\exp\left\{-xv(t,p)-ab\int_0^t v(s,p)ds\right\}
\eeqnn
\qed

\end{document}